\newtheorem{Example}{Example}
\newtheorem{theorem}{Theorem}
\newtheorem{assumption}{Assumption}
\newtheorem{definition}{Definition}
\newtheorem{lemma}{Lemma}
\newtheorem{remark}{Remark}
\newtheorem{corollary}{Corollary}
\newtheorem{Illustration}{Illustration}
\newcommand\blfootnote[1]{%
  \begingroup
  \renewcommand\thefootnote{}\footnotetext{#1}%
  \addtocounter{footnote}{-1}%
  \endgroup
}
\begin{document}

\title{\textbf{Periodic Scenario Trees: A Novel Framework for Robust Periodic Invariance and Stabilization of Constrained Uncertain Linear Systems}}
\date{}

\maketitle

\begin{center}
\author{Yehia Abdelsalam,}
\author{Sankaranarayanan Subramanian,}
\author{Sebastian Engell.}
\end{center}

\blfootnote{The authors are with the Process Dynamics and Operations Group, Biochemical Engineering Department, TU Dortmund University, Dortmund, Germany.}
\blfootnote{Emails: yehia.abdelsalam@tu-dortmund.de, sankaranarayanan.subramanian@tu-dortmund.de, sebastian.engell@tu-dortmund.de.}
\blfootnote{This work was supported by TU Dortmund.}

\abstract{This work proposes a new a framework for determining robust periodic invariant sets and their associated control laws for constrained uncertain linear systems. Necessary and sufficient conditions for stabilizability by periodic controllers are stated and proven using finite step Lyapunov functions for the unconstrained case. We then introduce a scenario tree interpretation of finite step Lyapunov functions for uncertain systems and show that this interpretation results in useful criteria for the design of robust stabilizing controllers. In particular, novel convex feasibility criteria for the synthesis of simple static controllers and what we call linear interpolating tree periodic controllers with memory are derived. It is proven that for a sufficiently large length of the period, a stabilizing linear interpolating tree periodic controller can always be found using the proposed criterion provided that the uncertain system is stabilizable by such controllers. In this sense, the presented synthesis method is non-conservative. The results are then extended to constrained uncertain linear systems and conditions for controllers that realize robust periodic invariant sets which are less conservative than those that result from the known methods in the literature are derived.
}

\section{Introduction}
\label{Intro}
Obtaining non-conservative and tractable criteria for the stability analysis and controller synthesis for uncertain linear systems is a complex problem \cite{unsolved}. For tractability reasons, most of the criteria that are described in the literature are based on formulating the stability analysis and/or the controller synthesis as a feasibility problem of Linear Matrix Inequalities (LMIs) \cite{Boyd:94}. 
Earlier convex results for stability and stabilization of uncertain linear systems were obtained using a common quadratic Lyapunov function for the possible realizations of the uncertain system \cite{barmish_first,Barmish1985NecessaryAS,Gromel_convex}. 
In \cite{DAAFOUZ2001355}, results from \cite{DEOLIVEIRA1999261} were extended and a less conservative controller synthesis criterion was proposed using quadratic Parameter Dependent Lyapunov Functions (PDLFs) \cite{PDLF_first}, under the assumptions that the input matrix is perfectly known and time invariant and that the uncertain parameters can be measured or estimated in real-time hence are available each time step before the system evolves further. In \cite{PANDEY2017214}, the assumption that the input matrix is perfectly known and constant was removed, while the assumption that the uncertain parameters are real-time measurable was still made.
In \cite{ploynominal_LF_first}, homogeneous polynomial Lyapunov functions were proposed for the stability analysis of uncertain linear systems, and in \cite{CHESI20031027,CHESI2011621} necessary and sufficient convex conditions were derived using such functions, providing a non-conservative result for stability analysis of uncertain linear systems. 
Based on Lyapunov functions with non-monotonic terms \cite{non_monotonic_layp}, a method for controller synthesis was proposed in \cite{brazil} which similar to \cite{DAAFOUZ2001355,PANDEY2017214} assumes that the uncertain parameters can be perfectly measured or estimated in real-time.  
In \cite{LEE2006205,Lee_stab_LPV} a multi-step approach to the problem was proposed, and a non-conservative stability analysis as well as conditions for the synthesis of stabilizing controllers were derived, in which the proposed controllers were path dependent, i.e, depend on past values of the uncertain parameters. In \cite{PMLF_XIANG2018450}, the so called Parameter Memorized Lyapunov Functions (PMLFs) which are a multi-step generalization of PDLFs were introduced for the stability analysis of uncertain linear systems. In the context of Takagi-Sugeno (TS) \cite{Takagi-Sugeno-first} fuzzy representations of nonlinear systems, a multi-step approach was used to derive gain-scheduled control laws in \cite{Kruszewski_1}. Based on this result, sufficient conditions for obtaining periodic control laws were derived \cite{Guerra} for stabilizing a perfectly known nonlinear system that is represented by a TS fuzzy model.
A common feature in \cite{LEE2006205,Lee_stab_LPV,PMLF_XIANG2018450,Kruszewski_1,Guerra} is that not one quadratic function which is independent of the realization of the uncertainty is sought but for each path of uncertainty realizations, one of a number of quadratic functions is guaranteed to decrease, or in other words, an uncertainty dependent quadratic function is guaranteed to decrease. The disadvantage of such multi-step approaches is the exponential growth of the number of LMIs with the length of the path.  To reduce the exponential growth of the number of inequalities that result from PMLFs, it was shown in \cite{SALA_scenario} that given some inequalities that define a PMLF for the uncertain system, the feasibility of only a subset of these inequalities can be sufficient for guaranteeing stability. Heuristics were proposed for choosing a subset of inequalities to be checked. 
Some years ago, Finite Step Lyapunov Functions (FSLFs) \cite{Lazar_conv} were used for the stability analysis of perfectly known nonlinear systems \cite{Roman_Lazar_2,Lazar_3}. FSLFs do not necessarily decrease at each time step, but are guaranteed to decrease after a fixed finite number of steps (see \cite{Megretski,First_FSLF,pier_switches_FSLFS}). As will be shown, these will be very useful in our proposed criteria.

All the above results concern unconstrained systems. For constrained systems, the importance of invariant sets is well known \cite{bertsekas,BLANCHINI19991747,Kerrigan2000}. Ideally, one would like to find a control law that renders a set of states that is contained inside the constraints of the system of as large volume as possible invariant for the uncertain dynamics, and hence guarantees the satisfaction of the constraints at all time. The use of such sets as terminal constraints is very important in Model Predictive control (MPC) \cite{MAYNE2000789} especially for short prediction horizons, in which case the size of the terminal invariant set has a strong influence on the size of the feasible domain of the MPC. Such maximal invariant sets for linear systems are polytopic \cite{gilbert,Kolmanovsky1900} and can be characterized by a (possibly prohibitively large) number of vertices or hyperplanes. Ellipsoidal invariant sets on the other hand are used e.g. in \cite{KOTHARE19961361,ANGELI20083113,anil_kumar} due to their computational appeal as they can be defined using a single matrix. Unfortunately, ellipsoidal invariant sets can be very small or even non-existent for uncertain dynamics, even if the system is stabilizable. The concept of periodic set invariance \cite{First_quasi,canon_inv_1,canon_inv_2} is a relaxation of set invariance where it is allowed that the state of the system leaves the set under the condition that the state returns to the set again (see Figure \ref{Fig_ill_per_inv} for an illustration). In \cite{canon_inv_1,canon_inv_2}, ellipsoidal periodic invariant sets were used to enlarge the feasible domain of MPC. These sets were obtained by the use of periodic gains (an idea that was presented earlier in \cite{dinicolao_varying_K}). The concept of periodic invariance has regained interest lately in the design of MPC \cite{HANEMA2017137,output_periodic}. Periodic set invariance is closely related to FSLFs.
\subsection{Contributions}
The goal of this paper is to establish, for constrained uncertain linear systems, a flexible framework for finding robust periodic invariant ellipsoidal sets and their associated robust control laws, that reduces the conservatism that is present in the existing methods. 
This goal is achieved in several steps which are the main contributions of this paper: 
\begin{itemize}
    \item[1.]  
    Necessary and sufficient conditions for robust stabilization by periodic controllers are derived in Appendix \ref{section_periodic_stab} based on quadratic FSLFs. This result is helpful in proving the non-conservatism of the controller synthesis criteria that are proposed in section \ref{subsection_memory}.Since this result is only needed for the sake of proving the non-conservatism of the synthesis criteria but does not affect understanding the proposed synthesis methods, it is provided in Appendix \ref{section_periodic_stab}. 
    \item[2.] For unconstrained systems, a novel controller synthesis criterion is proposed. 
    It is based on a scenario tree interpretation of FSLFs for uncertain systems (see Figure \ref{Tree_first}) which is used to synthesize both static linear time invariant controllers and the novel Linear Interpolating Tree Periodic Controllers (LITPC). Such LITPCs have a finite memory which is used for storing a simulated scenario tree. It is shown by examples that the static controller synthesis that are proposed here can be less conservative than the periodic controller from \cite{canon_inv_2}. For the proposed LITPC, the conservatism decreases as the memory is increased, and by construction the criterion is always less conservative than the one in \cite{canon_inv_2} if they both have the same period.
    \item[3.] It is proven that for a sufficiently large memory (i.e., period), the LITPC synthesis becomes non-conservative, i.e., a stabilizing LITPC will always be found as long as the set of all stabilizing LITPCs is non empty. It should be noted that the necessity part of the proof of Lemma \ref{prop_suff_FSLF_K_different} is non-trivial and is illustrated in a simplified manner after the general proof in Illustration \ref{proof_illustration}. The non-standard proof uses a \textit{matrix replacement} Lemma (Lemma \ref{essential_lemma}) that is derived in Appendix \ref{Appendix_sec_Auxilary}.  
    \item[4.] For constrained uncertain linear systems, the results are extended from the unconstrained case to the constrained case and are used to construct robust periodic invariant ellipsoidal sets. Figure \ref{new_Fig_ill_per_inv} illustrates the periodic invariant ellipsoids that result from the scenario tree framework which is used in this work.
    Unlike existing results, the action of the periodic controllers that are proposed here depends on both the time step within the period as well as the location of the state with respect to the simulated and stored scenario tree. 
    This can result in a large periodic invariant set even for short periods. 
    Due to the design of both the static controllers and the LITPCs using scenario trees, even by using the static controller synthesis criterion one can obtain larger robust periodic invariant sets than the ones from \cite{canon_inv_2}, while the sets obtained by the LITPC synthesis criteria are necessarily larger than or equal to the ones from \cite{canon_inv_2}. Such larger periodic invariant sets are of great interest for the design of MPC (see Remark \ref{remark_MPC_large} below). 
\end{itemize}

The proofs of the theorems, lemmas and corollaries are provided in Appendix \ref{Appendix_proofs}.
Throughout, the advantages of the new methods over the existing results are demonstrated using numerical examples. The semi-definite programs are formulated using YALMIP \cite{Yalmip} and solved using SDPT3 \cite{SDPT3} or MOSEK \cite{mosek}.

\subsection{Progress Over Existing Results}
As mentioned in the previous section, a large number of results on stability and stabilization as well as (periodic) set invariance of uncertain or Linear Parameter Varying (LPV) systems exist in the literature. Some of these results are also using multi-step approaches \cite{LEE2006205,Lee_stab_LPV,PMLF_XIANG2018450,Kruszewski_1,Guerra,SALA_scenario,canon_inv_1,canon_inv_2}. 

The main novelty of the framework that is proposed in this paper in comparison to all the above mentioned multi-step approaches is the periodic use of a scenario tree to determine the control law as detailed in Algorithm \ref{algorithm_controller_periodic}. Using the proposed periodic scenario tree framework for the controller synthesis in conjunction with quadratic FSLFs facilitates the determination of robust periodic invariant ellipsoids which are by construction less conservative than existing methods. In addition to this major novelty, a major difference to the work \cite{Kruszewski_1,Guerra}, is that the work in \cite{Kruszewski_1,Guerra} is assuming a gain scheduling setting, i.e., the uncertain parameters are assumed to be known at the current time step, while this work considers the robust setting, i.e., at the current time step the uncertain parameters are assumed to be unknown, but the effect of the past uncertainties is considered via the knowledge of the current state. The work in \cite{brazil} also considers a gain scheduling setting and unconstrained systems and not the robust setting for constrained systems as it is addressed here. Another key difference to the work in \cite{Kruszewski_1,Guerra,LEE2006205,Lee_stab_LPV} is that the novel framework results in the computation of robust periodic invariant ellipsoids (as shown in section \ref{section_constrained}), which is the main motivation of this paper. This is not the case for \cite{Kruszewski_1,Guerra,LEE2006205,Lee_stab_LPV}. As mentioned in the previous section, in \cite{LEE2006205,Lee_stab_LPV,PMLF_XIANG2018450,Kruszewski_1,Guerra}, for each path of uncertainty realizations one of many quadratic functions is guaranteed to decrease, or in other words, an uncertainty dependent quadratic function is guaranteed to decrease. In the work presented here the quadratic function used at the beginning and the end of the period is constructed with the same positive definite matrix $P_0$ (i.e., the quadratic Lyapunov function does not depend on the uncertainty), which is not the case in \cite{Kruszewski_1,Guerra,LEE2006205,Lee_stab_LPV}. Even though the same matrix $P_0$ is used at the beginning and the end of the period, it is proven that if the system is stabilizable by LITPC then our synthesis method will find a LITPC with a period length $N$ which is the same as the period length of the quadratic FSLF. This can be seen from the necessity and the sufficiency that are proven in Theorem \ref{suff_theorem_differnt_gains}.

Unlike the work presented here, the work in \cite{PMLF_XIANG2018450,SALA_scenario} only deal with stability analysis and not with controller synthesis.

The results that are related the most to the work presented here have emerged from the MPC community (see \cite{canon_inv_1,canon_inv_2}), which consider a robust constrained setting, and are concerned with determining periodic invariant ellipsoidal sets as well. The major difference between our work and \cite{canon_inv_1,canon_inv_2} is illustrated in Figures \ref{Fig_ill_per_inv} and \ref{new_Fig_ill_per_inv}. Figure \ref{Fig_ill_per_inv} illustrates the robust periodic invariant sets that result from the controllers in \cite{canon_inv_1,canon_inv_2}, while Figure \ref{new_Fig_ill_per_inv} illustrates the periodic invariant sets that result from the LITPC in this work. This major difference results from using the scenario tree structure of the controller in our work. Consequently, the periodic invariant sets that result from the LITPC approach presented here are less conservative than the ones that result from \cite{canon_inv_1,canon_inv_2}. 

Thus, to the best of our knowledge, by construction the proposed LITPC synthesis method produces robust periodic invariant ellipsoids which are larger than or equal to the methods in existing literature (see also Example \ref{example_sets}).

\subsection{Notations} 
The set of real numbers is denoted by $\mathbb{R}$. Let $a$ be some integer. The set of integers greater than or equal to $a$ is denoted by $\mathbb{Z}_{\geq a}$. The Euclidean norm of a vector $x$ is denoted by $\|x \|$. For two vectors $a \in \mathbb{R}^n$, $b\in \mathbb{R}^n$, $a<b$ ($a\leq b$) means that the inequality holds elementwise. For a matrix $A \in \mathbb{R}^{n \times n}$, $\| A \|$ denotes the induced $\ell^2$ matrix norm.  For a square matrix $P$, $P>0$ ($P\geq 0$) denotes that $P$ is positive definite (positive semi-definite). For two square matrices of the same dimension, $P_1>P_2$ ($P_1 \geq P_2$) means that $P_1-P_2>0$ $(P_1-P_2\geq 0)$. The convex hull operation is denoted by $Co(\cdot)$.  
For two positive integers $c_1$ and $c_2$, the remainder of dividing $c_1$ by $c_2$ is denoted by $[c_1 \mod c_2]$, while the quotient is denoted by $[\sfrac{c_1}{c_2}]$. The square identity matrix is denoted by $\mathbf{I}$, where the dimension is inferred from the context. 
\subsection{System Description and Preliminaries}
We consider discrete-time linear systems of the form:
\begin{equation} \label{system_dynamics}
x_{t+1}=A_t x_t+B_t u_t,  \text{  }  \text{  }  \text{  } t\in \mathbb{Z}_{\geq 0}
\end{equation}
where $x_t\in \mathbb{R}^{n_x}$, $u_t\in \mathbb{R}^{n_u}$ denote the system state and input at time step $t$. Let 
$\Gamma=\{ 1,2,\dots, n_d \}$. 

Define $\mathcal{D}=\{ (\bar{A}_i,\bar{B}_i),\textbf{ }\forall i\in \Gamma\}$, where $\bar{A}_i$ and $\bar{B}_i$, $\forall i\in \Gamma$ are known matrices.
Define the compact set $\mathbf{D}$ as $\mathbf{D}=Co(\{ (\bar{A}_i,\bar{B}_i), \textbf{ }i \in \Gamma \})$, i.e., $\mathbf{D}=Co(\mathcal{D})$. 
Note that according to our notation, $(A_1,B_1)$ means the actual realization of the dynamics at $t=1$, which should not be confused with $(\bar{A}_1, \bar{B}_1)$ which is the first element of the set of vertices $\mathcal{D}$.

\begin{assumption}\label{Assumption_1}
The matrices $A_t$ and $B_t$ satisfy:
\begin{itemize}
    \item [A.] $(A_t,B_t)=\sum^{n_d}_{i=1}\alpha_{t,i} (\bar{A}_i,\bar{B}_i)$, $\forall t\in \mathbb{Z}_{\geq 0}$, $\alpha_{t,i}\geq 0$, $\forall t\in \mathbb{Z}_{\geq 0}$, $\forall i\in \Gamma$, where at each $t\in \mathbb{Z}_{\geq 0}$, $\sum^{n_d}_{i=1}\alpha_{t,i}=1$. 
    \item[B.] $(A_t,B_t)$ can vary arbitrarily inside $\mathbf{D}$. 
    \item[C.] $(A_t,B_t)$ and correspondingly $\alpha_{t,i}$, $\forall i\in \Gamma$ are unknown at $t$.
    \item[D.] The state $x_t$ is known at $t$. 
\end{itemize}
\end{assumption}
 Note that Assumption \ref{Assumption_1}.A can be succinctly written as,  $(A_t,B_t)\in\mathbf{D}$, $\forall t\in \mathbb{Z}_{\geq 0}$, and the vector 
 \begin{equation*}
 \alpha_t=\begin{pmatrix}
     \alpha_{t,1} &\alpha_{t,2}& \dots &\alpha_{t,n_d}
 \end{pmatrix}^T
 \end{equation*}
 completely and uniquely determines $(A_t,B_t)$ at time $t$.

A sequence of realizations of the system and input matrices that occur until the current time step $t$ is defined as 
\begin{equation*}
    \textbf{d}^t=\{(A_0,B_0),(A_1,B_1),\dots\, (A_{t-1},B_{t-1})\},
\end{equation*}
where $\textbf{d}^t\in \mathbf{D}^t$. 

From hereon, we assume that the system is controlled by some control law $\kappa$. 

The closed-loop system is denoted by $\mathcal{S}$.
\begin{definition}\label{stability_def}
The closed-loop system $\mathcal{S}$ is called robustly exponentially stable on a subset $\mathbf{X} \subseteq \mathbb{R}^{n_x}$ if there exist scalars $c\geq 1$ and $\lambda \in (0,1)$ such that $\|x_t\|\leq c \lambda^t \|x_0\|$, $\forall x_0 \in \mathbf{X}$, $\forall \textbf{d}^t\in \mathbf{D}^t$ and $\forall t \in \mathbb{Z}_{\geq0}$. 
\end{definition}

Consider positive definite quadratic functions of the form 
\begin{equation}\label{global_FSLF_eqn}
    V(x)=x^TP_0x,
\end{equation}
where $P_0>0$ is a positive definite symmetric matrix. 
 
Let $\mathbf{X} \subseteq \mathbb{R}^{n_x}$. Assume that $\mathbf{X}$ contains the origin in its interior. The following definition is adapted from \cite{Lazar_conv}. 
\begin{definition}\label{Lyap_def}
 A function $V:\mathbf{X} \to\mathbb{R}$ of the form \eqref{global_FSLF_eqn} is called a Finite Step Lyapunov Function (FSLF) for the closed-loop system on $\mathbf{X}$ if there exists $N \in \mathbb{Z}_{\geq 1}$ such that $\forall x_0 \in \mathbf{X} \setminus \{0\}$, $V(x_{(m+1)N})<V(x_{mN})$, $\forall m \in \mathbb{Z}_{\geq 0}$, for any $x_{mN}\neq 0$, $\forall \textbf{d}^{(m+1)N}\in \mathbf{D}^{(m+1)N}$. We call $N$ the period of the FSLF.
\end{definition}

\section{Novel Controller Synthesis}\label{Section_controller_synthesis}
The evolution of the closed-loop uncertain system $\mathcal{S}$ according to the vertices $(\Bar{A}_i, \Bar{B}_i)$ of the set $\mathbf{D}$ will define a switched system. We will denote that switched system by $\mathcal{S}_{\mathcal{D}}$. 
The evolution of $\mathcal{S}_{\mathcal{D}}$ can be modeled by a scenario tree as shown in Figure \ref{Tree_first}. 

We will consider only the first $N$ time steps ($N+1$ samples) of the evolution, $t \in \{0,1,\dots,N \}$, and hence we will consider only scenario trees of finite length $N$, as these will suffice to prove the stability of the closed-loop system (as will be shown in details).
The set of indices $(j,t)$ of the tree states $x^j_t$ and inputs $u^j_t$ occurring at time step $t=l$ is defined as $I_{l}:=\{(j,t)|\textbf{ }t=l,\textbf{ }j\in \{1,2,\dots,n_d^{l} \} \}$. The notation $I_{ \llbracket l_1,l_2 \rrbracket}$ will denote the set of indices occurring from time step $l_1$ until time step $l_2$, i.e., $I_{\llbracket l_1,l_2 \rrbracket}:=\bigcup_{l=l_1}^{l_2} I_l$. Note that $x^1_0=x_0$ and $u^1_0=u_0$.
The dynamics of $\mathcal{S}_{\mathcal{D}}$ is defined by,
\begin{equation}\label{dynamics_tree_1}
    x^j_{t+1}=\bar{A}_{i^j_{t+1}} x^{p(j)}_t+\bar{B}_{i^j_{t+1}} u^{p(j)}_t,  \textbf{ } \forall (j,t+1)\in I_{\llbracket 1,N \rrbracket}, 
\end{equation}
where $(p(j),t)$ is the index of the parent of node $(j,t+1)$, $i^j_{t+1}$ denotes the index $i$ of the element from the set $\mathcal{D}$ at step $t$, that resulted in the child node of index $(j,t+1)$. 

At the beginning of each period, i.e., every $N$ time steps, a controller $\kappa_{\mathcal{D},N}$ acts along the scenario tree (i.e., $\mathcal{S}_\mathcal{D}$) of length $N$ from the measured state $x_t$, where the root node of the tree is set to $x^1_0=x_t$ (see Figure \ref{Tree_first}). Hence, $\kappa_{\mathcal{D},N}$ determines $u^j_{t}$, $\forall (j,t)\in I_{\llbracket 0,N-1 \rrbracket}$. This means that at the beginning of each period, the controller predicts the
evolution of the controlled switched system $\mathcal{S}_\mathcal{D}$, for the upcoming $N$ time steps, and stores the result in terms of inputs and states $u^j_t$, $\forall (j,t)\in I_{\llbracket 0,N-1 \rrbracket}$, $x^j_t$, $\forall (j,t)\in I_{\llbracket 1,N \rrbracket}$. The prediction of the evolution along the full scenario tree does not take place at each time step but only every $N$ time steps. If $(A_t,B_t)$ are always equal to one of the elements in the set $\mathcal{D}=\{ (\bar{A}_i,\bar{B}_i),\textbf{ }\forall i\in \Gamma\}$, then the state will always be one of the nodes of the stored scenario tree $x^j_t$, and the input applied to the plant will always be equal to the corresponding stored input $u^j_t$. 
\begin{remark}
The controllers that we propose belong to the class of (not necessarily optimal) robustly stabilizing controllers of uncertain linear systems. The scenario tree formulation of the problem is inspired from the domains of stochastic programming \cite{BirgLouv97} and robust min-max and multi-stage MPC \cite{Scokaert_Mayne,dela_pena,lucia2013,sankar}.  
\end{remark}
\textbf{Main Idea:} The main idea is to determine a controller $\kappa_{\mathcal{D},N}$ offline that results in the reduction of the quadratic function $x^TP_0x$ on $N$ time steps for all possible scenarios of the scenario tree (see Figure \ref{Tree_first}). Since the controller $\kappa_{\mathcal{D},N}$ is only defined for the switched system $\mathcal{S}_\mathcal{D}$, an interpolating stabilizing controller $\kappa_N$ will in the next step be determined from $\kappa_{\mathcal{D},N}$ for the actual uncertain system $\mathcal{S}$.\\
\textbf{Problem Summary:} The key step in our approach is to find the controller $\kappa_{\mathcal{D},N}$. Specifically, for the system $\mathcal{S}_{\mathcal{D}}$, we want to find a controller $\kappa_{\mathcal{D},N}$ with period $N$ and a symmetric matrix $P_0>0$ such that $\forall x_0 \in \mathbb{R}^{n_x} \setminus \{ 0 \}$, $x^{j^T}_N P_0 x^j_N<x^T_0 P_0 x_0$, $\forall (j,N)\in I_{N}$. This implies robust exponential stability of $\mathcal{S}_\mathcal{D}$ (see Theorem \ref{suff_theorem_fixed_K} and Corollary \ref{corollary_necc_suff_imp} in Appendix \ref{section_periodic_stab}). We then will show that $\kappa_N$ as defined by Algorithm \ref{algorithm_controller_periodic} stated below robustly exponentially stabilizes $\mathcal{S}$.
\begin{definition}\label{Def_GTPC}
    \textbf{Interpolating Tree Periodic Controller (ITPC):} An ITPC is a controller that computes its control actions according to Algorithm \ref{algorithm_controller_periodic}.
\end{definition}

\begin{definition}\label{beta_definition}
Define $\beta^j_{t+1}=\alpha_{t,i}\beta^{p(j)}_{t}$, $\forall (j,t+1) \in I_{\llbracket 1,N \rrbracket}$, where $\alpha_{t,i}$ $\forall i \in \Gamma$ satisfy Assumption \ref{Assumption_1}, with the initial condition $\beta^1_0=1$. This implies that for each $(j,t) \in I_{\llbracket 0,N \rrbracket}$, $\beta^j_t\geq 0$ and for each $t \in \{0,1,\dots,N \}$, $\sum^{n^t_d}_{j=1}\beta^j_{t}=1$.
\end{definition}

\begin{algorithm}
\caption{Online Implementation: ITPC}
\begin{tabular}{p{1.8 cm} p{12.5cm}}\label{algorithm_controller_periodic}
\textbf{Require} &  The set $\mathcal{D}$ and the controller $\kappa_{\mathcal{D},N}$ (this is an offline step).\\
\textbf{M} & Measure $x_t$.\\
\textbf{Sim./Store} & \textbf{If $[t \mod N]=0$:} Use $\kappa_{\mathcal{D},N}$ to simulate $\mathcal{S}_{\mathcal{D}}$: \\
Step S1& Set $x^1_0=x_t$. Simulate $\mathcal{S}_{\mathcal{D}}$, i.e., compute $u^j_k$ and $x^j_k$, $\forall (j,k) \in I_{\llbracket 0,N \rrbracket}$.\\
Step S2&Store the computed evolution of $\mathcal{S}_{\mathcal{D}}$.\\ 
Step S3& Apply $u_t=u^1_0$ to the plant.\\
Step S4& Set $t=t+1$. Go to \textbf{M} at the next time step.\\
\textbf{Interpolate} & \textbf{If $[t \mod N]\neq0$:} Compute the result of $\kappa_N$:\\
Step I1& Set $k=[t \mod N]$. Compute $\beta^j_k$ from the measured $x_t$ and the stored $x^j_k$ by solving $x_t=\sum^{n^k_d}_{j=1}\beta^j_kx^j_k$, $\sum^{n^k_d}_{j=1}\beta^j_k=1$, $\beta^j_k\geq 0$ .\\
Step I2& Compute $u_t$ from $\beta^j_k$ and the stored $u^j_k$ using $u_t=\sum^{n^k_d}_{j=1}\beta^j_ku^j_k$.\\
Step I3& Apply the computed $u_t$ to the plant.\\
Step I4& Set $t=t+1$. Go to \textbf{M} at the next time step.\\
\end{tabular}
\end{algorithm}

It is important to note that unlike in multi-stage MPC \cite{Scokaert_Mayne,dela_pena}, the prediction of the scenario tree is not based on the solution of an online optimization problem, but is based on a fixed control law $\kappa_{\mathcal{D},N}$ that is computed offline. How $\kappa_{\mathcal{D},N}$ is computed will be detailed in the next subsections. Only Algorithm \ref{algorithm_controller_periodic} is executed online.

The following Lemma characterizes the properties of the closed-loop resulting from the ITPC defined by Algorithm \ref{algorithm_controller_periodic}.
\begin{lemma}\label{convexity_lemma}
Consider the time steps $t \in \{0,1,\dots, N \}$. Assume that the system  $\mathcal{S}$ is controlled using ITPC defined in Algorithm \ref{algorithm_controller_periodic} with period $N$. 
The following hold for $\mathcal{S}$:
 \begin{itemize}
     \item [A.] The state $x_t$ of the system $\mathcal{S}$ satisfies $x_t=\sum_{j=1}^{n_d^t} \beta^j_{t} x^j _t$, $\forall t\in \{0,1,\dots, N \}$, i.e., $x_t \in Co({x^j_t}, \textbf{ } j \in \{ 1,2,\dots, n_d^t\})$.
     \item [B.]  A function $x^T P_0 x$ is a FSLF of period $N$ for $\mathcal{S}$, if and only if it is a FSLF of period $N$ for $\mathcal{S}_{\mathcal{D}}$.
 \end{itemize}
\end{lemma}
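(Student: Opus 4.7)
The plan is to prove part A by induction on $t$, and then to deduce part B from part A combined with the convexity of the quadratic $V(x)=x^TP_0x$.

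For part A, the base case $t=0$ is immediate since $\beta^1_0=1$ and $x^1_0=x_0$ by the definition of the algorithm. For the inductive step, suppose $x_t=\sum_{j=1}^{n_d^t}\beta^j_t x^j_t$. By Step I2 of Algorithm \ref{algorithm_controller_periodic}, the ITPC applies $u_t=\sum_{j=1}^{n_d^t}\beta^j_t u^j_t$ (at $t=0$ this reduces to $u_0=u^1_0$ since $n_d^0=1$). Expanding $(A_t,B_t)=\sum_{i=1}^{n_d}\alpha_{t,i}(\bar A_i,\bar B_i)$ via Assumption \ref{Assumption_1}.A, bilinearity yields
\begin{equation*}
x_{t+1}=A_t x_t+B_t u_t=\sum_{i=1}^{n_d}\sum_{j=1}^{n_d^t}\alpha_{t,i}\beta^j_t\bigl(\bar A_i x^j_t+\bar B_i u^j_t\bigr).
\end{equation*}
By the tree dynamics \eqref{dynamics_tree_1}, each summand $\bar A_i x^j_t+\bar B_i u^j_t$ equals $x^{j'}_{t+1}$, where $(j',t+1)$ is the unique child of $(j,t)$ with $i^{j'}_{t+1}=i$, and Definition \ref{beta_definition} assigns $\beta^{j'}_{t+1}=\alpha_{t,i}\beta^j_t$ to this child. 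The double sum over $(i,j)$ therefore collapses into a single sum over all children indexed by $j'\in\{1,\ldots,n_d^{t+1}\}$, giving $x_{t+1}=\sum_{j'=1}^{n_d^{t+1}}\beta^{j'}_{t+1}x^{j'}_{t+1}$. Non-negativity and normalisation of the $\beta^{j'}_{t+1}$ follow from $\alpha_{t,i}\ge 0$, $\sum_i\alpha_{t,i}=1$ and the induction hypothesis.

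For part B, the ``only if'' direction is immediate. Any vertex-only trajectory of $\mathcal{S}_{\mathcal D}$ is realised by $\mathcal{S}$ when the weights in Assumption \ref{Assumption_1}.A are specialised to $\alpha_{t,i}\in\{0,1\}$; in this case the ITPC applies the stored input of the corresponding tree node and the state coincides with that node. Hence a per-period decrease of $V$ along every $\mathcal{S}$-trajectory forces the same decrease along every $\mathcal{S}_{\mathcal D}$-trajectory. For the ``if'' direction, assume $V(x^j_N)<V(x^1_0)$ for every $(j,N)\in I_N$ whenever $x^1_0\neq 0$. Fix any period of the closed-loop $\mathcal{S}$ and identify $x^1_0$ with the current state (which is precisely what Step S1 of Algorithm \ref{algorithm_controller_periodic} does). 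By part A at $t=N$, $x_N=\sum_{j}\beta^j_N x^j_N$ is a convex combination of the leaves. Since $P_0>0$, the quadratic form $V$ is convex, so Jensen's inequality combined with the assumed strict decrease gives
\begin{equation*}
V(x_N)\le\sum_{j=1}^{n_d^N}\beta^j_N V(x^j_N)<\sum_{j=1}^{n_d^N}\beta^j_N V(x^1_0)=V(x^1_0),
\end{equation*}
which is the FSLF property on one period. Because the tree is re-rooted at the measured state at every period boundary, the same argument applies on each subsequent period.

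The main obstacle is the bookkeeping in the inductive step of part A: one must carefully match the double index $(j,i)$ to the child index $j'$ so that the recursion $\beta^{j'}_{t+1}=\alpha_{t,i}\beta^{p(j')}_t$ is applied consistently and the full collection of children at step $t+1$ is enumerated exactly once. Once part A is established, part B is a short consequence of the convexity of the quadratic FSLF together with the elementary observation that the vertex realisations of $\mathcal{S}$ exhaust all trajectories of $\mathcal{S}_{\mathcal D}$.
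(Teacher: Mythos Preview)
Your proof is correct. Part A is identical in structure to the paper's argument (induction on $t$, expanding $(A_t,B_t)$ over the vertices and re-indexing via the tree). For the ``if'' direction of Part B you use Jensen's inequality for the convex quadratic $V$, whereas the paper instead lifts the scalar inequality $x_0^TP_0x_0-x_N^{j\,T}P_0x_N^j>0$ to the $(1+n_x)\times(1+n_x)$ matrix inequality
\[
\begin{pmatrix} x_0^TP_0x_0 & x_N^{j\,T}\\ x_N^{j} & P_0^{-1}\end{pmatrix}>0
\]
via the Schur complement, takes the convex combination at the matrix level, and then Schur-complements back. Both arguments are exploiting exactly the same convexity; your Jensen route is the more elementary of the two, while the paper's linearisation-by-Schur-complement is the idiom more commonly seen in the robust-control LMI literature and generalises naturally when one later wants to combine such inequalities with other LMI constraints.
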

\begin{proof}
See Appendix \ref{Appendix_proofs}.
\end{proof}
\begin{remark}
In the proposed formulation, the inputs that are applied at the same tree node are the same, independent of the uncertainty that materializes thereafter. This a key difference from previous results that we have mentioned in section \ref{Intro} (see \cite{DAAFOUZ2001355,PANDEY2017214,brazil,Kruszewski_1,Guerra}), where it is assumed that the uncertainty is real-time measurable and the control law is designed by taking the current uncertainty into account. In other words, in our formulation the controller reacts to the past realizations of uncertainty but not the current and future uncertainties since these are unknown. This is called non-anticipativity in stochastic optimization and robust MPC where causality is respected.
\end{remark}
\begin{remark}
At time step $t$, the computation of the control inputs $u_t$ is based on the values of $\beta^j_{t}$, $\forall j \in \{1, 2,\dots,n_d^t\}$ (see Definition \ref{beta_definition}). Note that $\beta^j_{t}$, $\forall j \in \{1, 2,\dots,n_d^t\}$ summarize the information about the past uncertainties until $t-1$, i.e., $\alpha_k$, $k \in \{0,1,\dots,t-1 \}$, and do not contain any information about the current uncertainty $\alpha_t$. We also assume that we only measure the state of the plant $x_t$ and not $\beta^j_{t}$. The coefficients $\beta^j_{t}$, $\forall j \in \{1, 2,\dots,n_d^t\}$ are determined from the linear system of equations given in Step I1 of Algorithm \ref{algorithm_controller_periodic}. Note that there always exists a solution to this system of equations, which however might not be unique. Nonetheless, the result of Lemma \ref{convexity_lemma} holds for any of these non-unique solutions.
\end{remark}
\begin{figure}
\begin{center}
\includegraphics[width=\columnwidth]{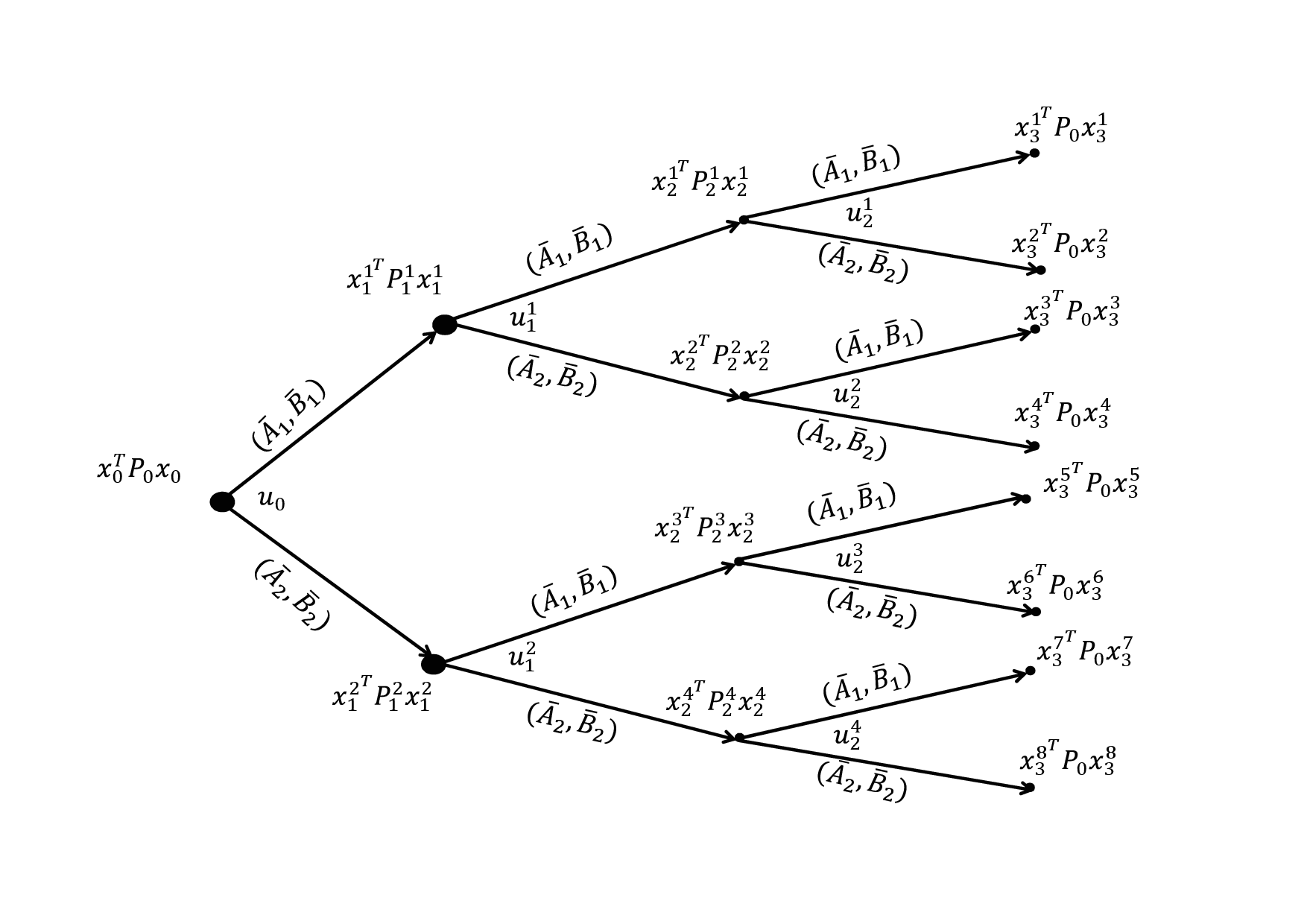} 
\caption{Scenario tree representation of $\mathcal{S}_{\mathcal{D}}$ for three time steps ($N=3$). The set $\mathbf{D}$ has two vertices, i.e., $\Gamma=\{1,2\}$. The quadratic functions at each stage (time step) are determined such that they are smaller than the quadratic function at their parent node. For the last stage ($t=N$), the symmetric positive definite matrix $P_0$ which was used for the initial state $x_0$ is used again.}
\label{Tree_first}  
\end{center} 
\end{figure}
\subsection{Synthesis of Static Controllers}\label{subsection_static_A}
We first consider static controllers of the form $\kappa_N(x_t)=Kx_t$. Note that this can be considered as a simple special case of the ITPC defined by Algorithm \ref{algorithm_controller_periodic} with $\kappa_{\mathcal{D},N}(x_t)=\kappa_N(x_t)=Kx_t$, where the scenario tree neither needs to be simulated nor stored. The closed-loop dynamics \eqref{dynamics_tree_1} then is defined by
\begin{equation}\label{propagation_eqn_1} 
    x^j_{t+1}=(\bar{A}_{i^j_{t+1}}+\bar{B}_{i^j_{t+1}} K) x^{p(j)}_t, \textbf{ } \forall (j,t+1)\in I_{\llbracket 1,N \rrbracket}. 
\end{equation}

 We want to find a gain matrix $K$ such that there exists a symmetric positive definite matrix $P_0$, such that $\forall x_0 \in \mathbb{R}^{n_x} \setminus \{ 0\}$, we have $x^{j^T}_N P_0 x^j_N<x^T_0 P_0 x_0$, $\forall (j,N)\in I_{N}$  for some $N \in \mathbb{Z}_{\geq 1}$. 
\begin{lemma}\label{prop_suff_FSLF_K_fixed}
    The system $\mathcal{S}$ is robustly exponentially stabilized by the control law $\kappa(x_t)=Kx_t$ if there exist $N \in \mathbb{Z}_{\geq 1}$ and symmetric matrices $P^j_t >0$ such that:  
\begin{equation}\label{P_static_1}
    P^{p(j)}_t-(\bar{A}_{i^j_{t+1}}+\bar{B}_{i^j_{t+1}} K)^T P^j_{t+1} (\bar{A}_{i^j_{t+1}}+\bar{B}_{i^j_{t+1}} K)> 0
\end{equation}
$\forall (j,t+1)\in I_{\llbracket 1,N-1 \rrbracket}$, and 
\begin{equation}\label{P_static_2}
     P^{p(j)}_{N-1}- (\bar{A}_{i^j_{N}}+\bar{B}_{i^j_{N}} K)^T P_0 (\bar{A}_{i^j_{N}}+\bar{B}_{i^j_{N}} K)> 0
\end{equation}
$\forall (j,N)\in I_{N}$, where $P^1_0=P_0$. 
\end{lemma}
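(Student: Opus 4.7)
The plan is to turn the LMIs \eqref{P_static_1}--\eqref{P_static_2} into a chain of quadratic-form inequalities along every root-to-leaf path of the scenario tree, thereby exhibiting $V(x) = x^{T} P_0 x$ as a FSLF of period $N$ for the switched vertex system $\mathcal{S}_{\mathcal{D}}$. Lemma \ref{convexity_lemma}(B) then transfers this FSLF to the uncertain system $\mathcal{S}$, and robust exponential stability follows from the FSLF-based stability result in Appendix \ref{section_periodic_stab}.

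Concretely, fix $x_0 \neq 0$ and consider any root-to-leaf path $(j_0, 0), (j_1, 1), \ldots, (j_N, N)$ in the tree, with $j_0 = 1$. Pre- and post-multiplying \eqref{P_static_1} at node $(j_{k+1}, k+1)$ by $x^{j_k}_k$ (and its transpose) and using the closed-loop tree dynamics \eqref{propagation_eqn_1} to identify $(\bar{A}_{i^{j_{k+1}}_{k+1}} + \bar{B}_{i^{j_{k+1}}_{k+1}} K)\, x^{j_k}_k = x^{j_{k+1}}_{k+1}$ yields, for every $k = 0, 1, \ldots, N-2$, the scalar inequality $(x^{j_k}_k)^{T} P^{j_k}_k x^{j_k}_k \geq (x^{j_{k+1}}_{k+1})^{T} P^{j_{k+1}}_{k+1} x^{j_{k+1}}_{k+1}$, strict whenever $x^{j_k}_k \neq 0$. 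Treating \eqref{P_static_2} analogously at the last transition $k = N-1$ produces the corresponding inequality in which the right-hand quadratic weight is $P_0$ instead of $P^{j_N}_N$. Concatenating these $N$ inequalities and using $P^1_0 = P_0$ yields $x_0^{T} P_0 x_0 \geq (x^{j_N}_N)^{T} P_0 x^{j_N}_N$; the first link (at $k = 0$) is strict because $x^1_0 = x_0 \neq 0$ and the underlying LMI is strictly positive definite, so the whole chain is strictly decreasing. Since the leaf was arbitrary, every $(j, N) \in I_N$ satisfies $V(x^j_N) < V(x_0)$, which is exactly the FSLF decrease condition of Definition \ref{Lyap_def} over the first period.

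Repeating the construction at each multiple $mN$ of the period by reinitialising the tree with $x^1_0 = x_{mN}$ extends the decrease to all periods and shows that $V$ is a FSLF of period $N$ for $\mathcal{S}_{\mathcal{D}}$. Because the static control law $\kappa(x_t) = K x_t$ is a trivial instance of the ITPC of Algorithm \ref{algorithm_controller_periodic} (the tree need not actually be simulated or stored), Lemma \ref{convexity_lemma}(B) implies that $V$ is also a FSLF of period $N$ for $\mathcal{S}$. Theorem \ref{suff_theorem_fixed_K} together with Corollary \ref{corollary_necc_suff_imp} of Appendix \ref{section_periodic_stab} then deliver the claimed robust exponential stability of $\mathcal{S}$ under $\kappa(x_t) = K x_t$.

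The main subtlety, rather than a genuine conceptual obstacle, is a bookkeeping one: indexing along arbitrary tree paths and handling the degenerate case in which some intermediate $x^{j_k}_k$ vanishes. In that situation the corresponding link in the chain collapses to $0 \geq 0$, but all descendants of that node are then identically zero by linearity of the tree dynamics, so the guaranteed-strict first link at $k = 0$ still delivers the strict decrease at the leaf. Beyond this, the argument is a direct substitution using the LMIs \eqref{P_static_1}--\eqref{P_static_2} and the scenario-tree dynamics \eqref{propagation_eqn_1}, invoking nothing past Assumption \ref{Assumption_1} and the tools already set up in the excerpt.
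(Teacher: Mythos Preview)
Your proposal is correct and follows essentially the same approach as the paper: chain the LMIs \eqref{P_static_1}--\eqref{P_static_2} along each root-to-leaf path to obtain $V(x^j_N)<V(x_0)$ for $\mathcal{S}_{\mathcal{D}}$, then invoke Lemma~\ref{convexity_lemma}(B) to transfer the FSLF to $\mathcal{S}$. The only cosmetic difference is in the last step: the paper notes that with a static gain the closed-loop uncertainty set $\mathbf{D}^K$ is time-invariant and cites \cite{Megretski} directly, whereas you invoke Theorem~\ref{suff_theorem_fixed_K}(A) (Corollary~\ref{corollary_necc_suff_imp} is not needed here); both routes are valid and equivalent for this case.
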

\begin{proof}
See Appendix \ref{Appendix_proofs}.  
\end{proof}

The following Theorem provides a sufficient convex criterion for determining such a stabilizing static controller for $\mathcal{S}$.
\begin{theorem} \label{Theorem_stability_static_suff}
    If there exist $N \in \mathbb{Z}_{\geq 1}$, a matrix $G \in \mathbb{R}^{n_x \times n_x}$, a matrix $L\in \mathbb{R}^{n_u \times n_x}$ and symmetric positive definite matrices $S^j_t \in \mathbb{R}^{n_x \times n_x}$, $(j,t)\in I_{\llbracket 0,N-1 \rrbracket}$, where $S^1_0=S_0$ such that 
\begin{equation}\label{LMI_1_static}
\begin{pmatrix}
G^T+G-S^{p(j)}_t &  \textbf{ } &  G^T \bar{A}^T_{i^j_{t+1}}+L^T \bar{B}^T_{i^j_{t+1}}\\
\bar{A}_{i^j_{t+1}} G + \bar{B}_{i^j_{t+1}} L  & \textbf{ } &  S^j_{t+1}
\end{pmatrix}> 0
\end{equation}
$\forall (j,t+1)\in I_{\llbracket 1,N-1 \rrbracket}$   
\begin{equation}\label{LMI_2_static}
\begin{pmatrix}
G^T+G-S^{p(j)}_{N-1} &  \textbf{ } \textbf{ } \textbf{ }  \textbf{ }&  G^T \bar{A}^T_{i^j_{N}}+L^T \bar{B}^T_{i^j_{N}}\\
 \bar{A}_{i^j_{N}} G + \bar{B}_{i^j_{N}} L  &  \textbf{ }  \textbf{ } \textbf{ }  \textbf{ } &  S_0
\end{pmatrix}> 0
\end{equation}
$\forall (j,N)\in I_{N}$, then the closed-loop system $\mathcal{S}$ with the control law $\kappa(x_t)=K x_t$ is robustly exponentially stable, where 
\begin{equation}\label{K_eqn}
    K=L G^{-1}.
\end{equation} 
\end{theorem}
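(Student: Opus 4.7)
The plan is to reduce the LMI conditions \eqref{LMI_1_static} and \eqref{LMI_2_static} to the quadratic inequalities \eqref{P_static_1} and \eqref{P_static_2} of Lemma \ref{prop_suff_FSLF_K_fixed} by introducing $P^j_t := (S^j_t)^{-1}$, and then invoke that lemma to conclude robust exponential stability. The only substantive step is a standard ``slack-variable'' manipulation together with a Schur complement, so the proof should be short and self-contained.

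First I would show that $G$ is nonsingular. The $(1,1)$-block of \eqref{LMI_1_static} (and of \eqref{LMI_2_static}) yields $G^T+G - S^{p(j)}_t > 0$. If there existed $v\neq 0$ with $Gv=0$, then $v^T(G^T+G)v = 0$, forcing $v^T S^{p(j)}_t v < 0$, which contradicts $S^{p(j)}_t > 0$. Hence $G^{-1}$ exists and the candidate gain $K = L G^{-1}$ in \eqref{K_eqn} is well-defined.

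Next I would apply the Schur complement to \eqref{LMI_1_static}, using $S^j_{t+1}>0$, to obtain
\begin{equation*}
G^T + G - S^{p(j)}_t \;>\; (G^T \bar{A}^T_{i^j_{t+1}} + L^T \bar{B}^T_{i^j_{t+1}})(S^j_{t+1})^{-1}(\bar{A}_{i^j_{t+1}} G + \bar{B}_{i^j_{t+1}} L).
\end{equation*}
Substituting $L = KG$ factors the right-hand side as $G^T (\bar{A}_{i^j_{t+1}}+\bar{B}_{i^j_{t+1}} K)^T (S^j_{t+1})^{-1} (\bar{A}_{i^j_{t+1}}+\bar{B}_{i^j_{t+1}} K) G$. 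The key inequality is the algebraic identity
\begin{equation*}
G^T (S^{p(j)}_t)^{-1} G \;\geq\; G^T + G - S^{p(j)}_t,
\end{equation*}
which follows from $(G-S^{p(j)}_t)^T (S^{p(j)}_t)^{-1} (G-S^{p(j)}_t) \geq 0$. Combining the last two displays and pre- and post-multiplying by $G^{-T}$ and $G^{-1}$ gives
\begin{equation*}
(S^{p(j)}_t)^{-1} \;>\; (\bar{A}_{i^j_{t+1}}+\bar{B}_{i^j_{t+1}} K)^T (S^j_{t+1})^{-1} (\bar{A}_{i^j_{t+1}}+\bar{B}_{i^j_{t+1}} K),
\end{equation*}
i.e., exactly \eqref{P_static_1} with $P^j_t := (S^j_t)^{-1}$ and $P_0 := S_0^{-1}$. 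Running the same argument on \eqref{LMI_2_static} (with $S_0$ in place of $S^j_{N}$) yields \eqref{P_static_2}.

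Finally, with the $P^j_t > 0$ so constructed, the hypotheses of Lemma \ref{prop_suff_FSLF_K_fixed} are met, and the lemma delivers robust exponential stability of the closed loop $\mathcal{S}$ under $\kappa(x_t) = Kx_t$. The only part where one must take some care is verifying the nonsingularity of $G$ and the direction of the matrix inequality $G^T S^{-1} G \geq G^T+G - S$; everything else is a direct Schur-complement congruence transformation. I do not anticipate a serious obstacle.
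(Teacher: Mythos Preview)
Your proposal is correct and follows essentially the same route as the paper: both arguments establish nonsingularity of $G$ from the positivity of the $(1,1)$-block, use the slack-variable inequality $G^T (S^{p(j)}_t)^{-1}G \ge G^T+G-S^{p(j)}_t$ (the paper derives it via $\epsilon^j_t:=G-S^j_t$, you via expanding $(G-S)^T S^{-1}(G-S)\ge 0$), apply the Schur complement together with $L=KG$, and then invoke Lemma~\ref{prop_suff_FSLF_K_fixed}.
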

\begin{proof}
    See Appendix \ref{Appendix_proofs}.
\end{proof}
\begin{Example}\label{Example_1}
Consider the system \eqref{system_dynamics} with $A_t=\begin{pmatrix}
a^1_t & 1.12 \\ a^2_t & a^3_t
\end{pmatrix}$ and $B_t=\begin{pmatrix}
1\\1
\end{pmatrix}$, where $a^1_t\in [0.2,1.8]$, $a^2_t \in [-0.35,0.35]$, $a^3_t \in [0.1,1.9]$. The quadratic stabilizability criterion for this system fails to find a feasible solution. Also, applying the method in \cite{canon_inv_2} to find periodic stabilizing gains failed. Note that since the system is unconstrained, only the inequalities (18) in \cite{canon_inv_2} are used. On the other hand, by solving the LMIs given in Theorem \ref{Theorem_stability_static_suff} with $N=2$, a feasible solution is found, for which $K=\begin{pmatrix}
-0.8956 & -1.103 
\end{pmatrix}$ is the resulting stabilizing gain matrix.
\end{Example}
Note that the result of Example \ref{Example_1} can not be generalized, i.e., there are also examples where the method in \cite{canon_inv_2} finds a stabilizing sequence of feedback gains, while our method cannot find a static stabilizing feedback gain.  
\subsection{Synthesis of Linear Interpolating Tree Periodic Controllers with Memory}\label{subsection_memory} 
In this section, we propose and prove new convex necessary and sufficient conditions for the stabilizability of $\mathcal{S}$. In particular, our criterion results in the determination of a non-conservative stabilizing linear interpolating tree periodic memory based control law as defined by Algorithm \ref{algorithm_controller_periodic}.  

\begin{definition}\label{Def_Memory_controller}
\textbf{Linear Interpolating Tree Periodic Controller (LITPC):} A LITPC is an ITPC (see Definition \ref{Def_GTPC}) according to Algorithm \ref{algorithm_controller_periodic} in which the inputs of the simulated scenario tree (that result from the controller $\kappa_{\mathcal{D},N}$) are $u^j_k=K^j_k x^j_k$, $\forall (j,k) \in I_{\llbracket 0,N-1 \rrbracket}$ with $K^1_0=K_0$. 
\end{definition}
For the first $N$ time steps, the evolution of $\mathcal{S}_\mathcal{D}$ is given by
\begin{equation}\label{propagation_eqn_2_diff_k}
    x^j_{t+1}=(\bar{A}_{i^j_{t+1}}+\bar{B}_{i^j_{t+1}} K^{p(j)}_t) x^{p(j)}_t, \textbf{ } \forall (j,t+1)\in I_{\llbracket 1,N \rrbracket}. 
\end{equation}

We want to find gains $K^j_t$, $\forall (j,t)\in I_{\llbracket 0,N-1 \rrbracket}$, such that there exists a symmetric positive definite matrix $P_0$, such that $\forall x_0 \in \mathbb{R}^{n_x} \setminus \{ 0\}$, $x^{j^T}_N P_0 x^j_N<x^T_0 P_0 x_0$, $\forall (j,N)\in I_{N}$ for some $N \in \mathbb{Z}_{\geq 1}$. 

 The following lemma provides necessary and sufficient conditions for the stabilizability of $\mathcal{S}$ by LITPC.
\begin{lemma}\label{prop_suff_FSLF_K_different}
    The system $\mathcal{S}$ is robustly exponentially stabilizable by an LITPC if and only if there exists some  $N \in \mathbb{Z}_{\geq 1}$, for which there exist symmetric matrices $P^j_t >0$ and an LITPC of period $N$ whose gains $K^j_t$ result in: 
\begin{equation}\label{P_memory_1}
    P^{p(j)}_t-(\bar{A}_{i^j_{t+1}}+\bar{B}_{i^j_{t+1}} K^{p(j)}_t)^T P^j_{t+1} (\bar{A}_{i^j_{t+1}}+\bar{B}_{i^j_{t+1}} K^{p(j)}_t)>0 
\end{equation}
$\forall (j,t+1)\in I_{\llbracket 1,N-1 \rrbracket}$, and 
\begin{equation}\label{P_memory_2}
     P^{p(j)}_{N-1}- (\bar{A}_{i^j_{N}}+\bar{B}_{i^j_{N}} K^{p(j)}_{N-1})^T P_0 (\bar{A}_{i^j_{N}}+\bar{B}_{i^j_{N}} K^{p(j)}_{N-1})> 0 
\end{equation}

\end{lemma}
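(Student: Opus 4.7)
The plan is to split the equivalence into the two directions, with sufficiency being a straightforward telescoping argument along the scenario tree and necessity being the serious step that requires the matrix replacement lemma announced in the introduction.

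\textbf{Sufficiency.} Assume gains $K^j_t$ and matrices $P^j_t>0$ satisfying \eqref{P_memory_1}--\eqref{P_memory_2}. For any nonzero $x_0 = x^1_0$ and any path from the root to a leaf $(j,N)$ of the scenario tree, I would pre- and post-multiply each one-step inequality by the tree state at the corresponding stage (which evolves according to \eqref{propagation_eqn_2_diff_k}) and telescope along the path; this yields $x^{j^T}_N P_0 x^j_N < x^T_0 P_0 x_0$ for every $(j,N)\in I_N$. By Definition \ref{Lyap_def} the function $V(x)=x^T P_0 x$ is therefore an FSLF of period $N$ for $\mathcal{S}_{\mathcal{D}}$, and by Lemma \ref{convexity_lemma}.B it is also an FSLF of period $N$ for $\mathcal{S}$. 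Robust exponential stability of $\mathcal{S}$ under the associated LITPC follows from the standard FSLF result (Theorem \ref{suff_theorem_fixed_K} and Corollary \ref{corollary_necc_suff_imp} in Appendix \ref{section_periodic_stab}).

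\textbf{Necessity.} Suppose $\mathcal{S}$ is robustly exponentially stabilizable by some LITPC of period $\tilde N$. Via Lemma \ref{convexity_lemma}.B the same gains stabilize $\mathcal{S}_{\mathcal{D}}$ along the tree, and invoking the converse direction of the periodic-stabilization theorem proved in Appendix \ref{section_periodic_stab} (Theorem \ref{suff_theorem_differnt_gains}) I would extract a period $N$ (possibly a multiple of $\tilde N$) and a single $P_0>0$ such that $V(x)=x^T P_0 x$ is a quadratic FSLF of period $N$ for the closed loop. Equivalently, for every leaf $(j,N)$ the composition of the closed-loop matrices along the path from the root to $(j,N)$ strictly contracts $P_0$. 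It remains to refine this end-to-end contraction into the per-step decreases \eqref{P_memory_1}--\eqref{P_memory_2} with intermediate $P^j_t>0$. I would do this by a backward induction from $P^j_N := P_0$ at the leaves toward the root, at each internal node invoking the matrix replacement lemma (Lemma \ref{essential_lemma}) to produce an admissible $P^j_t$ that simultaneously pulls back through every child branch and continues to support contraction toward the root.

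\textbf{Where the difficulty lies.} The genuinely hard part is this last construction, for two reasons. First, a parent node $(p(j),t)$ has several children $(j,t+1)$, one per vertex of $\mathcal{D}$, so a single $P^{p(j)}_t$ must dominate every $(\bar A_{i^j_{t+1}}+\bar B_{i^j_{t+1}} K^{p(j)}_t)^T P^j_{t+1}(\bar A_{i^j_{t+1}}+\bar B_{i^j_{t+1}} K^{p(j)}_t)$ in the Löwner order while itself being dominated by the pullback from its own parent. A naive pointwise Löwner supremum of the children's pullbacks need not exist, which is exactly why a simple backward recursion fails and the matrix replacement lemma is required. Second, the period $N$ obtained from the converse FSLF theorem need not equal $\tilde N$; the statement of the lemma accommodates this by only requiring the existence of some $N\in\mathbb Z_{\geq 1}$, which is the handle I would use to absorb the period mismatch. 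The simplified walkthrough in Illustration \ref{proof_illustration} should then clarify how the replacement lemma is applied repeatedly to turn the single end-to-end contraction into the full collection of per-node inequalities.
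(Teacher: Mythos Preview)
Your sufficiency argument and the overall architecture of necessity (obtain a period $N$ and a $P_0$ with end-to-end contraction along every scenario via Corollary~\ref{corollary_necc_suff_imp}, then refine into per-step inequalities using Lemma~\ref{essential_lemma}) match the paper. However, the direction in which you propose to construct the intermediate $P^j_t$ is the wrong one, and Lemma~\ref{essential_lemma} cannot be invoked as you describe. The lemma requires a \emph{single} matrix $A$ common to all the $M_i$; in a backward step from the children of a node to that node, each child is reached via a \emph{different} closed-loop map $\bar A_{i^j_{t+1}}+\bar B_{i^j_{t+1}}K^{p(j)}_t$, so the hypothesis of Lemma~\ref{essential_lemma} is not met. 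And if you simply choose $P^{p(j)}_t$ large enough to dominate all children's pullbacks (which is trivially possible), nothing ties these choices to the fixed $P_0$ at the root, so the recursion breaks at $t=0$.

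The paper runs the induction \emph{forward} from the root. It first defines, for each scenario (leaf) $j$, auxiliary matrices $M^j_t$ by a trivial backward product with terminal value $M^j_N=P_0$, so that the end-to-end contraction reads $P_0>(\bar A_i+\bar B_iK_0)^T M^j_1(\bar A_i+\bar B_iK_0)$ for every leaf $j$ whose scenario passes through first-level child $i$. For fixed $i$ all these inequalities share the \emph{same} $A=\bar A_i+\bar B_iK_0$; this is exactly the setting of Lemma~\ref{essential_lemma}, which yields a single $P^i_1$ with $P^i_1>M^j_1$ and $P_0>A^TP^i_1A$. The first of these regenerates the hypothesis one level down (since each $M^j_1$ is itself a one-step pullback of $M^j_2$), and the argument repeats forward to the leaves, where $M^j_N=P_0$ delivers \eqref{P_memory_2}. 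Thus the backward pass is only for the auxiliary $M^j_t$; the $P^j_t$ themselves are assigned forward, anchored at the given $P_0$---precisely the anchoring your backward scheme lacks. (Minor slip: the appendix result you mean is Theorem~\ref{suff_theorem_fixed_K}/Corollary~\ref{corollary_necc_suff_imp}, not Theorem~\ref{suff_theorem_differnt_gains}.)
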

\begin{proof}
See Appendix \ref{Appendix_proofs}. 
\end{proof}
 
The following theorem defines necessary and sufficient LMIs for the feedback gains $K^j_t$, $\forall (j,t)\in I_{\llbracket 0,N-1\rrbracket}$ of an LITPC.
\begin{theorem}\label{suff_theorem_differnt_gains}
    The closed-loop $\mathcal{S}$ is robustly exponentially stabilizable by an LITPC if and only if there exist $N \in \mathbb{Z}_{\geq 1}$, matrices $L^j_t\in \mathbb{R}^{n_u \times n_x}$, $(j,t)\in I_{\llbracket 0,N-1 \rrbracket}$ and symmetric matrices $S^j_t >0$, $(j,t)\in I_{\llbracket 0,N-1 \rrbracket}$, where $S^1_0=S_0$ such that 
\begin{equation}\label{LMI_1_diff_K}
\begin{pmatrix}
S^{p(j)}_t &  S^{p(j)^T}_t \bar{A}^T_{i^j_{t+1}}+L^{p(j)^T}_t \bar{B}^T_{i^j_{t+1}}\\
\bar{A}_{i^j_{t+1}} S^{p(j)}_t +\bar{B}_{i^j_{t+1}} L^{p(j)}_t   &  S^j_{t+1}
\end{pmatrix}> 0
\end{equation}
$\forall (j,t+1)\in I_{\llbracket 1,N-1 \rrbracket}$,   
\begin{equation}\label{LMI_2_diff_K}
\begin{pmatrix}
S^{p(j)}_{N-1} &   S^{p(j)^T}_{N-1} \bar{A}^T_{i^j_{N}}+L^{p(j)^T}_{N-1} \bar{B}^T_{i^j_{N}}\\
\bar{A}_{i^j_{N}} S^{p(j)}_{N-1} +\bar{B}_{i^j_{N}} L^{p(j)}_{N-1} &    S_0
\end{pmatrix}> 0
\end{equation}
$\forall (j,N)\in I_{N}$. A stabilizing LITPC of period $N$ can then be defined using the gains $K^j_t$ which are computed according to
\begin{equation}\label{K_eqn_diff_K}
    K^j_t=L^j_t S^{j^{-1}}_t, \textbf{ } \forall (j,t)\in I_{\llbracket 0,N-1 \rrbracket}.
\end{equation} 
\end{theorem}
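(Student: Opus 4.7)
The plan is to show that Theorem \ref{suff_theorem_differnt_gains} is an algebraic reformulation of Lemma \ref{prop_suff_FSLF_K_different}, obtained by combining a Schur complement with the congruence change of variables $S^j_t = (P^j_t)^{-1}$, $L^j_t = K^j_t S^j_t$. Since Lemma \ref{prop_suff_FSLF_K_different} already provides an ``if and only if'' characterization in terms of the Lyapunov inequalities \eqref{P_memory_1}-\eqref{P_memory_2}, once the equivalence between the two sets of inequalities is established, both the sufficiency and the necessity of the LMI conditions, and the validity of the gain formula \eqref{K_eqn_diff_K}, follow at once.

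First, I would apply the Schur complement to \eqref{LMI_1_diff_K} using the $(2,2)$ block $S^j_{t+1}$ as pivot. Noting that $S^{p(j)}_t$ is symmetric (so $S^{p(j)^T}_t = S^{p(j)}_t$), the LMI is equivalent to $S^j_{t+1}>0$ together with
\begin{equation*}
S^{p(j)}_t - \bigl(\bar{A}_{i^j_{t+1}} S^{p(j)}_t + \bar{B}_{i^j_{t+1}} L^{p(j)}_t\bigr)^T (S^j_{t+1})^{-1} \bigl(\bar{A}_{i^j_{t+1}} S^{p(j)}_t + \bar{B}_{i^j_{t+1}} L^{p(j)}_t\bigr) > 0.
\end{equation*}
Substituting the gain definition $L^{p(j)}_t = K^{p(j)}_t S^{p(j)}_t$ from \eqref{K_eqn_diff_K} yields
\begin{equation*}
S^{p(j)}_t - S^{p(j)}_t \bigl(\bar{A}_{i^j_{t+1}} + \bar{B}_{i^j_{t+1}} K^{p(j)}_t\bigr)^T (S^j_{t+1})^{-1} \bigl(\bar{A}_{i^j_{t+1}} + \bar{B}_{i^j_{t+1}} K^{p(j)}_t\bigr) S^{p(j)}_t > 0.
\end{equation*}
A congruence transform by $(S^{p(j)}_t)^{-1}$, which preserves positive definiteness, followed by the identification $P^j_t := (S^j_t)^{-1}$, reproduces \eqref{P_memory_1} exactly. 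The same computation applied to \eqref{LMI_2_diff_K} (with the terminal pivot $S_0$) reproduces \eqref{P_memory_2} with $P_0 = S_0^{-1}$.

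Each step above is reversible: starting instead from $P^j_t > 0$ and gains $K^j_t$ satisfying \eqref{P_memory_1}-\eqref{P_memory_2}, I would set $S^j_t = (P^j_t)^{-1}$ and $L^j_t = K^j_t S^j_t$, apply the inverse congruence transform, and close with the reverse Schur complement to recover the LMIs \eqref{LMI_1_diff_K}-\eqref{LMI_2_diff_K}. Hence sufficiency of the LMIs follows by Schur complement $\Rightarrow$ Lemma \ref{prop_suff_FSLF_K_different} $\Rightarrow$ robust exponential stabilizability, while necessity follows from Lemma \ref{prop_suff_FSLF_K_different} $\Rightarrow$ reverse Schur complement $\Rightarrow$ LMIs.

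There is no real obstacle here beyond keeping track of which block serves as the Schur complement pivot and verifying that the congruence by $(S^{p(j)}_t)^{-1}$ is legitimate; both hold because the LMIs themselves force all $S^j_t > 0$ (and dually, Lemma \ref{prop_suff_FSLF_K_different} provides $P^j_t > 0$). The heavy lifting — in particular the nontrivial \emph{necessity} of the existence of the $P^j_t$ and the $K^j_t$ in the first place — is delegated to Lemma \ref{prop_suff_FSLF_K_different} (whose proof, as the paper notes, uses the matrix replacement Lemma in the appendix). For the present theorem, it is purely a linearization via Schur complement, and the main thing to double-check is that the variable change $L^j_t = K^j_t S^j_t$ makes the bilinear coupling $(\bar A + \bar B K)S$ in the Lyapunov inequality linear in $(S^{p(j)}_t, L^{p(j)}_t)$, which it does.
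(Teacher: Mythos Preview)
Your proposal is correct and follows essentially the same route as the paper's own proof: apply the Schur complement to \eqref{LMI_1_diff_K}--\eqref{LMI_2_diff_K}, perform the congruence by $(S^{p(j)}_t)^{-1}$ together with the change of variables $P^j_t=(S^j_t)^{-1}$, $L^j_t=K^j_t S^j_t$, and then invoke Lemma~\ref{prop_suff_FSLF_K_different} for both directions. Your explicit remark on the reversibility of each step (and hence the necessity direction) is exactly what the paper compresses into the single word ``equivalences''.
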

\begin{proof}
See Appendix \ref{Appendix_proofs}. 
\end{proof}
\begin{remark}
In order to determine a stabilizing controller in a simple algorithmic fashion, one can increase $N$ until the set of LMIs \eqref{LMI_1_diff_K}, \eqref{LMI_2_diff_K} are feasible. Note that the period of the controller is the same as the period of the FSLF in the LMIs \eqref{LMI_1_diff_K}, \eqref{LMI_2_diff_K}, and thanks to the results provided in Appendix \ref{section_periodic_stab} (Corollary \ref{corollary_necc_suff_imp} which resulted from Theorem \ref{suff_theorem_fixed_K}), such controller will always be found if the system is at all stabilizable by a LITPC. 
This may not result in a controller of the shortest possible period. However, due to Theorem \ref{suff_theorem_fixed_K} and Corollary \ref{corollary_necc_suff_imp} (see Appendix \ref{section_periodic_stab}), it is guaranteed that such an algorithm will terminate with the shortest possible period of the FSLF and hence the least number of LMIs to be solved offline. 
\end{remark}
\begin{Example}\label{Example_2}
Consider the system \eqref{system_dynamics} with $A_t=\begin{pmatrix}
a^1_t & 1.5 \\ 1.22 & a^2_t
\end{pmatrix}$ and $B_t=\begin{pmatrix}
1\\1
\end{pmatrix}$, where $a^1_t\in [0.2,1.8]$, $a^2_t \in [-0.2,0.2]$. The quadratic stabilizability criterion, the method in \cite{canon_inv_2} and our method from section \ref{subsection_static_A} for finding a static feedback controller fail to find a stabilizing controller.  On the other hand, solving the LMIs in Theorem \ref{suff_theorem_differnt_gains} with $N=2$, a stabilizing periodic controller is found. The stabilizing controller has one gain matrix for $[t \mod 2]=0$ which is $K_0=\begin{pmatrix}
-1.3517 & 0.3123
\end{pmatrix}$, and four gain matrices at $[t \mod 2]=1$ which  are $K^1_1=\begin{pmatrix}
-1.3745  &  0.9952
\end{pmatrix}$, $K^2_1=\begin{pmatrix}
-1.2372  &  0.1555
\end{pmatrix}$, $K^3_1=\begin{pmatrix}
-1.3534  &  0.8483
\end{pmatrix}$, $K^4_1=\begin{pmatrix}
 -1.2460  &  0.1952
\end{pmatrix}$.
\end{Example}
\begin{remark}\label{remark_comparison}
Unlike the static controller that we proposed in section \ref{subsection_static_A} which may or may not produce less conservative results than the method in \cite{canon_inv_2}, by construction, the proposed LITPC synthesis method will always produce less conservative results than the method in \cite{canon_inv_2}. 
\end{remark}
In fact the method in \cite{canon_inv_2} can be considered as a conservative special case of our method in which the matrices $K^j_t$ and $P^j_t$ are equal $\forall j \in \{1,2,\dots,n^t_d \}$ at each $t \in \{ 0,1,\dots, N-1\}$, i.e., for each $t \in \{ 0,1,\dots, N-1\}$, $K_t=K^j_t$, $P_t=P^j_t$ $\forall j \in \{ 1,2,\dots, n^t_d\}$. 
\begin{remark}\label{remark_no_LMI_LITPC}
The number of LMIs that characterize the LITPC is $\sum^N_{k=1}n_d^k$. The number of matrices to be determined is $2\sum^{N-1}_{k=0}n_d^k$ which are $\sum^{N-1}_{k=0}n_d^k$ matrices $S^j_k$ (which determine the Lyapunov matrices $P^j_k$), and $\sum^{N-1}_{k=0}n_d^k$ matrices $L^j_k$ (which together with $S^j_k$ determine the gain matrices $K^j_k$). 
\end{remark}
\section{Constrained Systems and Periodic Invariance}\label{section_constrained}
We now extend our results from section \ref{Section_controller_synthesis} to constrained systems. Assume that the system \eqref{system_dynamics} is subject to the constraints
\begin{equation}\label{Constraints_eqn}
    Fx_t+Eu_t \leq \textbf{1} , \textbf{ } \forall t \in \mathbb{Z}_{\geq 0},
\end{equation}
where $F \in \mathbb{R}^{n_c \times n_x}$, $E \in \mathbb{R}^{n_c \times n_u}$ and $ \textbf{1}$ is a column vector of dimension $n_c \times 1$ with all its elements equal to $1$. Similar to the unconstrained case, consider that the inputs $u_t$ are generated by some periodic control law of period $N$. Again we will consider only the first $N$ time steps of the closed-loop system as they will determine the closed-loop properties $\forall t \in \mathbb{Z}_{\geq 0}$. We want to find a control law that stabilizes the closed-loop uncertain system while satisfying the imposed constraints \eqref{Constraints_eqn}. 
The following definition (adapted from \cite{canon_inv_2}) is a relaxation of positive invariance.
\begin{definition}\label{Definition_FPI}
A set $\mathcal{X}_0$ is called a Robust Periodic Invariant set with period $N \in \mathbb{Z}_{\geq 1}$ for the constrained closed-loop system $\mathcal{S}$, if $\forall x_0 \in \mathcal{X}_0$, the closed-loop states and inputs $x_t$ and $u_t$ satisfy \eqref{Constraints_eqn}, $\forall t \in \{0,1,\dots, N-1 \}$, $\forall \textbf{d}^N \in \mathbf{D}^N$, and $x_{N} \in \mathcal{X}_0$.    
\end{definition}
\subsection{Periodic Invariance Using a Static Feedback Gain}
In this section, we want to control the constrained system by a static feedback gain $K$. The closed-loop system dynamics and constraints are then
\begin{equation}\label{dynamics_constrained_1}
   x_{t+1}=(A_t+B_t K)x_t,
\end{equation}
\begin{equation}\label{dynamics_constrained_2}
   x_t \in \mathbb{X}= \{x | \textbf{ } (F+EK)x \leq \textbf{1} \}, \textbf{ } \forall t \in \mathbb{Z}_{\geq 0}.
\end{equation}
Define the ellipsoidal sets $\mathcal{P}^j_t=\{x| \textbf{ }x^T P^j_t x \leq 1\}$, $(j,t) \in I_{\llbracket 0,N-1 \rrbracket}$, where $P_0=P^1_0$ and $P^j_t=S^{j^{-1}}_t$. 

The following theorem provides convex conditions that suffice for an ellipsoidal set to be robust periodic invariant as well as for robust exponential stability of the closed-loop system.
\begin{theorem} \label{Theorem_FPI_Static_K}
    If there exists $N \in \mathbb{Z}_{\geq 1}$, a matrix $G \in \mathbb{R}^{n_x \times n_x}$, a matrix $L\in \mathbb{R}^{n_u \times n_x}$ and symmetric positive definite matrices $S^j_t \in \mathbb{R}^{n_x \times n_x}$, $(j,t)\in I_{\llbracket 0,N-1 \rrbracket}$, where $S^1_0=S_0$ such that \eqref{LMI_1_static} and \eqref{LMI_2_static} hold, and symmetric matrices $H^j_t \in \mathbb{R}^{n_c \times n_c}$, $(j,t)\in I_{\llbracket 0,N-1 \rrbracket}$ such that 
\begin{equation}\label{LMI_constraints_static_1}
\begin{pmatrix}
H^j_t &   & FG+EL \\
G^T F^T+L^T E^T &  & G^T+G-S^j_t 
\end{pmatrix}\geq 0
\end{equation}
$\forall (j,t)\in I_{\llbracket 0,N-1 \rrbracket}$, where for each $H^j_t$, $(j,t)\in I_{\llbracket 0,N-1 \rrbracket}$  
\begin{equation}\label{LMI_constraints_static_2}
e_i^T H^j_t e_i \leq 1 \textbf{ } \textbf{ } \forall i \in \{1,2,\dots,n_c \},
\end{equation}
where $e_i$ is column $i$ of a $n_c \times n_c$ identity matrix, then the ellipsoidal set $\mathcal{P}_0= \{ x| \textbf{ } x^T P_0 x \leq 1 \}$ where $P_0=S^{-1}_0$ is robust periodic invariant with period $N$ for the system $\mathcal{S}$, with the static feedback gain $K=L G^{-1}$. Moreover, if $x_0 \in \mathcal{P}_0$, then the static gain $K=L G^{-1}$ results in the robust exponential stability and constraint satisfaction of the closed-loop system $\mathcal{S}$.
\end{theorem}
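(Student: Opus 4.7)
The plan is to decompose the statement into three separate claims and handle them sequentially: (i) exponential stability of $\mathcal{S}$ under $K=LG^{-1}$; (ii) each tree-node ellipsoid $\mathcal{P}^j_t$ is contained in the state-input constraint set $\mathbb{X}$; (iii) robust periodic invariance of $\mathcal{P}_0$, i.e., $x_0\in\mathcal{P}_0 \Rightarrow x_t$ satisfies \eqref{Constraints_eqn} for $t=0,\dots,N-1$ and $x_N\in\mathcal{P}_0$. Claim (i) is immediate: the hypotheses include exactly the LMIs \eqref{LMI_1_static}–\eqref{LMI_2_static}, so Theorem \ref{Theorem_stability_static_suff} applies verbatim. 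The remaining work is therefore to translate the constraint LMIs \eqref{LMI_constraints_static_1}–\eqref{LMI_constraints_static_2} into an ellipsoidal containment statement and then to propagate the initial condition $x_0\in\mathcal{P}_0$ along the scenario tree.

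For claim (ii), I would apply a Schur complement to \eqref{LMI_constraints_static_1} and use the standard linearization trick. Since $S^j_t>0$, the inequality $(G-S^j_t)^TS^{j^{-1}}_t(G-S^j_t)\ge 0$ gives $G^T+G-S^j_t\le G^TS^{j^{-1}}_t G$, so $(G^T+G-S^j_t)^{-1}\ge G^{-1}S^j_t G^{-T}$ on the image of $G$. Combined with Schur on \eqref{LMI_constraints_static_1}, this yields
\begin{equation*}
H^j_t \;\ge\; (FG+EL)(G^T+G-S^j_t)^{-1}(G^TF^T+L^TE^T) \;\ge\; (F+EK)S^j_t(F+EK)^T,
\end{equation*}
using $K=LG^{-1}$. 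Taking the $i$-th diagonal entry and invoking \eqref{LMI_constraints_static_2} gives $e_i^T(F+EK)P^{j^{-1}}_t(F+EK)^Te_i\le 1$, which is the dual form of the classical ellipsoidal containment $\mathcal{P}^j_t\subseteq\{x:e_i^T(F+EK)x\le 1\}$; intersecting over $i$ yields $\mathcal{P}^j_t\subseteq\mathbb{X}$ for every $(j,t)\in I_{\llbracket 0,N-1\rrbracket}$.

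For claim (iii), I would first observe that by Lemma \ref{prop_suff_FSLF_K_fixed} the matrices $P^j_t=S^{j^{-1}}_t$ satisfy \eqref{P_static_1}–\eqref{P_static_2}. Using the tree dynamics \eqref{propagation_eqn_1}, these inequalities give $x^{j^T}_{t+1}P^j_{t+1}x^j_{t+1}<x^{p(j)^T}_tP^{p(j)}_tx^{p(j)}_t$ for every scenario. Starting from $x^1_0=x_0\in\mathcal{P}_0$ (so $x_0^TP_0x_0\le 1$) and iterating along each path shows by induction that $x^j_t\in\mathcal{P}^j_t$ for all $(j,t)\in I_{\llbracket 0,N-1\rrbracket}$, hence by (ii) also $x^j_t\in\mathbb{X}$. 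Then Lemma \ref{convexity_lemma}.A writes the true plant state as $x_t=\sum_j\beta^j_tx^j_t$ with $\beta^j_t\ge 0$, $\sum_j\beta^j_t=1$, so convexity of $\mathbb{X}$ gives $(F+EK)x_t=\sum_j\beta^j_t(F+EK)x^j_t\le\mathbf{1}$, proving constraint satisfaction for $t=0,\dots,N-1$. Applying \eqref{P_static_2} at the final stage yields $x^{j^T}_NP_0x^j_N<1$, and since $x^TP_0x$ is convex, $x_N^TP_0x_N\le\sum_j\beta^j_Nx^{j^T}_NP_0x^j_N<1$, so $x_N\in\mathcal{P}_0$. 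This establishes robust periodic invariance per Definition \ref{Definition_FPI}. The ``moreover'' clause on exponential stability and sustained constraint satisfaction follows by iterating the argument period-by-period, since $x_N\in\mathcal{P}_0$ serves as the new initial condition.

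The main obstacle is the constraint-LMI step: the factor $G^T+G-S^j_t$ is not itself a Gram-type expression, and the Schur complement only yields the desired ellipsoidal containment after invoking the linearization inequality $G^T+G-S^j_t\le G^TS^{j^{-1}}_t G$ and carefully cancelling the $G$'s using $K=LG^{-1}$. Once that algebraic bridge is in place, the rest of the proof is a clean induction along the scenario tree combined with the convexity statement of Lemma \ref{convexity_lemma}.
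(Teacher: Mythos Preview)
Your proof is correct and follows essentially the same route as the paper's own argument: invoke Theorem~\ref{Theorem_stability_static_suff} for stability, use the linearization bound $G^T+G-S^j_t\le G^TP^j_tG$ together with a Schur-type manipulation to extract $\mathcal{P}^j_t\subset\mathbb{X}$ from \eqref{LMI_constraints_static_1}--\eqref{LMI_constraints_static_2}, and then propagate along the tree via \eqref{P_static_1}--\eqref{P_static_2} and Lemma~\ref{convexity_lemma}. The only cosmetic differences are that the paper reaches $(F+EK)S^j_t(F+EK)^T\le H^j_t$ by a congruence transformation of the full $2\times2$ block (multiplying by $\mathrm{diag}(\mathbf{I},G^T)$) rather than by taking the Schur complement first and then applying operator monotonicity of the inverse, and that the paper concludes $x_N\in\mathcal{P}_0$ via Lemma~\ref{convexity_lemma}.B rather than your direct Jensen argument on the convex quadratic; both routes are equivalent.
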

\begin{proof}
    See Appendix \ref{Appendix_proofs}.
\end{proof}
Based on Theorem \ref{Theorem_FPI_Static_K}, we can solve the following convex program to maximize the volume of the robust periodic invariant set $\mathcal{P}_0$ (which can be achieved by minimizing the convex function $-log(det(S_0))$ \cite{Boyd:94})
\begin{align}\label{optimization_K}
 \underset{G, L, H^j_t, S^j_t \forall (j,t)\in I_{\llbracket 0,N-1\rrbracket} }{\text{minimize         }-log(det(S_0))}   & &\text{subject to: }\eqref{LMI_1_static}, \eqref{LMI_2_static}  , \eqref{LMI_constraints_static_1}, \eqref{LMI_constraints_static_2}. 
\end{align}

Note that among all the ellipsoidal sets $\mathcal{P}^j_t$, $\forall (j,t) \in I_{\llbracket 0,N-1\rrbracket}$, only the ellipsoidal set $\mathcal{P}_0$ is robust periodic invariant. Nevertheless, the following result can be stated.
Define the sets 
\begin{equation*}
\bar{\mathbb{P}}_t=Co(\{ \mathcal{P}^j_t \}, \textbf{ }j \in \{1,2,\dots,n_d^{t} \}),
\end{equation*}
$\forall t \in \{0,1,\dots,N-1 \}$.
\begin{corollary}\label{FPI_corollary_static}
The sets $\bar{\mathbb{P}}_t$, $t \in \{0,1,\dots,N-1 \}$ are robust periodic invariant for the closed-loop $\mathcal{S}$ controlled by a static controller determined as per Theorem \ref{Theorem_FPI_Static_K}.
\end{corollary}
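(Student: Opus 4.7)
The plan is to lift the one-step ellipsoidal inclusions that are already encoded in the LMIs of Theorem \ref{Theorem_FPI_Static_K} to the convex hulls $\bar{\mathbb{P}}_t$ through two applications of convexity — first over the decomposition of a point in $\bar{\mathbb{P}}_t$ as a convex combination of points in the ellipsoids $\mathcal{P}^j_t$, and second over the convex decomposition of $(A_t,B_t)$ into the vertices of $\mathbf{D}$ — and then iterate for $N$ steps. The three ingredients I will reuse from Theorem \ref{Theorem_FPI_Static_K} are: (a) the vertex inclusion $(\bar{A}_{i^j_{t+1}} + \bar{B}_{i^j_{t+1}} K)\mathcal{P}^{p(j)}_t \subseteq \mathcal{P}^j_{t+1}$ for every $(j,t+1) \in I_{\llbracket 1, N-1\rrbracket}$, which is the restatement of \eqref{LMI_1_static} via Lemma \ref{prop_suff_FSLF_K_fixed}; (b) the terminal inclusion $(\bar{A}_{i^j_N} + \bar{B}_{i^j_N} K)\mathcal{P}^{p(j)}_{N-1} \subseteq \mathcal{P}_0$ for every $(j,N) \in I_N$, from \eqref{LMI_2_static}; and (c) the constraint inclusion $\mathcal{P}^j_t \subseteq \mathbb{X}$ for every $(j,t) \in I_{\llbracket 0, N-1\rrbracket}$, which is the standard Schur-complement consequence of \eqref{LMI_constraints_static_1}--\eqref{LMI_constraints_static_2} already used inside the proof of Theorem \ref{Theorem_FPI_Static_K}.

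The main step is to establish the one-step inclusion $(A_t + B_t K)\bar{\mathbb{P}}_t \subseteq \bar{\mathbb{P}}_{t+1}$ for every $(A_t,B_t)\in \mathbf{D}$ and $t \in \{0, \ldots, N-2\}$, together with its terminal counterpart $(A_{N-1} + B_{N-1} K)\bar{\mathbb{P}}_{N-1} \subseteq \mathcal{P}_0 = \bar{\mathbb{P}}_0$ (noting $n_d^0 = 1$, so $\bar{\mathbb{P}}_0 = \mathcal{P}^1_0 = \mathcal{P}_0$). For this, take an arbitrary $x \in \bar{\mathbb{P}}_t$ and write $x = \sum_{k=1}^{n_d^t}\beta_k x^k$ with $x^k \in \mathcal{P}^k_t$, $\beta_k \geq 0$, $\sum_k \beta_k = 1$, and decompose $(A_t, B_t) = \sum_{i=1}^{n_d}\alpha_{t,i}(\bar{A}_i, \bar{B}_i)$ as in Assumption \ref{Assumption_1}. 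Then
\begin{equation*}
(A_t + B_t K) x = \sum_{k=1}^{n_d^t}\sum_{i=1}^{n_d} \beta_k \alpha_{t,i}(\bar{A}_i + \bar{B}_i K) x^k,
\end{equation*}
where each summand $(\bar{A}_i + \bar{B}_i K) x^k$ lies in $\mathcal{P}^{c(k,i)}_{t+1}$ by (a) (or in $\mathcal{P}_0$ by (b) when $t+1 = N$), with $c(k,i)$ the unique child of node $(k,t)$ indexed by vertex $i$ in \eqref{dynamics_tree_1}. Since $\sum_{k,i}\beta_k \alpha_{t,i} = 1$, the right-hand side is a convex combination of points in $\bar{\mathbb{P}}_{t+1}$ (resp.\ $\mathcal{P}_0$), both of which are convex, giving the desired inclusion.

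Iterating this one-step inclusion around the full cycle, and using the terminal case to wrap from index $N-1$ back to $0$, shows that starting from any $x_0 \in \bar{\mathbb{P}}_{t_0}$ and under any admissible $\mathbf{d}^N \in \mathbf{D}^N$, the closed-loop state satisfies $x_k \in \bar{\mathbb{P}}_{(t_0+k)\bmod N}$ for $k = 0, 1, \ldots, N$; in particular $x_N \in \bar{\mathbb{P}}_{t_0}$, which is the return property in Definition \ref{Definition_FPI}. Constraint satisfaction along the trajectory follows from (c) together with the convexity of $\mathbb{X}$, which together yield $\bar{\mathbb{P}}_k \subseteq \mathbb{X}$ for each $k$. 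I do not foresee any serious obstacle: the argument is essentially a two-level convex lift of the scenario-tree inclusions already proven in Theorem \ref{Theorem_FPI_Static_K}, and the only points warranting mild care are the well-definedness of the child-vertex map $c(k,i)$ (immediate from \eqref{dynamics_tree_1}) and the modular wrap at $t = N-1$, which is precisely the function of the terminal LMI \eqref{LMI_2_static}.
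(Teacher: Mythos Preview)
Your proposal is correct and follows essentially the same approach as the paper's own proof: both decompose a point of $\bar{\mathbb{P}}_t$ as a convex combination of points in the $\mathcal{P}^j_t$, expand $(A_t,B_t)$ over the vertices of $\mathbf{D}$, invoke \eqref{P_static_1}--\eqref{P_static_2} to land each term in the appropriate child ellipsoid (or in $\mathcal{P}_0$ at the final step), and use convexity of $\mathbb{X}$ together with $\mathcal{P}^j_t\subset\mathbb{X}$ for constraint satisfaction. Your write-up is slightly more explicit about iterating around the full cycle from a general starting phase $t_0$, but the argument is identical in substance.
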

\begin{proof}
    See Appendix \ref{Appendix_proofs}.
\end{proof}
 \subsection{Periodic Invariant Sets Using LITPC}\label{section_FPI_memory}
 In this section, we propose conditions for obtaining robust periodic invariant sets using a LITPC (see Definition \ref{Def_Memory_controller} and Algorithm \ref{algorithm_controller_periodic})  from section \ref{subsection_memory}. Considering the first $N$ time steps, the control law is $\kappa_{N}(t,\textbf{x}^{t}_{0})=\sum^{n_d^t}_{j=1}\beta^j_{t} K^j_t x^j_t$, where $x_t=\sum^{n_d^t}_{j=1}\beta^j_t x^j_t$, $t \in \{0,1,\dots,N-1 \}$. 

Define the ellipsoidal sets $\mathcal{P}^j_t=\{x| \textbf{ }x^T P^j_t x \leq 1\}$, $(j,t) \in I_{\llbracket 0,N-1 \rrbracket}$, where $P_0=P^1_0$ and $P^j_t=S^{j^{-1}}_t$. 

Define $ \mathbb{X}^j_t= \{x | \textbf{ } (F+EK^j_t)x \leq \textbf{1} \}$, $\forall (j,t)\in I_{\llbracket 0,N-1 \rrbracket}$.

It will be shown that if the closed-loop system $\mathcal{S}_\mathcal{D}$ satisfies the mixed state and input constraints, then the closed-loop system $\mathcal{S}$ satisfies the mixed state and input constraints.
\begin{theorem} \label{Theorem_FPI_periodic_K}
    If there exists $N \in \mathbb{Z}_{\geq 1}$, matrices $L^j_t\in \mathbb{R}^{n_u \times n_x}$, $(j,t)\in I_{\llbracket 0,N-1 \rrbracket}$, and symmetric positive definite matrices $S^j_t \in \mathbb{R}^{n_x \times n_x}$, $(j,t)\in I_{\llbracket 0,N-1 \rrbracket}$, where $S^1_0=S_0$ such that \eqref{LMI_1_diff_K} and \eqref{LMI_2_diff_K} hold, and symmetric matrices $H^j_t \in \mathbb{R}^{n_c \times n_c}$ such that 
\begin{equation}\label{LMI_constraints_periodic_1}
\begin{pmatrix}
H^j_t &   & F S^j_t+E L^j_t\\
 S^{j^T}_t F^T+ L^{j^T}_t E^T &  &  S^j_t
\end{pmatrix}\geq 0 
\end{equation}
$\forall (j,t)\in I_{\llbracket 0,N-1 \rrbracket}$,  where for each $H^j_t$, $(j,t)\in I_{\llbracket 0,N-1 \rrbracket}$    
\begin{equation}\label{LMI_constraints_periodic_2}
e_i^T H^j_t e_i \leq 1 \textbf{ } \textbf{ } \forall i \in \{1,2,\dots,n_c \},
\end{equation}
where $e_i$ is column $i$ of a $n_c \times n_c$ identity matrix, then the ellipsoidal set $\mathcal{P}_0= \{ x| \textbf{ } x^T P_0 x \leq 1 \}$ where $P_0=S^{-1}_0$ is robust periodic invariant with period $N$ for the system $\mathcal{S}$ when using the LITPC of period $N$, which has the gains $K^j_t=L_t^j S^{j^{-1}}_t$. Moreover, if $x_0 \in \mathcal{P}_0$ then for the closed-loop system $\mathcal{S}$, the LITPC results in robust exponential stability and constraint satisfaction for $\mathcal{S}$.
\end{theorem}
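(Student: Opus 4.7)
The plan is to combine the Lyapunov descent already guaranteed by Theorem~\ref{suff_theorem_differnt_gains} with a Schur-complement argument that converts the constraint LMIs \eqref{LMI_constraints_periodic_1}--\eqref{LMI_constraints_periodic_2} into per-node ellipsoidal containments $\mathcal{P}^j_t \subseteq \mathbb{X}^j_t$, and then to lift everything from the switched system $\mathcal{S}_{\mathcal{D}}$ to the actual uncertain system $\mathcal{S}$ using the convexity identity from Lemma~\ref{convexity_lemma}.

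First, I would invoke Theorem~\ref{suff_theorem_differnt_gains}: the LMIs \eqref{LMI_1_diff_K}--\eqref{LMI_2_diff_K} imply that the gains $K^j_t = L^j_t S^{j^{-1}}_t$ make $x^{p(j)^T}_t P^{p(j)}_t x^{p(j)}_t > x^{j^T}_{t+1} P^j_{t+1} x^j_{t+1}$ along every branch of the tree and $x^{p(j)^T}_{N-1} P^{p(j)}_{N-1} x^{p(j)}_{N-1} > x^{j^T}_N P_0 x^j_N$ at the leaves, with $P^j_t = S^{j^{-1}}_t$. Chaining these descent inequalities from the root to any node $(j,t)$, one obtains $x^{j^T}_t P^j_t x^j_t \leq x_0^T P_0 x_0 \leq 1$ whenever $x_0 \in \mathcal{P}_0$, so every tree node state lies in its own ellipsoid, $x^j_t \in \mathcal{P}^j_t$, and in particular $x^j_N \in \mathcal{P}_0$. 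By convexity of $\mathcal{P}_0$ and Lemma~\ref{convexity_lemma}.A, the actual state $x_N = \sum_j \beta^j_N x^j_N$ is in $\mathcal{P}_0$, which gives periodic invariance of $\mathcal{P}_0$ for $\mathcal{S}$.

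The constraint satisfaction is the main technical step. I would apply the Schur complement to \eqref{LMI_constraints_periodic_1} (using $S^j_t > 0$) to obtain
\begin{equation*}
H^j_t \geq (FS^j_t + EL^j_t) S^{j^{-1}}_t (FS^j_t + EL^j_t)^T = (F+EK^j_t)\, S^j_t\, (F+EK^j_t)^T,
\end{equation*}
so by \eqref{LMI_constraints_periodic_2} the diagonal entries satisfy $e_i^T (F+EK^j_t) S^j_t (F+EK^j_t)^T e_i \leq 1$ for every row $i$. Using the standard ellipsoid support formula $\max_{x^T P^j_t x \leq 1} a^T x = \sqrt{a^T S^j_t a}$ with $a^T$ equal to the $i$-th row of $F+EK^j_t$, this is exactly the statement that $\mathcal{P}^j_t \subseteq \mathbb{X}^j_t$, i.e., $(F+EK^j_t) x^j_t \leq \mathbf{1}$ for every node state $x^j_t \in \mathcal{P}^j_t$. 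Since Algorithm~\ref{algorithm_controller_periodic} applies $u_t = \sum_j \beta^j_t K^j_t x^j_t$ to a state $x_t = \sum_j \beta^j_t x^j_t$ with $\beta^j_t \geq 0$, $\sum_j \beta^j_t = 1$,
\begin{equation*}
F x_t + E u_t = \sum_{j=1}^{n_d^t} \beta^j_t\, (F+EK^j_t)\, x^j_t \;\leq\; \sum_{j=1}^{n_d^t} \beta^j_t\, \mathbf{1} \;=\; \mathbf{1},
\end{equation*}
which yields \eqref{Constraints_eqn} for all $t \in \{0,1,\dots,N-1\}$ and, by periodicity from Definition~\ref{Definition_FPI}, for all $t \in \mathbb{Z}_{\geq 0}$.

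Finally, robust exponential stability is immediate from the FSLF structure: the strict descent $x^{j^T}_N P_0 x^j_N < x_0^T P_0 x_0$ together with convexity of $x \mapsto x^T P_0 x$ and Lemma~\ref{convexity_lemma}.A gives $x_N^T P_0 x_N < x_0^T P_0 x_0$ for $\mathcal{S}$, so $V(x) = x^T P_0 x$ is a FSLF of period $N$ for the closed loop, and the conclusion follows from the appendix results (Theorem~\ref{suff_theorem_fixed_K} and Corollary~\ref{corollary_necc_suff_imp}); Lemma~\ref{convexity_lemma}.B can be used to phrase this transfer cleanly from $\mathcal{S}_{\mathcal{D}}$ to $\mathcal{S}$. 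The only delicate point, and what I expect to be the main obstacle to articulate carefully, is the passage from tree-level to closed-loop constraint satisfaction: one needs the containments $\mathcal{P}^j_t \subseteq \mathbb{X}^j_t$ to be \emph{node-wise} (different half-spaces for different $j$), and the matching convex combination in the input to ensure that the row vectors $F+EK^j_t$ pair correctly with the states $x^j_t$ in the aggregated inequality.
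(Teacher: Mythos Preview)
Your proposal is correct and follows essentially the same approach as the paper: Schur complement on \eqref{LMI_constraints_periodic_1} to obtain $H^j_t \geq (F+EK^j_t)S^j_t(F+EK^j_t)^T$, the ellipsoid/half-space containment $\mathcal{P}^j_t \subseteq \mathbb{X}^j_t$, the convex-combination identity $Fx_t+Eu_t=\sum_j\beta^j_t(F+EK^j_t)x^j_t$, and Lemma~\ref{convexity_lemma} together with Theorem~\ref{suff_theorem_differnt_gains} for invariance and stability. The only cosmetic difference is that the paper cites an external result (Theorem~2.9 in Kouvaritakis--Cannon) for the containment $\mathcal{P}^j_t \subseteq \mathbb{X}^j_t$, whereas you supply the support-function argument explicitly.
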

\begin{proof}
    See Appendix \ref{Appendix_proofs}.
\end{proof}

We can therefore solve the following convex program to maximize the volume of the robust periodic invariant set $\mathcal{P}_0$:
\begin{align}\label{optimization_differnt_K}
 \underset{L^j_t, H^j_t, S^j_t \forall (j,t)\in I_{\llbracket 0,N-1\rrbracket} }{\text{minimize         }-log(det(S_0))}   & &\text{subject to: }\eqref{LMI_1_diff_K}, \eqref{LMI_2_diff_K}  , \eqref{LMI_constraints_periodic_1}, \eqref{LMI_constraints_periodic_2}. 
\end{align}

Define the sets 
\begin{equation*}
\bar{\mathbb{P}}_t=Co(\{ \mathcal{P}^j_t \}, \textbf{ }j \in \{1,2,\dots,n_d^{t} \}),
\end{equation*}
$\forall t \in \{0,1,\dots,N-1 \}$.
Similar to the case of static controllers the following result holds. 
\begin{corollary}\label{Corollary_FPI_diff_K}
The sets $\bar{\mathbb{P}}_t$, $t \in \{0,1,\dots,N-1 \}$ are robust periodic invariant for the closed-loop $\mathcal{S}$ controlled by a LITPC for which the gains are specified in Theorem \ref{Theorem_FPI_periodic_K}.
\end{corollary}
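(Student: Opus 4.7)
The plan is to reduce the corollary to the results already in place — namely Theorem~\ref{Theorem_FPI_periodic_K} (which handles the root ellipsoid $\mathcal{P}_0$) and Lemma~\ref{convexity_lemma}(A) (which relates the actual state of $\mathcal{S}$ to the simulated tree nodes of $\mathcal{S}_{\mathcal{D}}$). The goal decomposes into three items: if $x_0 \in \bar{\mathbb{P}}_0 = \mathcal{P}_0$, then (i) the actual state satisfies $x_t \in \bar{\mathbb{P}}_t$ for each $t \in \{0,1,\dots,N-1\}$, (ii) the pair $(x_t,u_t)$ generated by the LITPC satisfies \eqref{Constraints_eqn} on the same range, and (iii) $x_N \in \bar{\mathbb{P}}_0 = \mathcal{P}_0$, so the periodic cycle restarts.

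First I would establish node-by-node ellipsoidal containment along the simulated tree. Starting from $x^1_0 = x_0$ with $x^{1^T}_0 P_0 x^1_0 \le 1$, the Lyapunov descent \eqref{P_memory_1}–\eqref{P_memory_2} (which is exactly the content of the LMIs \eqref{LMI_1_diff_K}–\eqref{LMI_2_diff_K} after the usual Schur complement / congruence transformation already used in the proof of Theorem~\ref{Theorem_FPI_periodic_K}) yields by induction $x^{j^T}_t P^j_t x^j_t \le 1$ for every $(j,t)\in I_{\llbracket 0,N-1\rrbracket}$, i.e., $x^j_t \in \mathcal{P}^j_t$. Then Lemma~\ref{convexity_lemma}(A) supplies the convex combination $x_t = \sum_{j} \beta^j_t x^j_t$ with $\beta^j_t \ge 0$ and $\sum_j \beta^j_t = 1$, so $x_t \in Co(\{\mathcal{P}^j_t\}_{j}) = \bar{\mathbb{P}}_t$. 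This proves (i). For (iii), the same Lyapunov step, applied across stage $N$ using \eqref{P_memory_2}, forces each leaf $x^j_N$ into $\mathcal{P}_0$; since $\mathcal{P}_0$ itself is convex, the convex combination $x_N = \sum_j \beta^j_N x^j_N$ remains in $\mathcal{P}_0$.

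For the constraint step (ii), the LMIs \eqref{LMI_constraints_periodic_1}–\eqref{LMI_constraints_periodic_2} give $\mathcal{P}^j_t \subseteq \mathbb{X}^j_t$, i.e., $(F + E K^j_t) x^j_t \le \mathbf{1}$ at every tree node. The input applied by Algorithm~\ref{algorithm_controller_periodic} is the interpolation $u_t = \sum_j \beta^j_t K^j_t x^j_t$, so by linearity $F x_t + E u_t = \sum_j \beta^j_t (F + E K^j_t) x^j_t \le \sum_j \beta^j_t \mathbf{1} = \mathbf{1}$, where I used the non-negativity of the $\beta^j_t$ and that they sum to one. This is exactly \eqref{Constraints_eqn}.

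There is no genuine obstacle here — the corollary is a direct lift of Theorem~\ref{Theorem_FPI_periodic_K} from the root ellipsoid to the whole tube of stage-wise convex hulls. The only point worth stating carefully is the \emph{lifting step}: although $\bar{\mathbb{P}}_t$ is generally not itself an ellipsoid for $t \ge 1$, its convex-hull structure is precisely matched by the convex-combination representation provided by Lemma~\ref{convexity_lemma}, which transfers the node-wise properties (ellipsoidal containment and half-space constraint satisfaction) to the actual closed-loop state and input of $\mathcal{S}$. Robust exponential stability then inherits from Theorem~\ref{Theorem_FPI_periodic_K} applied at the period boundaries $t = kN$, closing the argument.
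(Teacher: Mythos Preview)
Your argument establishes only the periodic invariance of $\bar{\mathbb{P}}_0=\mathcal{P}_0$, which is already Theorem~\ref{Theorem_FPI_periodic_K}; it does not prove periodic invariance of $\bar{\mathbb{P}}_t$ for $t\ge 1$. By starting from $x_0\in\mathcal{P}_0$ and invoking Lemma~\ref{convexity_lemma}(A), you show that the \emph{reachable set} from $\mathcal{P}_0$ at step $t$ is contained in $\bar{\mathbb{P}}_t$. That is a weaker statement: $\bar{\mathbb{P}}_t$ can (and typically does) contain points that are not of the form $\sum_j\beta^j_t x^j_t$ with the $x^j_t$ arising from a tree simulation rooted at some $x^1_0\in\mathcal{P}_0$. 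For such points your proof says nothing about where the closed loop goes next or whether the constraints hold, so the claim ``$\bar{\mathbb{P}}_t$ is robust periodic invariant'' is left unproved for $t\ge 1$.

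The paper closes this gap by working one step at a time from an \emph{arbitrary} point of $\bar{\mathbb{P}}_t$: pick any decomposition $x_t=\sum_j\beta^j_t x^j_t$ with $x^j_t\in\mathcal{P}^j_t$ (which exists by definition of the convex hull, not by Lemma~\ref{convexity_lemma}), apply the interpolated input $u_t=\sum_j\beta^j_t K^j_t x^j_t$, and use \eqref{P_memory_1}--\eqref{P_memory_2} to land each $(\bar A_i+\bar B_i K^j_t)x^j_t$ in the appropriate child ellipsoid. This gives $x_{t+1}\in\bar{\mathbb{P}}_{t+1}$ (respectively $x_N\in\mathcal{P}_0$), and chaining the step $\bar{\mathbb{P}}_t\to\bar{\mathbb{P}}_{t+1}\to\cdots\to\mathcal{P}_0\to\bar{\mathbb{P}}_1\to\cdots\to\bar{\mathbb{P}}_t$ yields the full periodic invariance of each $\bar{\mathbb{P}}_t$. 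Your constraint argument (ii) is fine and matches the paper; the missing ingredient is precisely replacing the appeal to Lemma~\ref{convexity_lemma}(A) by the direct convex-hull decomposition of an arbitrary $x_t\in\bar{\mathbb{P}}_t$.
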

\begin{proof}
    See Appendix \ref{Appendix_proofs}.
\end{proof}
The robust periodic invariant sets in \cite{canon_inv_2} are explained in Figure \ref{Fig_ill_per_inv}. On the other hand, the robust periodic invariant sets proposed in this section are illustrated in Figure \ref{new_Fig_ill_per_inv}.
\begin{figure}[!h]
	\begin{center}
		\includegraphics[width=1\columnwidth]{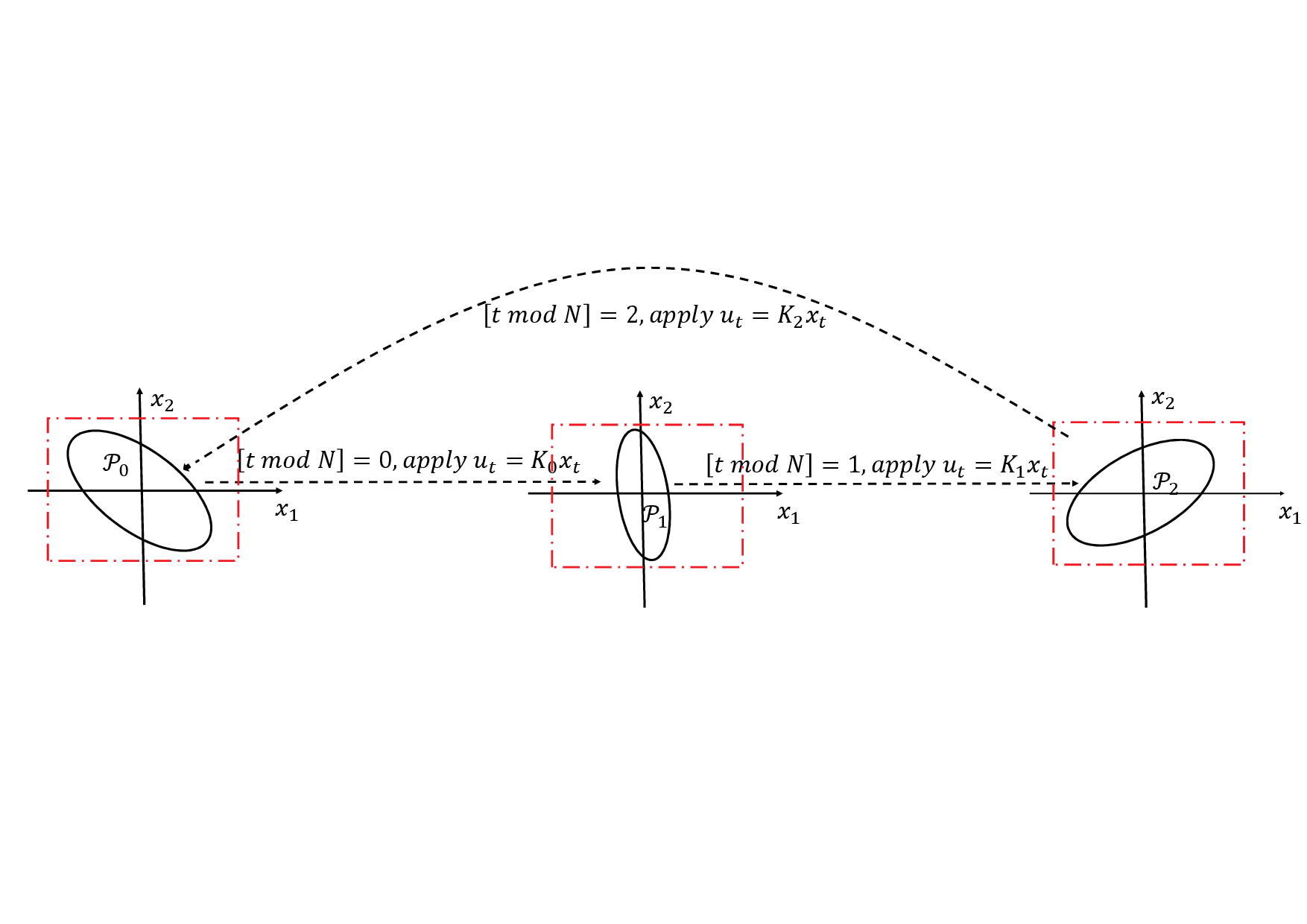} 
		\caption{The figure shows a sketch of the ellipsoidal robust periodic invariance that results from the approach in \cite{canon_inv_1,canon_inv_2}, with $N=3$. The constraints are shown as dashed-dotted red boxes. If the state is inside the ellipsoid $\mathcal{P}_0$, applying $u_t=K_0x_t$  guarantees that the next state is inside $\mathcal{P}_1$, $\forall (A_t,B_t)\in \mathbf{D}$. At the next time step, $u_t=K_1x_t$ is applied to the plant which guarantees that the state at the next time step will be inside $\mathcal{P}_2$. At the next time step $u_t=K_2x_t$ moves the state back again to $\mathcal{P}_0$.}
		\label{Fig_ill_per_inv}  
	\end{center}
\end{figure} 
\begin{figure}[!h]
	\begin{center}
		\includegraphics[width=1\columnwidth]{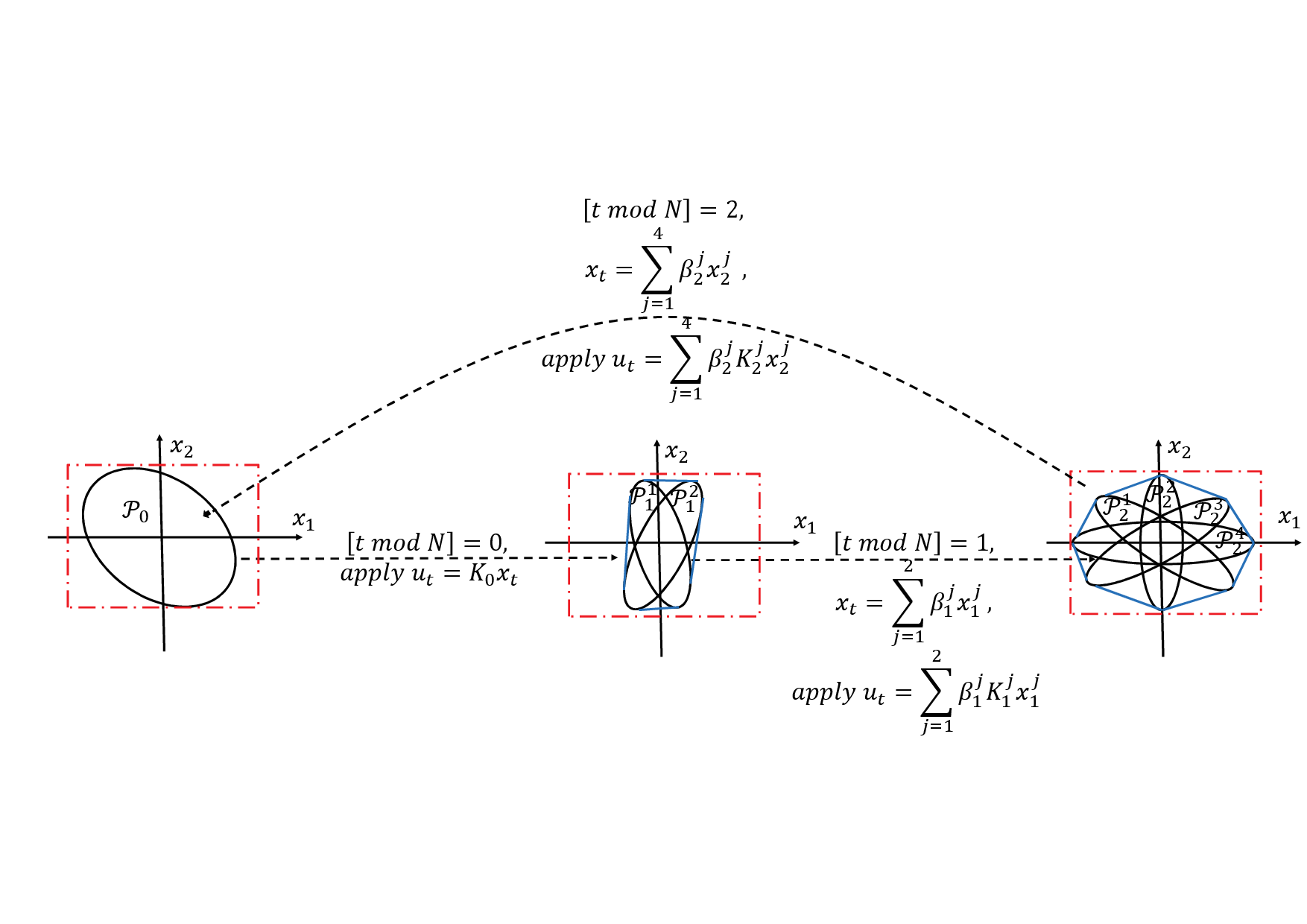} 
		\caption{The figure shows a sketch of the ellipsoidal robust periodic invariance resulting that results from the application of the LITPC with $N=3$ for an uncertain system with $n_d=2$. The constraints are shown by dashed-dotted red boxes. If the state is inside the ellipsoid $\mathcal{P}_0$, applying $u_t=K_0x_t$  guarantees that the next state is inside the convex hull of $\mathcal{P}^1_1$ and $\mathcal{P}^2_1$, $\forall (A_t,B_t)\in \mathbf{D}$. At the next time step, determining vectors $x^j_1 \in \mathcal{P}^j_1$ and scalars $\beta^j_1\geq0$ such that $\sum^2_{j=1}\beta^j_1=1$ and $x_t=\sum^2_{j=1}\beta^j_1x^j_1$, and applying $u_t=\sum^2_{j=1}\beta^j_1K^j_1x^j_1$ to the plant guarantees that the state at the next time step will be inside the convex hull of $\mathcal{P}^j_2$, for $j=\{1,2,3,4\}$. At the next time step, determining vectors $x^j_2 \in \mathcal{P}^j_2$ and scalars $\beta^j_2\geq0$ such that $\sum^4_{j=1}\beta^j_2=1$ and $x_t=\sum^4_{j=1}\beta^j_2x^j_2$, and applying $u_t=\sum^4_{j=1}\beta^j_2K^j_2x^j_2$ to the plant guarantees that the state at the next time step will be inside $\mathcal{P}_0$.}
		\label{new_Fig_ill_per_inv}  
	\end{center}
\end{figure}
\begin{Example}\label{example_sets}
Consider the system \eqref{system_dynamics} with $A_t=\begin{pmatrix}
a^1_t & 1.2 \\ a^2_t & 0.1
\end{pmatrix}$ and $B_t=\begin{pmatrix}
1\\1
\end{pmatrix}$, where $a^1_t\in [0.2,1.75]$, $a^2_t \in [-0.6,0.6]$. The state and input constraints are $\begin{pmatrix}
-1.5\\-1.5
\end{pmatrix}\leq x\leq \begin{pmatrix}
1.5\\1.5
\end{pmatrix}$, $-1\leq u \leq 1$.
We compare the size of the robust periodic invariant set $\mathcal{P}_0$ that results from the newly proposed static and periodic controllers with the size of the set that is obtained by using the controller from \cite{canon_inv_2}. The set $\mathcal{P}_0$ from each of these methods for several values of $N$ is shown in Figure \ref{Sets_P_0_all}. 
Finding a robust periodic invariant set using a static controller by solving \eqref{optimization_K} with $N=2$ results in $4.6 \%$ increase in the volume of $\mathcal{P}_0$ over the value that results from \cite{canon_inv_2} with $N=20$, while using $N=4$ results in an increase of $7.7 \%$ in the volume of $\mathcal{P}_0$ over the result from \cite{canon_inv_2} with $N=20$. Note that this increase in volume is achieved using an appealing (from a complexity point of view) static controller. Using the proposed LITPC synthesis to find a robust periodic invariant set by solving \eqref{optimization_differnt_K} with $N=2$ results in an increase of $12.5 \%$ in the volume of $\mathcal{P}_0$ over the result from \cite{canon_inv_2} with $N=20$, while using $N=4$ results in an increase of $30.5 \%$ in the volume of $\mathcal{P}_0$ over the result from \cite{canon_inv_2} with $N=20$. 

Figure \ref{Sets_P_0_1_diff_K} shows the sets $\mathcal{P}^j_t$, $\forall (j,t) \in I_{\llbracket 0,N-1\rrbracket}$ for the periodic controller that results from solving \eqref{optimization_differnt_K} with $N=2$. Since the uncertainty set $\mathcal{D}$ has four elements, we have four different ellipsoidal sets $\mathcal{P}^j_1$ at $t=1$. Note that the ellipsoidal sets $\mathcal{P}^j_1$,  $j \in \{ 1,2,3,4\}$ are not contained in the set $\mathcal{P}_0$ (which is the reason of the reduced conservatism). The set of states that can be reached from $\mathcal{P}_0$, $\forall (A,B)\in \mathbf{D}$ is a subset of the convex hull of these four ellipsoids. At $t=2$, the state is guaranteed to be inside $\mathcal{P}_0$. 
Figure \ref{Sets_P_0_1_diff_K_4} shows the resulting sets for $N=4$. Again the ellipsoidal sets $\mathcal{P}^j_t$,  $(j,t) \in I_{\llbracket 1,3\rrbracket}$ are not contained in the set $\mathcal{P}_0$.

\begin{figure}[!h]
\begin{center}
\includegraphics[width=\columnwidth]{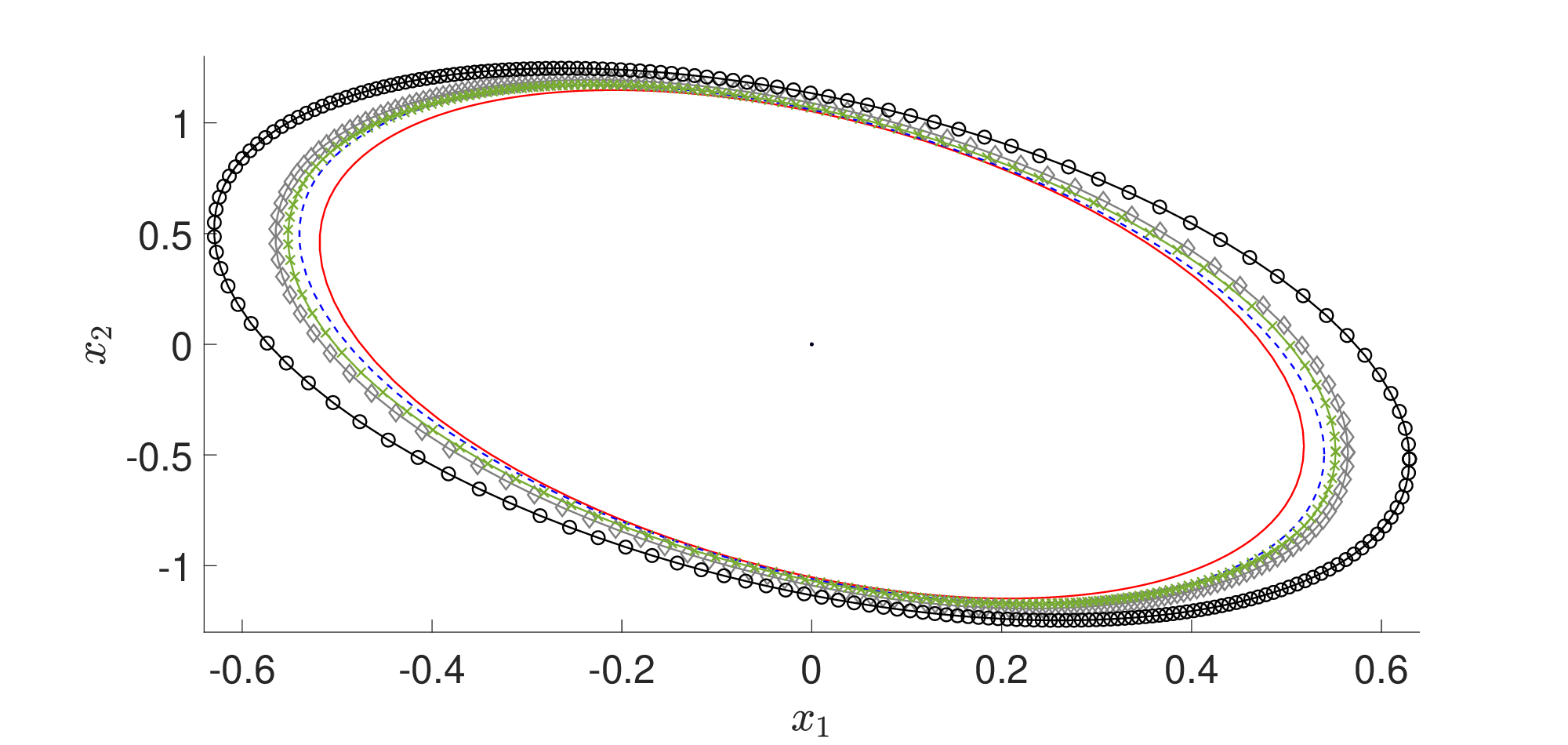} 
\caption{Example \ref{example_sets}: The sets $\mathcal{P}_0$. The set obtained by the method from \cite{canon_inv_2} with $N=20$ (solid red). Sets obtained by the proposed static controller with $N=2$ (dashed blue) and $N=4$ (cross Green). Sets obtained by the proposed LITPC with $N=2$ (diamond grey) and $N=4$ (circle black). }
\label{Sets_P_0_all}  
\end{center} 
\end{figure}
\begin{figure}[!h]
\begin{center}
\includegraphics[width=\columnwidth]{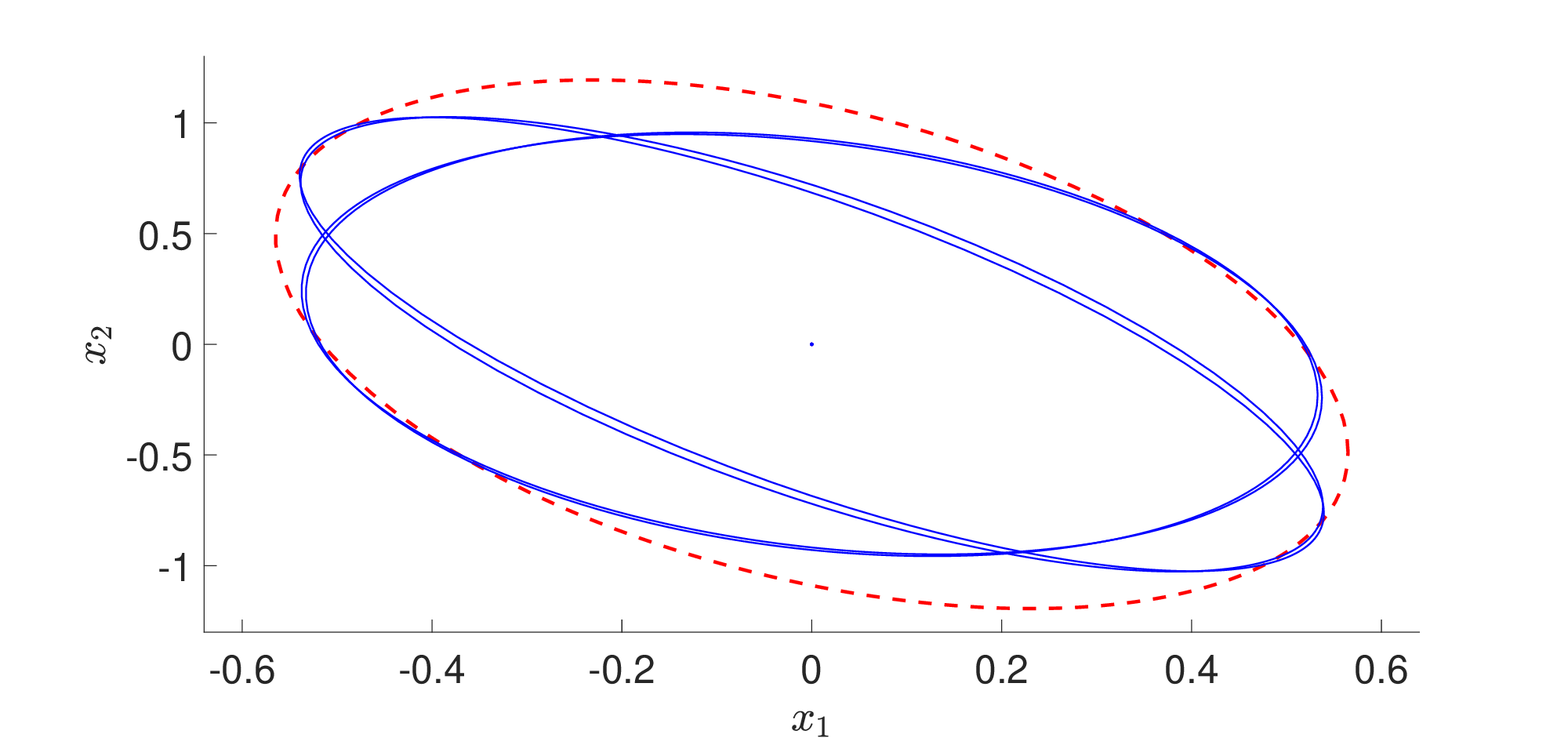}
\caption{Example \ref{example_sets}: The sets $\mathcal{P}^j_t$ resulting from solving \eqref{optimization_differnt_K} with $N=2$. The set $\mathcal{P}_0$ is shown in dashed red. The sets $\mathcal{P}^j_1$, $j \in \{ 1,2,3,4\}$ are shown in solid blue.}
\label{Sets_P_0_1_diff_K}  
\end{center} 
\end{figure}
\begin{figure}[!h]
\begin{center}
\includegraphics[width=\columnwidth]{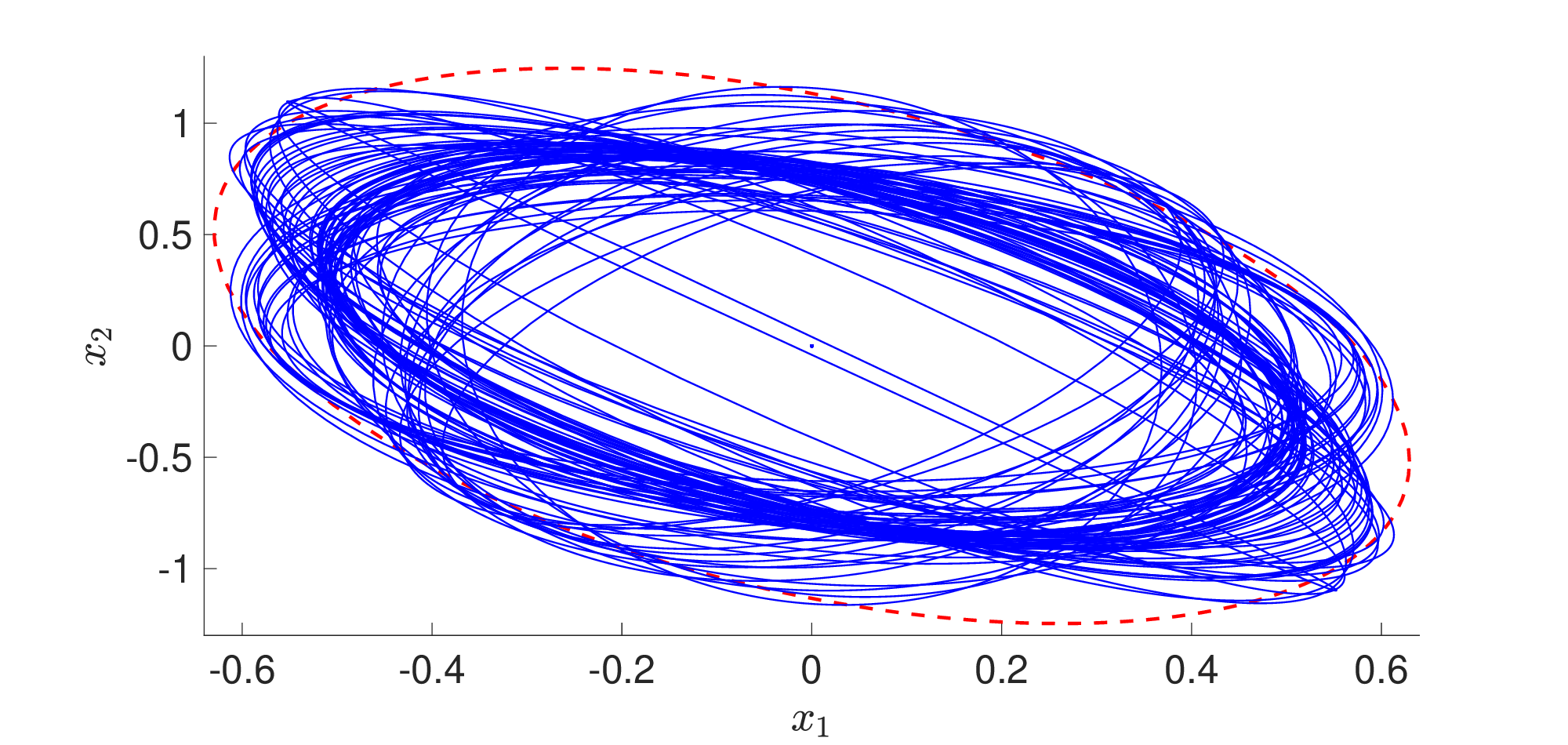}
\caption{Example \ref{example_sets}: The sets $\mathcal{P}^j_t$ resulting from solving \eqref{optimization_differnt_K} with $N=4$. The set $\mathcal{P}_0$ is shown in dashed red. The sets $\mathcal{P}^j_t$, $(j,t) \in I_{\llbracket 1,3 \rrbracket}$ are shown in solid blue.}
\label{Sets_P_0_1_diff_K_4}  
\end{center} 
\end{figure}
\end{Example}

\begin{remark}

Similar to what we have mentioned for the unconstrained case (Remark \ref{remark_comparison}), the method in \cite{canon_inv_2} may or may not provide a larger robust periodic invariant set for the same $N$ than the one that results from solving \eqref{optimization_K}. However, for the same $N$ solving \eqref{optimization_differnt_K} will always provide a robust periodic invariant set which is bigger than or equal to the one from the method in \cite{canon_inv_2}. This comes at the expense of the exponential increase in the number of LMIs in the optimization problem \eqref{optimization_differnt_K} that need to be solved offline. 
\end{remark}
\begin{remark}\label{remark_MPC_large}
    The main interest in maximizing the set $\mathcal{P}_0$ in the offline optimization \eqref{optimization_differnt_K} (or \eqref{optimization_K}) is that this set can be used as the terminal set for many existing periodic MPC formulations (i.e., MPC with cyclic prediction horizons) such as \cite{KOGEL2013809,LAZAR_MPC_periodic}. Note that for these MPC formulation the sets $\mathcal{P}^j_t$, $\forall (j,t) \in \in I_{\llbracket 1,N-1 \rrbracket}$ (i.e., the ellipsoids in solid blue in Figures \ref{Sets_P_0_1_diff_K}, \ref{Sets_P_0_1_diff_K_4}) are not needed for the MPC formulation and only $\mathcal{P}_0$ (the ellipsoid in dashed red in Figures \ref{Sets_P_0_1_diff_K}, \ref{Sets_P_0_1_diff_K_4}) is needed.
\end{remark}

\section{Conclusion}
New necessary and sufficient conditions for robust stabilization of uncertain linear systems by periodic controllers were derived by employing finite step Lyapunov functions, and novel convex criteria for obtaining robust stabilizing controllers and ellipsoidal periodic invariant sets for constrained linear systems were proposed by utilizing scenario trees along with quadratic finite step Lyapunov functions. For unconstrained systems, less conservative static controllers were obtained compared to other methods, as well as non-conservative linear interpolating tree periodic controllers. We extended these results to constrained systems and derived convex offline criteria for obtaining periodic invariant ellipsoids using both static and linear interpolating tree periodic controllers. It was demonstrated by numerical examples that the conservatism that results from our approach is reduced in comparison with existing methods. We expect that our findings will have an impact on the design of new robust MPC schemes, which will be addressed in our future work. 

\printbibliography 

\appendix
\section{Characterization for Stabilization Using Quadratic Finite Step Lyapunov Functions}\label{section_periodic_stab}
 
Assume that the system \eqref{system_dynamics} is controlled by a periodic controller of period $N \in \mathbb{Z}_{\geq 1}$. At time step $t$, the sequence of past and current states of length $k+1$ where $k=[t \mod N] \in \{0,1,\dots, N-1 \}$, which starts at the beginning of the period and ends at the current time step $t=mN+k$ where $m=[t/N]$, is denoted by 
\begin{equation*}
    \textbf{x}^{t}_{t-k}=\textbf{x}^{mN+k}_{mN}=\{ x_{mN}, x_{mN+1},\dots, x_{mN+k} \}.
\end{equation*}
For example, if $´N=3$, then 
\begin{align*}
    \text{At }t=0, &\textbf{ } \textbf{x}^{0}_{0}=\{ x_{0} \}, \\
    \text{At }t=1, &\textbf{ } \textbf{x}^{1}_{0}=\{ x_{0}, x_1 \},\\
    \text{At }t=2, &\textbf{ } \textbf{x}^{2}_{0}=\{ x_{0}, x_1,x_2 \},\\
    \text{At }t=3, &\textbf{ } \textbf{x}^{3}_{3}=\{ x_{3} \},\\
    \text{At }t=4, &\textbf{ } \textbf{x}^{4}_{3}=\{ x_3, x_4 \},\\
    \text{At }t=5, &\textbf{ } \textbf{x}^{5}_{3}=\{ x_3, x_4,x_5 \},\\
    \vdots
\end{align*}

Consider a periodic control law $\kappa_N(k,\textbf{x}^{mN+k}_{mN})$, that operates on the present as well as the past $k$ states, where $N$ is the period of the controller, $t=mN+k$ and $k=[t \mod N] \in \{0,1,\dots, N-1 \}$. The controller itself will be denoted by $\kappa_N$. For the purpose of this section, assume that this control law results in an uncertain linear closed-loop system which has periodic uncertainty sets as follows
\begin{equation}
    x_t= x_{k+mN}= \Phi_{k+mN-1}x_{mN}, \label{linear_eqn_wierd}
\end{equation}
$\forall m \in \mathbb{Z}_{\geq 0}$, $k \in \{1,2,\dots,N \}$, where $\Phi_{k+mN} \in \mathbf{D}_k^{\kappa_N} \subset \mathbb{R}^{n_x \times n_x}$, and $\mathbf{D}_k^{\kappa_N}$, $\forall k \in \{0,\dots,N-1 \}$ are compact sets. 

For example with $N=3$, we have 
\begin{align*}
    x_1=\Phi_0x_0, \text{ } & \Phi_0 \in \mathbf{D}_0^{\kappa_3}, \\
    x_2=\Phi_1x_0, \text{ } & \Phi_1 \in \mathbf{D}_1^{\kappa_3} , \\
    x_3=\Phi_2x_0, \text{ }& \Phi_2 \in \mathbf{D}_2^{\kappa_3}, \\
     x_4=\Phi_3x_3,  \text{ } &\Phi_3 \in \mathbf{D}_0^{\kappa_3}, \\
    x_5=\Phi_4x_3,  \text{ } &\Phi_4 \in \mathbf{D}_1^{\kappa_3} , \\
    x_6=\Phi_5x_3,  \text{ } &\Phi_5 \in \mathbf{D}_2^{\kappa_3},\\
    \vdots
\end{align*}

At $t=k+mN$, the matrices $\Phi_t=\Phi_{k+mN}$ depend on the realized sequence of $(A_t,B_t) \in \mathbf{D}$ in interval $t \in \{mN,mN+1, \dots, mN+k \}$ as well as the sequence of control laws in that same interval. Hence, such periodic nature of the uncertainty set of the closed-loop (the sets $\mathbf{D}_k^{\kappa_N}$) is only due to the periodic nature of the control law. The connection between \eqref{system_dynamics} and \eqref{linear_eqn_wierd} is explained in Remark \ref{remark_Phi_Imp} and the text above it. Indeed, the N-step system, i.e., $x_{(m+1)N}=\Phi_{(m+1)N-1}x_{mN}$, $\Phi_{(m+1)N-1} \in \mathbf{D}^{\kappa_N}_{N-1}$ is uncertain and linear in the traditional sense.
Let $\bar{\kappa}_N$ denote the sequence of $N$ control laws in the period, i.e.,

\begin{equation}\label{seq_control_laws}
    \bar{\kappa}_N=\{ \kappa_N(0,\cdot),\kappa_N(1,\cdot),\dots,\kappa_N(N-1,\cdot) \}.
\end{equation} 

For the stability analysis of linear discrete time uncertain systems, it was shown in \cite{Megretski,pier_switches_FSLFS}, that the existence of a FSLF is necessary and sufficient for the stability of the linear difference inclusion. Using FSLFs, we derive necessary and sufficient conditions for the stabilization of uncertain linear systems by periodic controllers in the following theorem. 

\begin{theorem}\label{suff_theorem_fixed_K}
Consider the closed-loop system \eqref{linear_eqn_wierd}.
	\begin{itemize}
	    \item [A.]  Assume a periodic controller of period $N$. If there exists a FSLF with period $N$ on $\mathbb{R}^{n_x}$ for the resulting closed-loop system, then the closed-loop system \eqref{linear_eqn_wierd} is robustly exponentially stable on $\mathbb{R}^{n_x}$.
	    \item[B.] Assume that the system is robustly exponentially stabilizable on $\mathbb{R}^{n_x}$ by a periodic controller. For any function $V$ of the form \eqref{global_FSLF_eqn}, there exists a periodic controller of period $N$ which robustly exponentially stabilizes the system on $\mathbb{R}^{n_x}$, for which $V$ is a FSLF with the same period $N$.   
	\end{itemize}
	
\end{theorem}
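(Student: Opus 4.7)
The goal is to convert the FSLF inequality $V(x_{(m+1)N})<V(x_{mN})$, which is strict but non-uniform, into a uniform contraction rate over one period, and then extend the contraction from sampling times $mN$ to all $t$. First I would homogenize: since $V(x)=x^T P_0 x$ is quadratic and the composed $N$-step map has the form $x_{mN+N}=\Psi\, x_{mN}$ with $\Psi$ ranging over the compact set $\{\Phi_{N-1}\Phi_{N-2}\cdots\Phi_0 : \Phi_k\in\mathbf{D}_k^{\kappa_N}\}$, the ratio $V(\Psi x)/V(x)$ depends only on the direction of $x$ and on $\Psi$. Restricting to the unit sphere, a continuous function on a compact set attains its maximum, so there exists $\rho\in(0,1)$ with $V(x_{(m+1)N})\le \rho\, V(x_{mN})$ for every $x_{mN}$ and every admissible disturbance. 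Iterating gives $V(x_{mN})\le \rho^m V(x_0)$, and with $\mu_{\min}\|x\|^2\le V(x)\le \mu_{\max}\|x\|^2$ for $\mu_{\min},\mu_{\max}$ the extreme eigenvalues of $P_0$, this yields $\|x_{mN}\|\le\sqrt{\mu_{\max}/\mu_{\min}}\,\rho^{m/2}\|x_0\|$. Finally, for intermediate $t=mN+k$ with $1\le k\le N-1$, uniform boundedness of $\|\Phi_{k+mN-1}\|$ on the compact set $\mathbf{D}_{k-1}^{\kappa_N}$ gives a constant $M$ with $\|x_t\|\le M\|x_{mN}\|$, and composing the bounds produces the required $\|x_t\|\le c\lambda^t\|x_0\|$ with $\lambda=\rho^{1/(2N)}$.

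\textbf{Plan for Part B.} The idea is that exponential stability with any rate implies arbitrary contraction over long horizons, and any given quadratic $V$ can only distort this by a condition-number factor. So let $V(x)=x^T P_0 x$ be given with $\mu_{\min},\mu_{\max}$ as above, and let $\bar\kappa_{N_0}$ be any periodic controller of period $N_0$ furnished by the stabilizability assumption, with constants $c\ge 1$ and $\lambda\in(0,1)$ as in Definition \ref{stability_def}. Then for every $t$,
\begin{equation*}
\frac{V(x_t)}{V(x_0)}\;\le\;\frac{\mu_{\max}}{\mu_{\min}}\,\frac{\|x_t\|^2}{\|x_0\|^2}\;\le\;\frac{\mu_{\max}}{\mu_{\min}}\,c^2\lambda^{2t}.
\end{equation*}
Choose $N$ to be any multiple of $N_0$ large enough that $(\mu_{\max}/\mu_{\min})\,c^2\lambda^{2N}<1$, and define the candidate period-$N$ controller as $N/N_0$ consecutive copies of $\bar\kappa_{N_0}$. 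This controller has period $N$ by construction, it inherits robust exponential stability from $\bar\kappa_{N_0}$, and the displayed inequality at $t=N$ yields $V(x_N)<V(x_0)$ uniformly in $x_0\neq 0$ and in the realized uncertainty sequence, so $V$ is a FSLF of period $N$ for the resulting closed-loop.

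\textbf{Main obstacle.} The sufficiency argument in Part A is essentially standard once one recognizes the homogeneity of the quadratic form; the only subtlety is verifying that the maximum of $V(\Psi x)/V(x)$ over the unit sphere and the compact matrix set is indeed strictly less than one, which follows from the strict FSLF inequality together with joint continuity and compactness. In Part B, the delicate point is conceptual rather than computational: we must be comfortable fixing $V$ arbitrarily and then choosing the period $N$ a posteriori, because a controller that stabilizes the system need not make a particular prescribed quadratic decrease in any fixed small number of steps. Tying the period $N$ to the prescribed $V$ through the ratio $\mu_{\max}/\mu_{\min}$, the constants $c,\lambda$ of the baseline stabilizer, and the natural number $N_0$ dividing $N$ is what makes the statement work as stated.
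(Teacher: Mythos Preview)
Your argument is correct and follows essentially the same strategy as the paper: for Part A, obtain a uniform contraction rate for the $N$-step map via compactness and homogeneity, then bridge intermediate times using boundedness of the one-period transition maps; for Part B, convert the exponential decay bound into a $V$-ratio estimate via the eigenvalue bounds of $P_0$ and choose $N$ large enough. One small point on Part A: in the system~\eqref{linear_eqn_wierd} the $N$-step transition is already written as $x_{(m+1)N}=\Phi_{(m+1)N-1}x_{mN}$ with $\Phi_{(m+1)N-1}\in\mathbf{D}_{N-1}^{\kappa_N}$ directly, not as a product $\Phi_{N-1}\cdots\Phi_0$; this does not affect your compactness argument, but it is worth matching the paper's notation.

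The only noteworthy difference is in Part B: you restrict $N$ to be a multiple of the base period $N_0$, so that the new controller is literally $\kappa_{N_0}$ viewed at a coarser period, and robust exponential stability is inherited for free. The paper instead takes $N$ to be the smallest integer exceeding the threshold (or $N_c$ itself if that already suffices), allowing $N$ not to be a multiple of $N_c$; the resulting truncated-and-repeated controller $\kappa_{N_s}$ is then not the same trajectory as $\kappa_{N_c}$ beyond the first period, so the paper must reinvoke Part A to recover exponential stability. Your restriction costs a slightly larger $N$ in general but yields a cleaner argument; the paper's route gives a potentially smaller period at the expense of an extra case analysis.
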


\begin{proof}

 \textbf{ }
    \begin{itemize}
    \item [A. ] The function $V$ is a quadratic Lyapunov function for the $N$-step system, therefore, the $N$-step system is robustly exponentially stable. Hence, there exists $c_0\geq1$, and $\rho \in (0,1)$ such that for any $t=mN$, $\forall m \in \mathbb{Z}_{\geq 0}$, 
    \begin{equation}
        \|x_{mN}\| \leq \quad c_0 \rho^m \| x_0\|, \label{e_3}
    \end{equation}
       $\forall x_0 \in \mathbb{R}^{n_x}$, $\forall \textbf{d}^t \in \mathbf{D}^t$. Let $\mathbb{T}=\{0,1,\dots,N-1 \}$. Let $\vartheta=max \left( 1, \underset{\Phi \in \underset{k \in \mathbb{T}}{\cup}\mathbf{D}^{\kappa_N}_k}{max}\|\Phi\| \right)$. Therefore, $\forall t=m N+k$, $m \in \mathbb{Z}_{\geq 0}$ and $\forall \textbf{d}^t \in \mathbf{D}^t$,
     \begin{equation}\label{eq_stab_final}
         \|x_t\| \overset{\eqref{linear_eqn_wierd}}{\leq}  \vartheta \| x_{mN}\|  \overset{\eqref{e_3}}{\leq}  c_0 \vartheta \rho^{\frac{t-k}{N}} \| x_0\| \leq c \lambda^{t} \| x_0\|, 
     \end{equation}
     where $c= c_0 \vartheta \rho^{-\frac{N-1}{N}} $ and $\lambda=\rho^{\frac{1}{N}}$. Clearly $c\geq 1$ and $\lambda \in (0,1)$ which completes the proof of part A.
     \item[B.] Since $P_0>0$, 
    \begin{equation}\label{upper_lower_bounds}
        c_l \| x \|^2 \leq V(x)\leq c_u \| x \|^2,
    \end{equation}
    $\forall x\in \mathbb{R}^{n_x}$, where $c_l>0$ and $c_u>0$ are the minimum and maximum eigenvalues of $P_0$. Since the system is robustly exponentially stabilizable by a periodic controller, then there exists a periodic controller $\kappa_{N_c}$ of some period $N_c$, for which the closed-loop is robustly exponentially stable. The sequence of control laws in the period is defined in \eqref{seq_control_laws}. Consider any function $V$ of the form \eqref{global_FSLF_eqn}, with $P_0>0$. Due to robust exponential stability of the closed-loop by that controller, there exists constants $c \geq 1$ and $\lambda \in (0,1)$, such that, $\forall x_0 \in \mathbb{R}^{n_x}$, $\forall \textbf{d}^t\in \mathbf{D}^t$, $\|x_t\| \leq  c \lambda^t \| x_0 \|$. Therefore, 
    \begin{equation}
        V(x_t) \leq  c^2 (\lambda^2)^t \frac{c_u}{c_l}V(x_0), \label{to_be_rep_2}
    \end{equation}
    Hence, $\forall x_0 \in \mathbb{R}^{n_x} \setminus \{ 0\}$, we have $V(x_{M_s})<V(x_0)$, $\forall \textbf{d}^{M_s}\in \mathbf{D}^{M_s}$, for any positive integer $M_s>  \frac{\log \left(\frac{c_l}{c_u c^2} \right)}{2 \log(\lambda)} $. Let $N_s$ be the smallest integer greater than $\frac{\log \left(\frac{c_l}{c_u c^2} \right)}{2 \log(\lambda)}$.
    Note that $N_s$ is not necessarily less than or equal to $N_c$ and hence we have the following two cases.
    
    \textbf{Case 1 ($N_s\leq N_c$):} In that case, $V(x_{N_c})<V(x_0)$, $\forall x_0 \in \mathbb{R}^{n_x} \setminus \{0\}$, $\forall \textbf{d}^{N_c}\in \mathbf{D}^{N_c}$, is equivalent to $ \Phi^T P_0\Phi< P_0$, $\forall \Phi \in \mathbf{D}^{\kappa_{N_c}}_{N_c-1}$, which implies $V(x_{(m+1)N_c}) < V(x_{mN_c})$, $\forall m \in \mathbb{Z}_{\geq 0}$, $\forall x_{mN_c} \neq 0$ which completes the proof with $N=N_c$.
        
    \textbf{Case 2 ($N_s>N_c$):} Consider a periodic controller of period $N_s$ that we call $\kappa_{N_s}$ which is obtained by the periodic repetition of the sequence $\bar{\kappa}_{N_c}$ of the controller $\kappa_{N_c}$, 
        \begin{equation}
            \bar{\kappa}_{N_s}=\{ \textbf{ } \overbrace{\bar{\kappa}_{N_c}, \bar{\kappa}_{N_c}, \dots, \bar{\kappa}_{N_c}}^{[\sfrac{N_s}{N_c}] \text{ times}}, \kappa_{N_c}(0,\cdot), \kappa_{N_c}(1,\cdot), \dots, \kappa_{N_c}(j,\cdot) \}  ,
        \end{equation}
    where $j=[N_s \mod N_c]-1$. 
    The closed-loop system \eqref{linear_eqn_wierd} that results from using the controller $\kappa_{N_s}$ is then defined using the sets $\mathbf{D}^{\kappa_{N_s}}_{l}$, $\forall l \in \{0,1,\dots, N_s-1 \}$  which are defined using some recursive multiplication of the elements of the sets $\mathbf{D}^{\kappa_{N_c}}_0, \mathbf{D}^{\kappa_{N_c}}_{1},\dots, \mathbf{D}^{\kappa_{N_c}}_{{N_c}-1}$.

        $\forall m \in \{0,1,\dots, \left[ \sfrac{N_s}{N_c} \right] \}$, $k \in \{0,1,\dots,N_c-1 \}$ with $mN_c+k \in \{0, 1, \dots,  N_s-1\}$. 
        Since $[\sfrac{N_s}{N_c}]$ is finite and the sets $\mathbf{D}^{\kappa_{N_c}}_k$, $k \in \{0,1,\dots, N_c-1 \}$ are compact, then sets $\mathbf{D}^{\kappa_{N_s}}_l$, $\forall l \in \{0,1,\dots, N_s-1 \}$ are compact. Therefore, using $\kappa_{N_s}$, we have $ \Phi^TP_0\Phi< P_0$, $\forall \Phi \in \mathbf{D}^{\kappa_{N_s}}_{N_s-1}$, which is equivalent to $V(x_{(m+1)N_s}) < V(x_{mN_s})$, $\forall m \in \mathbb{Z}_{\geq 0}$, $\forall x_{mN_s} \neq 0$, $\forall \textbf{d}^{(m+1)N_s}\in \mathbf{D}^{(m+1)N_s}$. Since $V(x_{(m+1)N_s}) < V(x_{mN_s})$, $\forall m \in \mathbb{Z}_{\geq 0}$, $\forall x_{mN_s} \neq 0$, $\forall \textbf{d}^{(m+1)N_s}\in \mathbf{D}^{(m+1)N_s}$, robust exponential stability of the closed-loop by the controller $\kappa_{N_s}$ is preserved due to part A of the Theorem, which completes the proof with $N=N_s$.
\end{itemize}
\end{proof}
A direct consequence of Theorem \ref{suff_theorem_fixed_K} is the following result which will be important for controller synthesis.
\begin{corollary}\label{corollary_necc_suff_imp}
The system \eqref{linear_eqn_wierd} is robustly exponentially stabilizable by periodic state feedback if and only if there exists a periodic control law of period $N$ and a quadratic FSLF of period $N$ for the resulting closed-loop \eqref{linear_eqn_wierd}, where $N \in \mathbb{Z}_{\geq 1}$.
\end{corollary}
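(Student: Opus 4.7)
The plan is to observe that this corollary is essentially a repackaging of the two halves of Theorem \ref{suff_theorem_fixed_K} into a single biconditional, so the proof should be short and rely entirely on invoking that theorem.

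For the sufficiency direction ($\Leftarrow$), suppose there exist a period $N \in \mathbb{Z}_{\geq 1}$, a periodic control law of period $N$, and a quadratic function $V(x) = x^T P_0 x$ with $P_0 > 0$ which is a FSLF of period $N$ for the resulting closed-loop system of the form \eqref{linear_eqn_wierd}. Then Theorem \ref{suff_theorem_fixed_K}.A applies verbatim to this closed-loop and yields robust exponential stability on $\mathbb{R}^{n_x}$. Hence the system is robustly exponentially stabilizable by periodic state feedback with that very controller.

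For the necessity direction ($\Rightarrow$), suppose the system \eqref{linear_eqn_wierd} is robustly exponentially stabilizable by some periodic state feedback controller. Then I would simply fix any candidate quadratic function $V(x) = x^T P_0 x$ of the form \eqref{global_FSLF_eqn}, e.g.\ $P_0 = \mathbf{I}$, and apply Theorem \ref{suff_theorem_fixed_K}.B to this $V$. The theorem produces a period $N$ and a periodic controller of period $N$ which robustly exponentially stabilizes the system and for which $V$ is a FSLF of period $N$ for the resulting closed-loop. This is exactly the existence statement claimed by the corollary.

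There is no real obstacle here: the entire content (including the non-trivial step of enlarging the period via the repetition construction when the natural stabilizing period $N_c$ is shorter than the required FSLF decay horizon $N_s$) has already been absorbed into Theorem \ref{suff_theorem_fixed_K}.B. I would therefore keep the proof to a couple of lines, merely citing parts A and B of Theorem \ref{suff_theorem_fixed_K} for the two implications, and would not reprove either direction. The only minor thing worth checking when writing it out is that the quantifier structure matches: the corollary asserts the existence of a common $N$ serving simultaneously as the controller period and the FSLF period, which is exactly what Theorem \ref{suff_theorem_fixed_K}.B delivers.
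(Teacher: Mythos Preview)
Your proposal is correct and matches the paper's approach: the paper presents this corollary as ``a direct consequence of Theorem \ref{suff_theorem_fixed_K}'' without giving a separate proof, and your two-line argument invoking parts A and B of that theorem for the two implications is exactly the intended reasoning.
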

A well know result (see \cite{MOLCHANOV198959,Megretski,Lee_stab_LPV} for example) is that asymptotic stability (AS) of compact linear differential and difference inclusions is equivalent to exponential stability (ES). For the sake of completeness, we formally show that this result also holds for systems of the form \eqref{linear_eqn_wierd}.
\begin{corollary}
For the closed-loop system \eqref{linear_eqn_wierd}, asymptotic stability is equivalent to exponential stability.
\end{corollary}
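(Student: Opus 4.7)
The plan is to reduce the statement to the classical equivalence between asymptotic stability (AS) and exponential stability (ES) for compact linear difference inclusions (the result attributed to \cite{MOLCHANOV198959,Megretski,Lee_stab_LPV}) applied to the $N$-step map $x_{(m+1)N}=\Phi_{(m+1)N-1}x_{mN}$ with $\Phi_{(m+1)N-1}\in\mathbf{D}^{\kappa_N}_{N-1}$, and then to bridge this to the full closed-loop system \eqref{linear_eqn_wierd} by the same within-period boundedness argument already used in Part A of Theorem \ref{suff_theorem_fixed_K}.

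The direction ES $\Rightarrow$ AS is immediate from Definition \ref{stability_def}, since the bound $\|x_t\|\le c\lambda^t\|x_0\|$ with $\lambda\in(0,1)$ in particular forces $x_t\to 0$. For the nontrivial direction AS $\Rightarrow$ ES, first I would observe that AS of \eqref{linear_eqn_wierd} on $\mathbb{R}^{n_x}$ implies AS of the sampled $N$-step system $x_{(m+1)N}=\Phi_{(m+1)N-1}x_{mN}$ over the compact set $\mathbf{D}^{\kappa_N}_{N-1}\subset\mathbb{R}^{n_x\times n_x}$: any trajectory of the $N$-step system is a subsequence of a trajectory of the full closed loop, hence inherits convergence to the origin uniformly in the realized disturbance sequence.

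Next I would invoke the classical equivalence: for a compact linear difference inclusion $z_{m+1}\in\{\Phi z_m:\Phi\in\mathcal{K}\}$ with $\mathcal{K}\subset\mathbb{R}^{n_x\times n_x}$ compact, AS is equivalent to ES (see \cite{MOLCHANOV198959,Megretski,Lee_stab_LPV}). Applied with $\mathcal{K}=\mathbf{D}^{\kappa_N}_{N-1}$, this yields constants $c_0\ge 1$ and $\rho\in(0,1)$ such that
\begin{equation*}
\|x_{mN}\|\le c_0\rho^m\|x_0\|,\quad \forall\,m\in\mathbb{Z}_{\ge 0},\;\forall x_0\in\mathbb{R}^{n_x},\;\forall\textbf{d}^{mN}\in\mathbf{D}^{mN}.
\end{equation*}

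Finally I would extend this geometric decay from the sampling instants $mN$ to every time $t=mN+k$ with $k\in\{0,\dots,N-1\}$, exactly as in the proof of Part A of Theorem \ref{suff_theorem_fixed_K}: setting $\vartheta=\max(1,\max_{k}\max_{\Phi\in\mathbf{D}^{\kappa_N}_k}\|\Phi\|)$, which is finite because each $\mathbf{D}^{\kappa_N}_k$ is compact, one obtains $\|x_t\|\le\vartheta\|x_{mN}\|\le c_0\vartheta\rho^{(t-k)/N}\|x_0\|\le c\lambda^t\|x_0\|$ with $c=c_0\vartheta\rho^{-(N-1)/N}\ge 1$ and $\lambda=\rho^{1/N}\in(0,1)$, establishing ES per Definition \ref{stability_def}. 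The only conceptually delicate step is the invocation of the classical AS $\Leftrightarrow$ ES result for compact linear difference inclusions, which I would simply cite rather than reprove; the rest is bookkeeping with the period and norm bounds already developed in the paper.
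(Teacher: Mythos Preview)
Your proposal is correct and follows essentially the same approach as the paper's own proof: both note that ES $\Rightarrow$ AS is trivial, then for the converse pass to the $N$-step system, invoke the classical AS $\Leftrightarrow$ ES equivalence for compact linear difference inclusions on $\mathbf{D}^{\kappa_N}_{N-1}$, and finally lift ES of the $N$-step system to ES of the full closed loop via the within-period bound from the proof of Part A of Theorem \ref{suff_theorem_fixed_K}. Your write-up is in fact more explicit about the constants than the paper's, but the structure is identical.
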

\begin{proof}
    Exponential stability of \eqref{linear_eqn_wierd} implies asymptotic stability by definition. It remains to show the converse. If \eqref{linear_eqn_wierd} is asymptotically stable, then the N-step version of the system, i.e., $x_{(m+1)N}=\Phi_{(m+1)N-1}x_{mN}$, $\Phi_{(m+1)N-1} \in \mathbf{D}^{\kappa_N}_{N-1}$ is asymptotically stable, and hence exponentially stable (because it is a standard compact linear difference inclusion). As was shown in the proof of Theorem \ref{suff_theorem_fixed_K} (See \eqref{e_3} and \eqref{eq_stab_final}), ES of the N-step system implies ES of the original system defined in \eqref{linear_eqn_wierd}, which completes the proof. 
\end{proof}

\section{Proofs}\label{Appendix_proofs}
\textbf{Proof of Lemma \ref{convexity_lemma}:}
\begin{proof}
    The proof of part A has a similarity to proofs from the MPC literature that exploit the vertices of the uncertainty sets (see \cite{Scokaert_Mayne},\cite{dela_pena} for example). For the system $\mathcal{S}$, $(A_t,B_t)=\sum^{n_d}_{i=1}\alpha_{t,i} (\bar{A}_i,\bar{B}_i)$ (see Assumption \ref{Assumption_1}). Using \eqref{system_dynamics} and Assumption \ref{Assumption_1}, at $t=0$, $x_1=\sum^{n_d}_{j=1}\alpha_{0,j}(\bar{A}_j x_0+\bar{B}_j u_0)=\sum^{n_d}_{j=1}\alpha_{0,j}x^j_{1}=\sum^{n_d}_{j=1}\beta^j_1 x^j_{1}$. Applying this recursively and noting how the input is defined in step I2 in Algorithm \ref{algorithm_controller_periodic}, we have
\begin{align}
x_{t+1} & =  A_t x_t+B_t u_t  =\sum^{n_d}_{i=1}\alpha_{t,i}(\bar{A}_i x_t+\bar{B}_i u_t) \\
  & =\sum^{n_d}_{i=1}\alpha_{t,i} (\sum^{n_d^t}_{p(j)=1}\beta^{p(j)}_{t}(\bar{A}_i x^{p(j)}_t+\bar{B}_i  u^{p(j)}_t))\\
  & =\sum^{n^{t+1}_d}_{j=1}\beta^j_{t+1}x^j_{t+1}, \textbf{ } \forall t \in \{0,1,\dots,N-1 \}\label{convex_state}
\end{align}
which proves part A.
For part B, note that the possible trajectories of $\mathcal{S}_{\mathcal{D}}$ are a subset of the possible trajectories of $\mathcal{S}$, which implies that a FSLF for $\mathcal{S}$ is a FSLF for $\mathcal{S}_{\mathcal{D}}$. It remains to prove the converse. At $t=N$, \eqref{convex_state} gives $x_N=\sum^{n^{N}_d}_{j=1}\beta^j_{N}x^j_{N}$, for any $x_0 \in \mathbb{R}^{n_x}$, for any $ \textbf{d}^N \in \mathbf{D}^N$. For $\mathcal{S}_\mathcal{D}$ we have $x^T_0 P_0 x_0 -x^{j^T}_N P_0 x^j_N>0$, $\forall (j,N) \in I_N$, $\forall x_0 \in \mathbb{R}^{n_x} \setminus \{ 0\}$ which by the Schur complement is equivalent to
\begin{equation}\label{equation_shur_iteself}
    \begin{pmatrix}
x^T_0 P_0 x_0  &  x^{j^T}_N\\
x^{j}_N &  P^{-1}_0
\end{pmatrix}> 0, \textbf{ } \forall (j,N)\in I_N.
\end{equation}
Multiplying \eqref{equation_shur_iteself} by $\beta^j_{N}$ for any $\beta^j_{N}$ that satisfy Definition \ref{beta_definition}, and taking the sum over all $j\in \{ 1,2,\dots,n_d^N\}$, we get 
\begin{equation}\label{equation_shur_iteself_2}
    \begin{pmatrix}
x^T_0 P_0 x_0  &  x^{T}_N\\
x_N &  P^{-1}_0
\end{pmatrix}> 0,
\end{equation}
which by the Schur complement is equivalent to 
\begin{equation*}
    x^T_0 P_0 x_0 -x^{T}_N P_0 x_N>0, \textbf{ }\forall x_0 \in \mathbb{R}^{n_x}\setminus \{ 0\}, \textbf{ }\forall \textbf{d}^N \in \mathbf{D}^N.
\end{equation*}
\end{proof}
\textbf{Proof of Lemma \ref{prop_suff_FSLF_K_fixed}:}
\begin{proof}
See also Figure \ref{Tree_first} as a visual support for the proof. Consider first the system $\mathcal{S}_\mathcal{D}$. Using \eqref{P_static_1} recursively along the tree until $t=N-2$ and multiplying the result from the left and the right by any non-zero $x^T_0$ and $x_0$ gives: 
\begin{equation}\label{intermediate_P_static_2}
    x^T_0  P_0  x_0 > x^{j^T}_{N-1}  P^{j}_{N-1}  x^{j}_{N-1}, \forall (j,N-1)\in I_{N-1}.
\end{equation}
If $x^{p(j)}_{N-1} \neq 0$, $(p(j),t) \in  I_{N-1}$, \eqref{P_static_2} is equivalent to $x^{p(j)^T}_{N-1}  P^{p(j)}_{N-1}  x^{p(j)}_{N-1}> x^{j^T}_{N}  P_0  x^{j}_{N}, \textbf{ } \forall (j,N)\in I_{N}$, which when used with \eqref{intermediate_P_static_2} gives $x^T_0  P_0  x_0>x^{j^T}_{N}  P_0  x^{j}_{N}$, $ \forall (j,N)\in I_{N}$. Therefore, the function $V(x)=x^TP_0x$ is a FSLF for $\mathcal{S}_\mathcal{D}$.

Since the inputs of the scenario tree are computed by $u^j_t=Kx^j_t$, and $\sum^{n_d^t}_{j=1}\beta^j_{t} u^j_t=K \sum^{n_d^t}_{j=1}\beta^j_{t} x^j_t$ where $x_t=\sum^{n_d^t}_{j=1}\beta^j_{t} x^j_t$, the control law $\kappa(x_t)=Kx_t$ satisfies Algorithm \ref{algorithm_controller_periodic}. Therefore, due to Lemma \ref{convexity_lemma}.B, the function $V(x)=x^TP_0x$ is a FSLF for the system $\mathcal{S}$.

Let $\bar{A}^K_i=\bar{A}_i+\bar{B}_iK$, $\forall i \in \Gamma$. Define $\mathcal{D}_t^{\kappa_N}=\mathcal{D}^K=\{ \bar{A}^K_i, \textbf{ }i \in \Gamma \}$, and $\mathbf{D}_t^{\kappa_N}=\mathbf{D}^K=Co(\mathcal{D}^K)$ (Compare with section \ref{section_periodic_stab}). Note that the time-invariance of the closed-loop uncertainty set $\mathbf{D}_t^{\kappa_N}$ is due to having a linear time invariant control law. The system $\mathcal{S}$ evolves according to $x_{t+1}=A^K_tx_t$, where $A^K_t\in \mathbf{D}^K$, $\forall t \in \mathbb{Z}_{\geq 0}$.

Robust exponential stability of the closed-loop then holds from \cite{Megretski} because the set $\mathbf{D}^K$ is constant $\forall t \in \mathbb{Z}_{\geq 0}$.  
\end{proof}

\textbf{Proof of Theorem \ref{Theorem_stability_static_suff}:}
\begin{proof}
    Let $G=S^{j}_t+\epsilon^{j}_t$, for some $\epsilon^{j}_t\in \mathbb{R}^{n_x \times n_x}$, $\forall (j,t)\in I_{\llbracket 0,N-1 \rrbracket}$, $\epsilon^1_0=\epsilon_0$. Also define $P^j_t=S^{j^{-1}}_t$. Therefore,
    \begin{equation}\label{G_eqn_1}
        G^T P^j_t G= S^j_t +\epsilon^{j^T}_t P^j_t \epsilon^j_t + \epsilon^{j^T}_t +\epsilon^j_t, 
    \end{equation}
  $\forall (j,t)\in I_{\llbracket 0,N-1 \rrbracket}$ . Since  $P^j_t>0$, $\forall (j,t)\in I_{\llbracket 0,N-1 \rrbracket}$, $\epsilon^{j^T}_t P^j_t \epsilon^j_t\geq 0$. Using this with \eqref{G_eqn_1} and the definition of $\epsilon^{j}_t$ we get
 \begin{equation}\label{G_eqn_3}
      G^T P^j_t G\geq G^T + G- S^j_t. 
  \end{equation}
If \eqref{LMI_1_static} and \eqref{LMI_2_static} hold, then $G^T+G-S^j_t>0$, hence $G$ is of full rank. Using the Schur complement of \eqref{LMI_1_static}, and using the bound derived in \eqref{G_eqn_3}, we get
\begin{equation}\label{shur_eqn_1}
    G^T P^{p(j)}_t G - (\bar{A}_{i^j_{t+1}} G+\bar{B}_{i^j_{t+1}}L)^T P^{j}_{t+1} (\bar{A}_{i^j_{t+1}} G+\bar{B}_{i^j_{t+1}}L)> 0.
\end{equation}
Since $G$ is of full rank, we can let $L=KG$ for some $K\in \mathbb{R}^{n_u \times n_x}$, and then \eqref{shur_eqn_1} is equivalent to
\begin{equation}\label{shur_eqn_2}
     P^{p(j)}_t - (\bar{A}_{i^j_{t+1}}+\bar{B}_{i^j_{t+1}}K)^T P^{j}_{t+1} (\bar{A}_{i^j_{t+1}}+\bar{B}_{i^j_{t+1}}K)> 0,
\end{equation}
$\forall (j,t+1)\in I_{\llbracket 1,N-1\rrbracket}$. Using the same arguments for \eqref{LMI_2_static}, implies 
\begin{equation}\label{shur_eqn_last}
     P^{p(j)}_{N-1} - (\bar{A}_{i^j_{N}}+\bar{B}_{i^j_{N}}K)^T P_{0} (\bar{A}_{i^j_{N}}+\bar{B}_{i^j_{N}}K)> 0,
\end{equation}
$\forall (j,N)\in I_{N}$. Therefore, from \eqref{shur_eqn_2}, \eqref{shur_eqn_last} and  Lemma \ref{prop_suff_FSLF_K_fixed}, the control law $\kappa(x_t)$ with $K$ defined in \eqref{K_eqn} stabilizes $\mathcal{S}$. 
\end{proof}

\textbf{Proof of Lemma \ref{prop_suff_FSLF_K_different} :}
\begin{definition}\label{Definition_scenario}
For the scenario tree of length $N$, we will define a scenario $\mathbf{S}_j$ as the sequence $(\bar{A}_{i^{f(j)}_{t+1}},\bar{B}_{i^{f(j)}_{t+1}})$, $\forall t+1 \in \{ 1,2,\dots,N\}$ that starts at $x_0$ and ends at $x^j_N$, where $j\in \{1,2\dots,n_d^N \}$, and $(\bar{A}_{i^{f()}_{t+1}},\bar{B}_{i^{f(s)}_{t+1}})$ is the uncertainty realization at time step $t$ in scenario $\mathbf{S}_j$.  For example, $\mathbf{S}_2$ in Figure \ref{Tree_first} is the sequence $\{(\bar{A}_1,\bar{B}_1),\textbf{ } (\bar{A}_1,\bar{B}_1), \textbf{ } (\bar{A}_2,\bar{B}_2)\}$.
\end{definition}
Note that the definition of a scenario in this work is different from the definition of a scenario in \cite{SALA_scenario}. 

Define $ \psi^j_{0:1}=\bar{A}_{i^j_{1}}+\bar{B}_{i^j_{1}} K_0$, and
\begin{equation}\label{Psi_equation_for_d}
    \psi^j_{0:t+1}=(\bar{A}_{i^j_{t+1}}+\bar{B}_{i^j_{t+1}} K^{p(j)}_t) \psi^{p(j)}_{0:t}, \textbf{ } \forall (j,t+1) \in I_{\llbracket 1,N \rrbracket}.
\end{equation}

Define $\mathcal{D}^{\kappa_N}_{t}=\{ \psi^j_{0:t+1}, \textbf{ } \forall j \in \{1,2,\dots,n^{t+1}_d \})\}$, $\forall t \in \{0,\dots,N-1 \}$. Note that for any $t \in \{1,2,\dots, N \}$, $x_t=\sum^{n_d^t}_{j=1}\beta^j_{t} x^j_t=\sum^{n_d^t}_{j=1} \beta^j_{t} \psi^j_{0:t} x_0$, $\forall t\in \{ 1,2, \dots, N \}$, and hence (compare with section \ref{section_periodic_stab})
\begin{equation}\label{eqn_x_t_Phi}
    x_t=\Phi_{t-1}x_0, \textbf{ } \forall t\in \{ 1,2, \dots, N \},
\end{equation}
 where $\Phi_{t}\in \mathbf{D}^{\kappa_N}_{t}\subseteq Co(\mathcal{D}^{\kappa_N}_{t})$,  $\forall t \in \{0,1,\dots,N-1 \}$, where by construction $\mathbf{D}^{\kappa_N}_{t}$ are compact $\forall t \in \{0,1,\dots,N-1 \}$.
\begin{remark}\label{remark_Phi_Imp}
    Note that for each $t \in \{0,1,\dots,N-1 \}$ a different set of gains $K^j_t$, $j \in \{1, 2,\dots, n^t_d \}$ is used by the controller. As a result, unlike in case of the static controller in section \ref{subsection_static_A} (see the proof of Lemma \ref{prop_suff_FSLF_K_fixed}), the closed-loop uncertainty sets $\mathbf{D}^{\kappa_N}_t$ are not constant with $t$, and hence the results from \cite{Megretski} do not directly apply. This is one of the reasons for proving Theorem \ref{suff_theorem_fixed_K}.\\
\end{remark}
Note that the existence of a symmetric positive definite matrix $P_0$, such that $\forall x_0 \in \mathbb{R}^{n_x} \setminus \{ 0\}$, $x^{j^T}_N P_0 x^j_N<x^T_0 P_0 x_0$, $\forall (j,N)\in I_{N}$ for some $N \in \mathbb{Z}_{\geq 1}$ is equivalent to:
\begin{equation}\label{inequality_main_periodic}
    P_0- \psi^{j^T}_{0:N} P_0 \psi^j_{0:N}>0, \textbf{ } \forall (j,N)\in I_{N}.
\end{equation}

\begin{proof}

We consider first the system $\mathcal{S}_\mathcal{D}$ and prove the equivalence between \eqref{inequality_main_periodic} and the existence of $P^j_t$ such that \eqref{P_memory_1}, \eqref{P_memory_2} hold. We then use Lemma \ref{convexity_lemma}.B to extend the result to $\mathcal{S}$ and use Corollary \ref{corollary_necc_suff_imp} to complete the proof. 

Consider the system $\mathcal{S}_\mathcal{D}$. The proof of sufficiency (i.e., \eqref{P_memory_1}, \eqref{P_memory_2} $\implies$ \eqref{inequality_main_periodic}) follows the same steps as the proof of Lemma \ref{prop_suff_FSLF_K_fixed} for the case of a fixed gain $K$.

For the necessity, i.e, \eqref{inequality_main_periodic}$\implies$ existence of $P^j_t$ such that \eqref{P_memory_1}, \eqref{P_memory_2} hold (this part of the proof is illustrated below in a simplified manner in Illustration \ref{proof_illustration}), define for each scenario $\mathbf{S}_j$ (see Definition \ref{Definition_scenario}), $N-1$ matrices $M^j_t$, for $t \in \{1,2,\dots,N-1\}$ by backwards recursion as follows: 
\begin{equation}\label{M_matrices_definition}
    M^j_{t}=(\bar{A}_{i^{f(j)}_{t+1}}+\bar{B}_{i^{f(j)}_{t+1}} K^{p(f(j))}_{t})^T M^j_{t+1} (\bar{A}_{i^{f(j)}_{t+1}}+\bar{B}_{i^{f(j)}_{t+1}} K^{p(f(j))}_{t}),
\end{equation}
with the terminal condition $M^j_N=P_0$, $\forall j \in \{1,2,\dots, n_d^N \}$. Since $P_0>0$, therefore $M^j_t \geq 0$. Note that $\psi^{j^T}_{0:N} P_0 \psi^j_{0:N}=(\bar{A}_{i^{f(j)}_{1}}+\bar{B}_{i^{f(j)}_{1}} K^{p(f(j))}_{0})^T M^j_1 (\bar{A}_{i^{f(j)}_{1}}+\bar{B}_{i^{f(j)}_{1}} K^{p(f(j))}_{0})$, $\forall j \in \{1,2,\dots, n_d^N \}$, therefore the condition \eqref{inequality_main_periodic} can be re-written as
\begin{equation}\label{n_d^N_ineq}
    P_0-(\bar{A}_{i^{f(j)}_{1}}+\bar{B}_{i^{f(j)}_{1}} K^{p(f(j))}_{0})^T M^j_1 (\bar{A}_{i^{f(j)}_{1}}+\bar{B}_{i^{f(j)}_{1}} K^{p(f(j))}_{0})>0,
\end{equation}
where $K^{p(f(j))}_{0}=K_0$, $\forall j \in \{1,2,\dots, n_d^N \}$. Note that for each $n_d^{N-1}$ inequalities from \eqref{n_d^N_ineq}, where $j \in  \{ n^{N-1}_d (i-1)+1,\dots, n_d^{N-1}i \}$ and $i \in \Gamma$, $(\bar{A}_{i^{f(j)}_{1}}, \bar{B}_{i^{f(j)}_{1}})$ is the same, i.e., $(\bar{A}_{i^{f(j)}_{1}}, \bar{B}_{i^{f(j)}_{1}})=(\bar{A}_i,\bar{B}_i)$, for some $i \in \Gamma$. Therefore, for each $i \in \Gamma$
\begin{equation}\label{P_0_M_matrices}
   P_0-(\bar{A}_i+\bar{B}_iK_0)^T M^j_1 (\bar{A}_i+\bar{B}_iK_0)>0, 
\end{equation}
$\forall j \in \{ n^{N-1}_d (i-1)+1,\dots, n_d^{N-1}i\} $.  Therefore, applying Lemma \ref{essential_lemma} to \eqref{P_0_M_matrices}, we have that for each $i \in \Gamma$ there exists a matrix $P^i_1$ satisfying $P^i_1>M^j_1$, $j \in  \{ n^{N-1}_d (i-1)+1,\dots, n_d^{N-1}i \}$ such that 
\begin{equation}\label{P_0_M_matrices_2}
   P_0-(\bar{A}_i+\bar{B}_iK_0)^T P^i_1 (\bar{A}_i+\bar{B}_iK_0)>0. 
\end{equation}
Note that \eqref{P_0_M_matrices_2} is the same as \eqref{P_memory_1} for $t=0$. Since $P^i_1>M^j_1$, $ \forall j \in  \{ n^{N-1}_d (i-1)+1,\dots, n_d^{N-1}i \}$, $i \in \Gamma$ and from the definition of $M^j_t$ in \eqref{M_matrices_definition} we have for each $i \in \Gamma$, 
\begin{equation}
     P^i_1-(\bar{A}_{i^{f(j)}_{2}}+\bar{B}_{i^{f(j)}_{2}} K^{p(f(j))}_{1})^T M^j_2 (\bar{A}_{i^{f(j)}_{2}}+\bar{B}_{i^{f(j)}_{2}} K^{p(f(j))}_{1})>0,
\end{equation}
$\forall j \in  \{ n^{N-1}_d (i-1)+1,\dots, n_d^{N-1}i \}$, where $K^{p(f(j))}_1=K^i_1$, $i\in \Gamma$. Note that for each $i \in \Gamma$ this is the same as \eqref{n_d^N_ineq} but with $P^i_1$ instead of $P_0$ used at the root node of an $N-1$ step scenario tree instead of an N-step scenario tree, and with $M^j_2$ instead of $M^j_1$. Hence, applying the same reasoning that was used for $t=0$, \eqref{P_memory_1} holds for $t=1$. By induction until the end of the tree, and since $M^j_N=P_0$, $\forall j \in \{1,2,\dots, n_d^N \}$, \eqref{inequality_main_periodic} implies the existence of matrices $P^j_t$ such that both \eqref{P_memory_1} and \eqref{P_memory_2} hold which completes the proof.

Due to Lemma \ref{convexity_lemma}.B, the function $x^TP_0x$ is a FSLF for $\mathcal{S}$ if and only if it is FSLF for $\mathcal{S}_\mathcal{D}$. The result then holds by comparing \eqref{eqn_x_t_Phi} to \eqref{linear_eqn_wierd} and using Corollary \ref{corollary_necc_suff_imp} which guarantees the existence of such $N \in \mathbb{Z}_{\geq 1}$ if and only if the system is stabilizable by LITPC.

\begin{figure}
\begin{center}
\includegraphics[width=\columnwidth]{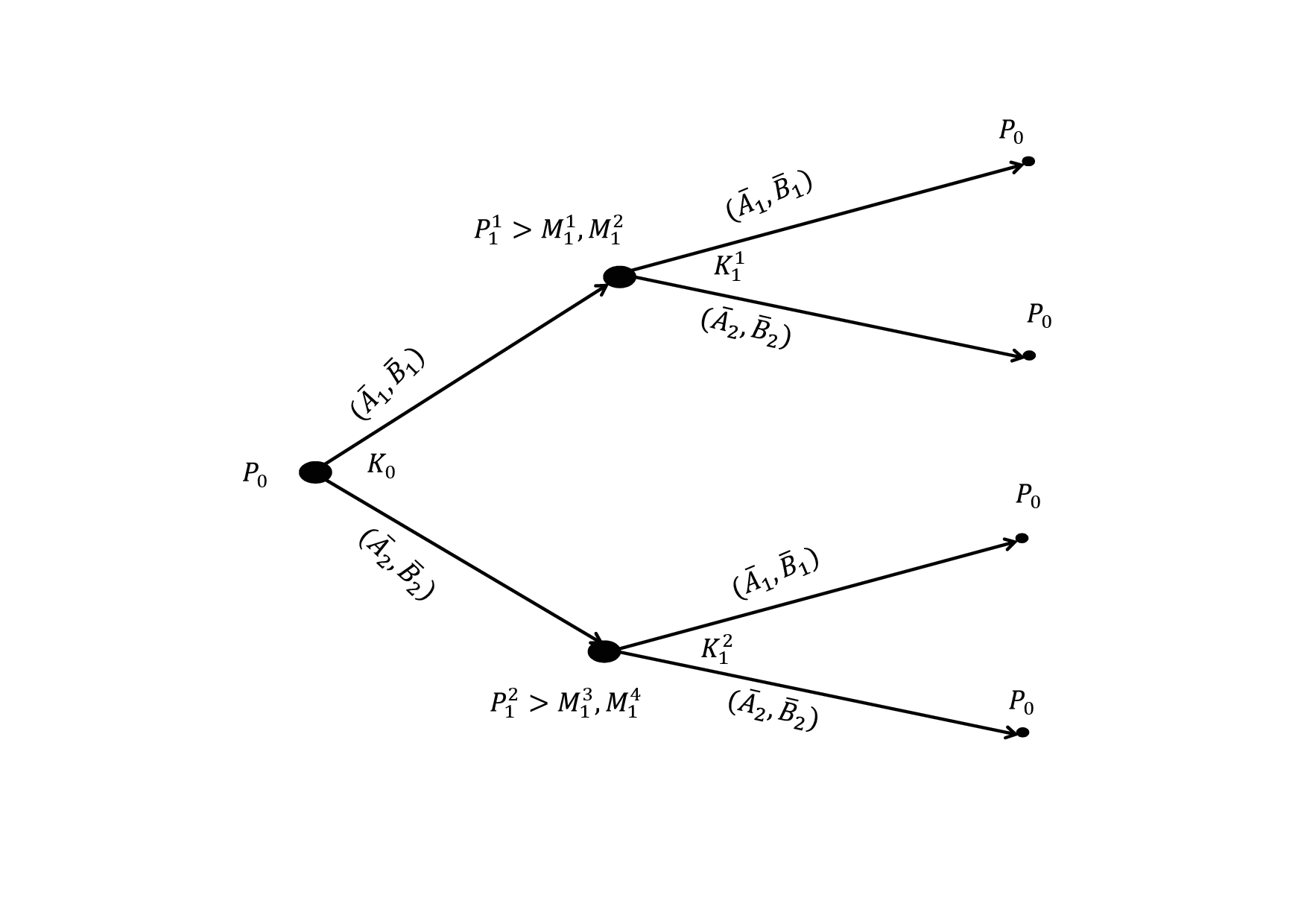} 
\caption{Illustration of the necessity part of the proof of Lemma \ref{prop_suff_FSLF_K_different}. The \textit{matrix replacement lemma} (Lemma \ref{essential_lemma}) is used to replace $M^1_1$, $M^2_1$ by $P^1_1$, and $M^3_1$, $M^4_1$ by $P^2_1$.}
\label{Proof_lemma}  
\end{center} 
\end{figure}
\end{proof}
\begin{Illustration}\label{proof_illustration}
The necessity part of the proof of Lemma \ref{prop_suff_FSLF_K_different} is illustrated by an example for which $N=2$, $\Gamma=\{1,2\}$ (see also  Figure \ref{Proof_lemma}). We want to prove that if \eqref{inequality_main_periodic} holds then there exists matrices $P^j_t$ such that \eqref{P_memory_1} and \eqref{P_memory_2} hold. For the considered case \eqref{inequality_main_periodic} is written as follows:
\begin{align}
    P_0>&(\Bar{A}_1+\Bar{B}_1K_0)^T(\Bar{A}_1+\Bar{B}_1K^1_1)^TP_0(\Bar{A}_1+\Bar{B}_1K^1_1)(\Bar{A}_1+\Bar{B}_1K_0),\label{eq_1^1}\\
   P_0>&(\Bar{A}_1+\Bar{B}_1K_0)^T(\Bar{A}_2+\Bar{B}_2K^1_1)^TP_0(\Bar{A}_2+\Bar{B}_2K^1_1)(\Bar{A}_1+\Bar{B}_1K_0),\label{eq_1^2}\\
    P_0>&(\Bar{A}_2+\Bar{B}_2K_0)^T(\Bar{A}_1+\Bar{B}_1K^2_1)^TP_0(\Bar{A}_1+\Bar{B}_1K^2_1)(\Bar{A}_2+\Bar{B}_2K_0), \label{eq_1^3}\\
   P_0>&(\Bar{A}_2+\Bar{B}_2K_0)^T(\Bar{A}_2+\Bar{B}_2K^2_1)^TP_0(\Bar{A}_2+\Bar{B}_2K^2_1)(\Bar{A}_2+\Bar{B}_2K_0). \label{eq_1^4}
\end{align}

Define
\begin{align}
    M^1_1=(\Bar{A}_1+\Bar{B}_1K^1_1)^TP_0(\Bar{A}_1+\Bar{B}_1K^1_1),\label{Meq_1^1}\\
    M^2_1=(\Bar{A}_2+\Bar{B}_2K^1_1)^TP_0(\Bar{A}_2+\Bar{B}_2K^1_1),\label{Meq_2^1}\\
     M^3_1=(\Bar{A}_1+\Bar{B}_1K^2_1)^TP_0(\Bar{A}_1+\Bar{B}_1K^2_1),\label{Meq_3^1}\\
    M^4_1=(\Bar{A}_2+\Bar{B}_2K^2_1)^TP_0(\Bar{A}_2+\Bar{B}_2K^2_1).\label{Meq_4^1}
\end{align}

We now use \eqref{Meq_1^1} in \eqref{eq_1^1}, \eqref{Meq_2^1} in \eqref{eq_1^2}, \eqref{Meq_3^1} in \eqref{eq_1^3} and \eqref{Meq_4^1} in \eqref{eq_1^4}. As a result \eqref{eq_1^1},\eqref{eq_1^2}, \eqref{eq_1^3}, \eqref{eq_1^4} are equivalent to 

\begin{align}
    P_0>&(\Bar{A}_1+\Bar{B}_1K_0)^TM^1_1(\Bar{A}_1+\Bar{B}_1K_0),\label{New_eq_1^1}\\
   P_0>&(\Bar{A}_1+\Bar{B}_1K_0)^TM^2_1(\Bar{A}_1+\Bar{B}_1K_0),\label{New_eq_1^2}\\
    P_0>&(\Bar{A}_2+\Bar{B}_2K_0)^TM^3_1(\Bar{A}_2+\Bar{B}_2K_0), \label{New_eq_1^3}\\
   P_0>&(\Bar{A}_2+\Bar{B}_2K_0)^TM^4_1(\Bar{A}_2+\Bar{B}_2K_0). \label{New_eq_1^4}
\end{align}
Applying Lemma \ref{essential_lemma}, to \eqref{New_eq_1^1} and \eqref{New_eq_1^2} we conclude that there exists a symmetric matrix $P^1_1$ that satisfies 
\begin{align}
P^1_1>M^1_1=(\Bar{A}_1+\Bar{B}_1K^1_1)^TP_0(\Bar{A}_1+\Bar{B}_1K^1_1),\label{Peq_1^1}\\
    P^1_1>M^2_1=(\Bar{A}_2+\Bar{B}_2K^1_1)^TP_0(\Bar{A}_2+\Bar{B}_2K^1_1),\label{Peq_2^1}
    \end{align}
and
\begin{equation}
    P_0>(\Bar{A}_1+\Bar{B}_1K_0)^TP^1_1(\Bar{A}_1+\Bar{B}_1K_0). \label{last_1}
\end{equation}

Similarly, applying Lemma \ref{essential_lemma}, to \eqref{New_eq_1^3} and \eqref{New_eq_1^4} we conclude that there exists a symmetric matrix $P^2_1$ that satisfies 
\begin{align}
P^2_1>M^3_1=(\Bar{A}_1+\Bar{B}_1K^2_1)^TP_0(\Bar{A}_1+\Bar{B}_1K^2_1),\label{Peq_3^1}\\
    P^2_1>M^4_1=(\Bar{A}_2+\Bar{B}_2K^2_1)^TP_0(\Bar{A}_2+\Bar{B}_2K^2_1),\label{Peq_4^1}
    \end{align}
and
\begin{equation}
    P_0>(\Bar{A}_2+\Bar{B}_2K_0)^TP^2_1(\Bar{A}_2+\Bar{B}_2K_0). \label{last_2}
\end{equation}

This means that if \eqref{eq_1^1}, \eqref{eq_1^2}, \eqref{eq_1^3}, \eqref{eq_1^4} hold then there exist $P^1_1>0$ and $P^2_1>0$ such that \eqref{Peq_1^1},\eqref{Peq_2^1}, \eqref{last_1}, \eqref{Peq_3^1}, \eqref{Peq_4^1}, \eqref{last_2} hold. 

Note that in the considered case \eqref{P_memory_1}, \eqref{P_memory_2} are \eqref{Peq_1^1},\eqref{Peq_2^1}, \eqref{last_1}, \eqref{Peq_3^1}, \eqref{Peq_4^1}, \eqref{last_2}.
\end{Illustration}

\textbf{Proof of Theorem \ref{suff_theorem_differnt_gains}:}
\begin{proof}
Define $P^j_t=S^{j^{-1}}_t$, $\forall (j,t)\in I_{\llbracket 0,N-1\rrbracket}$. Applying the Schur complement, \eqref{LMI_1_diff_K} is equivalent to
\begin{equation}\label{eq_S_diff_K_1}
      S^{p(j)}_{t} -  (\bar{A}_{i^j_{t+1}} S^{p(j)}_t + \bar{B}_{i^j_{t+1}} L^{p(j)}_t )^T P^j_{t+1} (\bar{A}_{i^j_{t+1}} S^{p(j)}_t + \bar{B}_{i^j_{t+1}} L^{p(j)}_t) \geq 0.
\end{equation}

Left and right multiplying \eqref{eq_S_diff_K_1} by $S^{p(j)^{-1}}_{t}$ and using \eqref{K_eqn_diff_K}, we deduce that \eqref{eq_S_diff_K_1} (and hence \eqref{LMI_1_diff_K}) is equivalent to
\begin{equation}\label{eq_S_diff_K_2}
    P^{p(j)}_{t} -  (\bar{A}_{i^j_{t+1}} + \bar{B}_{i^j_{t+1}} K^{p(j)}_t )^T P^j_{t+1}  (\bar{A}_{i^j_{t+1}} + \bar{B}_{i^j_{t+1}} K^{p(j)}_t  ) \geq 0.
\end{equation}
In the same fashion, \eqref{LMI_2_diff_K} is equivalent to 
\begin{equation}\label{eq_S_diff_K_3}
     P^{p(j)}_{N-1}- (\bar{A}_{i^j_{N}} + \bar{B}_{i^j_{N}} K^{p(j)}_{N-1})^T P_0( \bar{A}_{i^j_{N}} + \bar{B}_{i^j_{N}}K^{p(j)}_{N-1})> 0.
\end{equation}

The theorem then holds due to the equivalences \eqref{eq_S_diff_K_2}$ \iff$\eqref{LMI_1_diff_K}, \eqref{eq_S_diff_K_3}$\iff$\eqref{LMI_2_diff_K} and Lemma \ref{prop_suff_FSLF_K_different}. 
\end{proof}

\textbf{Proof of Theorem \ref{Theorem_FPI_Static_K}:}
\begin{proof}
From Theorem \ref{Theorem_stability_static_suff}, \eqref{LMI_1_static} and \eqref{LMI_2_static} imply robust exponential stability of the unconstrained closed-loop system. Therefore, it remains to show that \eqref{LMI_constraints_static_1} and \eqref{LMI_constraints_static_2} imply constraint satisfaction of the closed-loop $\mathcal{S}$ if $x_0 \in \mathcal{P}_0$, and the robust periodic invariance property of $\mathcal{P}_0$.

As shown in the proof of Theorem 2.9 in \cite{Kouvaritakis2016}, any ellipsoid ($\mathcal{P}^j_t$) is contained in a polytope ($\mathbb{X}$) defined by \eqref{dynamics_constrained_2}, if and only if there exists a matrix $H^j_t \in \mathbb{R}^{n_c \times n_c}$ for which \eqref{LMI_constraints_static_2} is satisfied and 
\begin{equation}
    H^j_t-(F+EK)S^j_t(F+EK)^T\geq 0,
\end{equation}
which by the Schur complement is equivalent to 
\begin{equation}\label{H_Lmi_beginning}
    \begin{pmatrix}
H^j_t &  &  F+EK\\
F^T+K^TE^T & &  P^j_t
\end{pmatrix}\geq 0.
\end{equation}
Since $G$ is full rank, multiplying \eqref{H_Lmi_beginning} by $\begin{pmatrix}
    \mathbf{I} & 0 \\ 0 & G^T
\end{pmatrix}$ from the left and by its transpose from he right, \eqref{H_Lmi_beginning} is equivalent to 
\begin{equation}\label{H_LMI_2_c}
    \begin{pmatrix}
H^j_t &   & FG+EL \\
G^T F^T+L^T E^T &  & G^TP^j_tG 
\end{pmatrix}\geq 0,
\end{equation}
That means that \eqref{H_LMI_2_c} with $H^j_t$ satisfying \eqref{LMI_constraints_static_2} is equivalent to $\mathcal{P}^j_t \subset \mathbb{X}$, $\forall (j,t) \in I_{\llbracket 0,N-1 \rrbracket}$.
From \eqref{G_eqn_3}, $ G^TP^j_tG \geq G^T+G-S^j_t$. Therefore, \eqref{LMI_constraints_static_1} is sufficient for \eqref{H_LMI_2_c}. As a result, \eqref{LMI_constraints_static_1}, \eqref{LMI_constraints_static_2} are sufficient for $\mathcal{P}^j_t \subset \mathbb{X}$, $\forall (j,t) \in I_{\llbracket 0,N-1 \rrbracket}$.
Moreover, due to Theorem \ref{Theorem_stability_static_suff}, \eqref{LMI_1_static} and \eqref{LMI_2_static} are sufficient for \eqref{P_static_1} and \eqref{P_static_2}. From  \eqref{P_static_1} and \eqref{P_static_2}, if $x_0 \in \mathcal{P}_0$ then for the system $\mathcal{S}_\mathcal{D}$, $x^j_t \in \mathcal{P}^j_{t}\subset \mathbb{X}$, $\forall (j,t)\in I_{\llbracket 0,N-1\rrbracket}$ and $x^j_N \in \mathcal{P}_0$, $\forall (j,N) \in I_N$. Hence the set $\mathcal{P}_0$ is robust periodic invariant for the system $\mathcal{S}_\mathcal{D}$. For the system $\mathcal{S}$, $x_t \in \mathbb{X}$, $\forall t \in \{0,1,\dots,N-1 \}$ due to Lemma \ref{convexity_lemma}.A. Furthermore due to this and Lemma \ref{convexity_lemma}.B the set $\mathcal{P}_0$ is also robust periodic invariant for the system $\mathcal{S}$, which completes the proof.
\end{proof}

\textbf{Proof of Corollary \ref{FPI_corollary_static}:}
\begin{proof}
Since $ \mathcal{P}^j_t\subset \mathbb{X}$, $\forall (j,t) \in I_{ \llbracket 0,N-1 \rrbracket}$, and $\mathbb{X}$ is convex, therefore, 
\begin{equation*}
\bar{\mathbb{P}}_t \subset \mathbb{X}, \textbf{ } t \in \{0,1,\dots,N-1\}
\end{equation*}
If $x_t \in \bar{\mathbb{P}}_t=Co(\{ \mathcal{P}^j_t \}, \textbf{ }j \in \{1,2,\dots,n_d^{t} \})$ then there exist vectors $x^j_t \in \mathcal{P}^j_t$ and scalars $\beta^j_t\geq0$ such that $\sum^{n_d^t}_{j=1}\beta^j_t=1$ and $x_t=\sum^{n_d^t}_{j=1}\beta^j_t x^j_t$. It follows that 
\begin{align}
x_{t+1}=&(A_t+B_t K)x_t \\=&\sum^{n_d^t}_{j=1}\beta^j_t(A_t+B_t K) x^j_t\\
     =&\sum^{n_d^t}_{j=1}\beta^j_t \sum^{n_d}_{i=1} \alpha_{t,i}(\bar{A}_i+\bar{B}_i K) x^j_t\\=&\sum^{n_d^{t+1}}_{l=1}\beta^l_{t+1}  x^l_{t+1}, 
\end{align}
where $x^l_{t+1}=(\bar{A}_i+\bar{B}_i K) x^j_t$. Due to \eqref{P_static_1}, $x^l_{t+1}\in \mathcal{P}^l_{t+1}$, if $t \in \{0,1,\dots,N-2\}$, and due to \eqref{P_static_2}, $x^l_{t+1}\in \mathcal{P}_0$ if $t=N-1$. Therefore,
\begin{equation*}
x_{t+1} \in Co(\{ \mathcal{P}^l_{t+1} \}, l \in \{1,2,\dots,n_d^{t+1} \})=\bar{\mathbb{P}}_{t+1}
\end{equation*}
if $t \in \{0,1,\dots, N-2 \}$, and $x_{t+1} \in \mathcal{P}_0$ if $t=N-1$, which completes the proof. 
\end{proof}

\textbf{Proof of Theorem \ref{Theorem_FPI_periodic_K}:}
\begin{proof}
From Theorem \ref{suff_theorem_differnt_gains}, \eqref{LMI_1_diff_K} and \eqref{LMI_2_diff_K} imply robust exponential stability of the unconstrained closed-loop system. Therefore, it remains to show that \eqref{LMI_constraints_periodic_1} and \eqref{LMI_constraints_periodic_2} imply the robust periodic invariance of $\mathcal{P}_0$.

Using the Schur complement, \eqref{LMI_constraints_periodic_1} is equivalent to 
\begin{equation}
    H^j_t-(F S^j_t+E L^j_t)P^j_t (F S^j_t+E L^j_t)^T\geq 0,
\end{equation}
which is the same as
\begin{equation}
    H^j_t-(F S^j_t+E L^j_t)P^j_t S^j_t P^j_t (F S^j_t+E L^j_t)^T\geq 0.
\end{equation}
Therefore from the definition of $K^j_t$,
\begin{equation}\label{H_Ks_eqn}
    H^j_t-(F +E K^j_t)S^j_t (F +E K^j_t)^T\geq 0.
\end{equation}
As shown in the proof of Theorem 2.9 in \cite{Kouvaritakis2016}, \eqref{LMI_constraints_periodic_2} and \eqref{H_Ks_eqn} are equivalent to  $\mathcal{P}^j_t= \{x| \textbf{ }x^T P^j_t x \leq 1 \} \subset \mathbb{X}^j_t$, $(j,t) \in I_{\llbracket 0,N-1 \rrbracket}$.
Moreover, due to Theorem \ref{suff_theorem_differnt_gains},  \eqref{LMI_1_diff_K} and \eqref{LMI_2_diff_K} are equivalent to \eqref{P_memory_1} and \eqref{P_memory_2}. From  \eqref{P_memory_1} and \eqref{P_memory_2}, if $x_0 \in \mathcal{P}_0$ then for the system $\mathcal{S}_\mathcal{D}$, $ x^j_t \in \mathcal{P}^j_{t}\subset \mathbb{X}^j_t$, i.e.,
\begin{equation}\label{F+EK_to_be_multiplied}
   (F+EK^j_t)x^j_t \leq \textbf{1},
\end{equation}
$\forall (j,t)\in I_{\llbracket 0,N-1\rrbracket}$ and $x^j_N \in \mathcal{P}_0$, $\forall (j,N) \in I_N$. Hence, $\mathcal{P}_0$ is robust periodic invariant for $\mathcal{S}_\mathcal{D}$. For the system $\mathcal{S}$, 
\begin{align}
     Fx_t+Eu_t=& F\sum^{n_d^t}_{j=1}\beta^j_tx^j_t+E\sum^{n_d^t}_{j=1}\beta^j_tK^j_tx^j_t,\\
               =& \sum^{n_d^t}_{j=1}\beta^j_t(F+EK^j_t)x^j_t, \label{cons_2}
\end{align}
$\forall (j,t)\in I_{\llbracket 0,N-1\rrbracket}$, for all $\beta^j_t$ that satisfy Definition \ref{beta_definition}. Multiplying \eqref{F+EK_to_be_multiplied} by $\beta^j_t$ and summing over all $j$, and from \eqref{cons_2}, we see that $ Fx_t+Eu_t\leq \textbf{1}$, $\forall t \in \{0,1,\dots,N-1 \}$. Furthermore, due to Lemma \ref{convexity_lemma}, $x_N \in \mathcal{P}_0$ and the set $\mathcal{P}_0$ is also robust periodic invariant for the system $\mathcal{S}$.
\end{proof}

\textbf{Proof of Corollary \ref{Corollary_FPI_diff_K}:}
\begin{proof}
	Since $ \mathcal{P}^j_t\subset \mathbb{X}$, $\forall (j,t) \in I_{ \llbracket 0,N-1 \rrbracket}$, and $\mathbb{X}$ is convex, therefore, 
	\begin{equation*}
		\bar{\mathbb{P}}_t \subset \mathbb{X}, \textbf{ } t \in \{0,1,\dots,N-1\}
	\end{equation*}
	If $x_t \in \bar{\mathbb{P}}_t=Co(\{ \mathcal{P}^j_t \}, \textbf{ }j \in \{1,2,\dots,n_d^{t} \})$ then there exist vectors $x^j_t \in \mathcal{P}^j_t$ and scalars $\beta^j_t\geq0$ such that $\sum^{n_d^t}_{j=1}\beta^j_t=1$ and $x_t=\sum^{n_d^t}_{j=1}\beta^j_t x^j_t$. Using the LITPC specified by Theorem \ref{Theorem_FPI_periodic_K}, it follows that 
	\begin{align}
		x_{t+1}=&A_tx_t+B_tu_t\\
		=&A_t\sum^{n_d^t}_{j=1}\beta^j_t x^j_t+B_t \sum^{n_d^t}_{j=1}\beta^j_t K^j_tx^j_t \\=&\sum^{n_d^t}_{j=1}\beta^j_t(A_t+B_t K^j_t) x^j_t\\
		=&\sum^{n_d^t}_{j=1}\beta^j_t \sum^{n_d}_{i=1} \alpha_{t,i}(\bar{A}_i+\bar{B}_i K^j_t) x^j_t\\=&\sum^{n_d^{t+1}}_{l=1}\beta^l_{t+1}  x^l_{t+1}, 
	\end{align}
	where $x^l_{t+1}=(\bar{A}_i+\bar{B}_i K) x^j_t$. Due to \eqref{P_memory_1}, $x^l_{t+1}\in \mathcal{P}^l_{t+1}$, if $t \in \{0,1,\dots,N-2\}$, and due to \eqref{P_memory_2}, $x^l_{t+1}\in \mathcal{P}_0$ if $t=N-1$. Therefore,
	\begin{equation*}
		x_{t+1} \in Co(\{ \mathcal{P}^l_{t+1} \}, l \in \{1,2,\dots,n_d^{t+1} \})=\bar{\mathbb{P}}_{t+1}
	\end{equation*}
	if $t \in \{0,1,\dots, N-2 \}$, and $x_{t+1} \in \mathcal{P}_0$ if $t=N-1$, which completes the proof. 
\end{proof}

\section{Important Auxiliary result}\label{Appendix_sec_Auxilary}
The following Lemma is an auxiliary result that is crucial for proving the necessity part of Lemma \ref{prop_suff_FSLF_K_different}. 
\begin{lemma}\label{essential_lemma}
Let $P_0>0$, $M_i\geq 0$ be $n_x \times n_x$ symmetric matrices $\forall i \in \{1,2,\dots,s \}$, $s \in \mathbb{Z}_{\geq 1}$. If $P_0-A^T M_i A >0 $, $\forall i \in  \{1,2,\dots,s \}$, where $A \in \mathbb{R}^{n_x \times n_x}$, then there exists a symmetric matrix $Q$ such that $Q>M_i$, $\forall i \in  \{1,2,\dots,s \}$ and $P_0-A^T Q A >0$.
\end{lemma}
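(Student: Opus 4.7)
My plan is to construct $Q$ by hand, first in the easy case where $A$ is invertible (where a one-line formula suffices) and then extending to a general $A$ via a rank factorization. The key point in both cases is that the hypothesis, together with the finiteness of the family $\{M_i\}_{i=1}^s$, provides a uniform strict slack $\eta := \min_i \lambda_{\min}(P_0 - A^T M_i A) > 0$ which can be absorbed into the construction.

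When $A$ is invertible, $P_0 - A^T M_i A > 0$ is equivalent to $M_i < R$ where $R := A^{-T} P_0 A^{-1}$. I would set
\begin{equation*}
    Q := R - \varepsilon\, A^{-T} A^{-1}
\end{equation*}
for $\varepsilon > 0$ small. A direct calculation gives $A^T Q A = P_0 - \varepsilon I < P_0$. Finiteness of $\{M_i\}$ yields a common $\eta' > 0$ with $R - M_i \geq \eta' I$ for every $i$, so choosing $\varepsilon < \eta'/\|A^{-1}\|^2$ gives $Q - M_i > 0$ for every $i$, settling the invertible case.

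For general $A$ of rank $r$, I would use a rank factorization $A = FG$ with $F \in \mathbb{R}^{n \times r}$ of full column rank and $G \in \mathbb{R}^{r \times n}$ of full row rank, so that $A^T Q A = G^T (F^T Q F) G$ depends on $Q$ only through $F^T Q F$. A Schur complement of $P_0$ along $\ker G$ converts $A^T Q A < P_0$ into an equivalent $r \times r$ inequality $F^T Q F < \tilde P_0$ for some $\tilde P_0 > 0$, and the hypothesis becomes $F^T M_i F < \tilde P_0$. Applying the invertible-case argument on $\mathbb{R}^r$ produces the desired value of $F^T Q F$. The remaining task is to lift this to a full symmetric $Q$ on $\mathbb{R}^n$ that additionally satisfies $Q > M_i$: in a basis in which $F^T$ has the block form $[F_1^T \ \ 0]$ with $F_1$ invertible, the $(1,1)$-block of $Q$ is prescribed by the reduced solution, the off-diagonal block is set to zero, and the $(2,2)$-block is taken as $\lambda I$ with $\lambda$ sufficiently large. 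A Schur complement on $Q - M_i$, together with the finiteness of the family, yields $Q > M_i$ uniformly in $i$, while $A^T Q A$ is unaffected by the added freedom and so the ellipsoidal constraint survives.

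The main obstacle lies in the singular case: whereas the invertible construction is essentially immediate, for singular $A$ one has to carefully distinguish the components of $Q$ that the map $Q \mapsto A^T Q A$ ``sees'' from those that are free to be enlarged so as to dominate every $M_i$. The block-decomposition/Schur-complement approach above appears to be the cleanest way to exploit this freedom explicitly; a perturbation route (replacing $A$ by $A + \delta I$ and letting $\delta \to 0$) is conceptually attractive but would require additional care to prevent the explicit invertible-case formula from blowing up.
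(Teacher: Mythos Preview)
Your proposal is correct and follows essentially the same route as the paper's proof: both split into the invertible and singular cases, both set $Q=A^{-T}(P_0-\mu I)A^{-1}$ when $A$ is invertible, and in the singular case both use a rank decomposition (the paper uses the SVD $A=U_1\Sigma_1 V_1^T$, you use a generic $A=FG$), reduce the inequality $A^TQA<P_0$ via a Schur complement of $P_0$ along $\ker G$ to an $r\times r$ inequality, fix the $(1,1)$-block of $Q$ accordingly with zero off-diagonal block, and then take the $(2,2)$-block equal to $\lambda I$ (the paper writes $cI$) with $\lambda$ large enough to force $Q>M_i$ via another Schur complement. The only cosmetic difference is that the paper computes $Q^{11}$ explicitly in terms of the SVD data, whereas you phrase this step as ``apply the invertible case on $\mathbb{R}^r$ with $A=I$''; the resulting matrices coincide.
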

\begin{proof}
    The proof is constructive in the sense that two different constructions of $Q$ matrices that satisfy the Lemma, for the case when $A$ is full rank and for the case when $A$ is not full rank result.\\
\textbf{Case 1 ($A$ is full rank):}\\
There exists a sufficiently small constant $\mu>0$ such that $P_0-A^T M_i A -\mu \mathbf{I}>0 $, $\forall i \in  \{1,2,\dots,s \}$. Therefore, $M_i<A^{-T}(P_0-\mu \mathbf{I})A^{-1}$. Hence, the Lemma holds with $Q=A^{-T}(P_0-\mu \mathbf{I})A^{-1}$.\\
\textbf{Case 2 ($A$ is not full rank):}\\
Let the singular value decomposition of $A$ be
\begin{equation*}
A=U\Sigma V^T=\begin{pmatrix}
U_1 & U_2 
\end{pmatrix}     \begin{pmatrix}
\Sigma_1 & 0 \\ 0 & 0
\end{pmatrix}  \begin{pmatrix}
V^T_1 \\ V^T_2
\end{pmatrix}=U_1 \Sigma_1 V^T_1.      
\end{equation*}
Factorize each of the matrices $M_i$ as follows:
\begin{equation}\label{M_factorize}
M_i=U\bar{M}_i U^T=\begin{pmatrix}
U_1 & U_2 
\end{pmatrix}     \begin{pmatrix}
M_i^{11} & M_i^{12} \\ M_i^{12^T}  & M_i^{22} 
\end{pmatrix}  \begin{pmatrix}
U^T_1 \\ U^T_2
\end{pmatrix}.      
\end{equation}
Multiplying the inequality $P_0-A^T M_i A >0$, $\forall i \in \{1,2,\dots,s \}$ by $V^T$ from the left and by $V$ from the right, we have
\begin{equation}
\begin{pmatrix}
V^T_1 P_0 V_1-\Sigma_1 U^T_1 M_i U_1 \Sigma_1 & \textbf{ } & V_1^T P_0 V_2 \\ V_2^T P_0 V_1  & \textbf{ } & V_2^T P_0 V_2 
\end{pmatrix}  > 0,   
\end{equation}
$\forall i \in   \{1,2,\dots,s \}$, which by the Schur complement is equivalent to
\begin{equation}\label{tobe_rear}
    V^T_1 P_0 V_1-\Sigma_1 U^T_1 M_i U_1 \Sigma_1-  V_1^T P_0 V_2 (V_2^T P_0 V_2)^{-1} V_2^T P_0 V_1>0,
\end{equation}
$\forall i \in  \{1,2,\dots,s \}$. Multiplying \eqref{tobe_rear} by $\Sigma_1^{-1}$ from the left and from the right and rearranging, we have 
\begin{equation}\label{rearranged}
 U^T_1 M_i U_1 < \Sigma_1^{-1}(V^T_1 P_0 V_1-  V_1^T P_0 V_2 (V_2^T P_0 V_2)^{-1} V_2^T P_0 V_1)\Sigma_1^{-1},
\end{equation}
$\forall i \in  \{1,2,\dots,s \}$. Since inequality \eqref{rearranged} is strict, therefore there exists a sufficiently small $\mu>0$ such that $\forall i \in  \{1,2,\dots,s \}$ 
\begin{equation}\label{rearranged_2}
 U^T_1 M_i U_1 < \Sigma_1^{-1}(V^T_1 P_0 V_1-  V_1^T P_0 V_2 (V_2^T P_0 V_2)^{-1} V_2^T P_0 V_1 -\mu \mathbf{I})\Sigma_1^{-1}.
\end{equation}
Define $Q$ as 
\begin{equation}\label{Q_factorize}
Q=U\bar{Q} U^T=\begin{pmatrix}
U_1 & U_2 
\end{pmatrix}     \begin{pmatrix}
Q^{11} & 0 \\ 0  & Q^{22} 
\end{pmatrix}  \begin{pmatrix}
U^T_1 \\ U^T_2
\end{pmatrix},
\end{equation}
with
\begin{equation}\label{Q11_eq}
Q^{11}= \Sigma_1^{-1}(V^T_1 P_0 V_1-  V_1^T P_0 V_2 (V_2^T P_0 V_2)^{-1} V_2^T P_0 V_1 -\mu \mathbf{I})\Sigma_1^{-1}    
\end{equation}
Note that from \eqref{M_factorize}, $U^T_1 M_i U_1=M^{11}_i$, and from \eqref{Q_factorize}, $U^T_1 Q U_1=Q^{11}$, and therefore, from \eqref{rearranged_2} and \eqref{Q11_eq} we conclude that  $Q^{11}-M_i^{11}>0$, $\forall i \in  \{1,2,\dots,s \}$. Note that $M^{11}_i\geq 0$, which implies that $Q^{11}>0$. Define $Q^{22}=c \mathbf{I}$, where $c>0$ is chosen sufficiently large such that $\forall i \in  \{1,2,\dots,s \}$
\begin{equation}\label{c_eqn}
    c \mathbf{I} > M_i^{22}+M_i^{12^T}(Q_{11}-M_i^{11})^{-1}M^{12}_i.
\end{equation}
Since $U$ is full rank, $Q^{11}>0$ and $Q^{22}>0$, therefore $Q>0$. Since $V$ is full rank, we have that $P-A^T Q A>0$ if and only if $V^T (P_0-A^T Q A)V>0$. Using the left hand side of the latter 
\begin{equation}\label{Big_equation}
    V^T (P_0-A^T Q A)V= \begin{pmatrix}
V^T_1 P_0 V_1-\Sigma_1 U^T_1 Q U_1 \Sigma_1 &  V_1^T P_0 V_2 \\ V_2^T P_0 V_1  &  V_2^T P_0 V_2
\end{pmatrix}.
\end{equation}
Using \eqref{Q_factorize} and \eqref{Q11_eq} in \eqref{Big_equation}, we have
\begin{equation}\label{Big_equation_2}
   V^T (P_0-A^T Q A)V= \begin{pmatrix}
Z &  V_1^T P_0 V_2 \\ V_2^T P_0 V_1  &  V_2^T P_0 V_2 
\end{pmatrix}, 
\end{equation}
where $Z=V_1^T P_0 V_2 (V_2^T P_0 V_2)^{-1} V_2^T P_0 V_1 +\mu \mathbf{I}$.
The Schur complement of the right hand side of \eqref{Big_equation_2} is equal to $Z- V_1^T P_0 V_2 (V_2^T P_0 V_2)^{-1} V_2^T P_0 V_2 =\mu \mathbf{I}>0$. Therefore, $ V^T (P_0-A^T Q A)V>0$, and as a result $P_0-A^T Q A>0$. It remains to show that this choice of $Q$, satisfies $Q>M_i$, $\forall i \in  \{1,2,\dots,s \}$. Since $U$ is full rank, $Q>M_i$ is equivalent to $\bar{Q}>\bar{M}_i$, $\forall i \in  \{1,2,\dots,s \}$, and 
\begin{equation}\label{Q-M_eqn}
   \bar{Q}-\bar{M}_i= \begin{pmatrix}
Q^{11}-M^{11}_i &  -M^{12}_i \\ -M^{12^T}_i &  Q^{22}-M^{22}_i 
\end{pmatrix}, 
\end{equation}
$\forall i \in  \{1,2,\dots,s \}$. The Schur complement of the right hand side of \eqref{Q-M_eqn} (with respect to the right bottom block) is  $Q^{22}-M^{22}_i-M^{12^T}_i (Q^{11}-M^{11}_i)^{-1}M^{12}_i$. Since $Q^{22}=c \mathbf{I}$ and using \eqref{c_eqn}, we have $Q^{22}-M^{22}_i-M^{12^T}_i (Q^{11}-M^{11}_i)^{-1}M^{12}_i>0$, and hence $\bar{Q}-\bar{M}_i>0$, which implies that $Q>M_i$ $\forall i \in  \{1,2,\dots,s \}$. 
\end{proof}
\end{document}